\newcommand\remove[1]{}
\newcommand{\inter}{\mathrm{inter}}
\newcommand{\rnote}[1]{}
\newcommand{\supp}{\mathrm{supp}}
\newcommand{\dist}{\mathsf{dist}}
\newcommand{\R}{\mathbb{R}}
\newcommand{\len}{\mathsf{len}}
\newcommand{\1}{\mathbf{1}}
\newcommand{\cP}{\mathcal{P}}
\DeclareMathOperator{\diam}{diam}
\newtheorem{theorem}{Theorem}[section]
\newtheorem{lemma}[theorem]{Lemma}
\newtheorem{corollary}[theorem]{Corollary}
\newtheorem{fact}[theorem]{Fact}
\theoremstyle{definition}
\newtheorem{remark}{Remark}[section]
\newcommand{\vcon}{\mathsf{con}}
\newcommand{\val}{\mathsf{val}}
\DeclareMathOperator{\E}{\mathbb{E}}
\newcommand{\trans}{^{\!\top}}
\date{}
\begin{document}

\title{{\bf Eigenvalue bounds, spectral partitioning, \\ and metrical deformations via flows}}

\author{Punyashloka Biswal\footnote{Research supported by NSF CCF-0644037.
Department of Computer Science and Engineering, University of Washington,
    Seattle, WA.}
\and James R. Lee\footnotemark[\value{footnote}]
\and Satish Rao\thanks{Department of Computer Science, University of California, Berkeley, CA.}}

\maketitle

\begin{abstract}
We present a new method
for upper bounding the second eigenvalue of the
Laplacian of graphs.
Our approach uses multi-commodity flows to deform the geometry of the graph;
we embed the resulting metric into Euclidean space
to recover a bound on the Rayleigh quotient.
Using this, we show that every $n$-vertex
graph
of genus $g$ and maximum degree $d$ satisfies
$\lambda_2(G) = O(\frac{(g+1)^3 d}{n})$.
This recovers the $O(\frac{d}{n})$ bound of Spielman and Teng
for planar graphs, and compares to Kelner's bound of $O(\frac{(g+1) \mathrm{poly}(d)}{n})$,
but our proof does not make use of conformal mappings
or circle packings.  We are thus able to extend this to resolve
positively a conjecture of Spielman and Teng, by proving
that $\lambda_2(G) = O(\frac{d h^6 \log h}{n})$ whenever
$G$ is $K_h$-minor free.  This shows, in particular,
that spectral partitioning can be used to recover
$O(\sqrt{n})$-sized separators
%of Alon, Seymour, and Thomas
in bounded degree graphs that exclude a fixed minor.
We extend this further by obtaining nearly optimal bounds on $\lambda_2$
for graphs which exclude small-depth minors in the sense of
Plotkin, Rao, and Smith.
Consequently, we show that spectral algorithms find small separators
in a general class of geometric graphs.

\smallskip

Moreover, while the standard ``sweep'' algorithm applied to the second eigenvector
may fail to find good quotient cuts in graphs of unbounded degree, our approach
produces a vector that works for {\em arbitrary} graphs.  This yields an alternate
proof of the result of Alon, Seymour, and Thomas
that every excluded-minor family of graphs has $O(\sqrt{n})$-node
balanced separators.
\end{abstract}

%\thispagestyle{empty}
%\setcounter{page}{0}
%\newpage

\section{Introduction}

Spectral methods are some of the most successful heuristics for graph
partitioning and its variants.  They have seen a great deal of success
in application domains such as mapping finite element calculations
onto parallel machines \cite{Simon91,Williams90}, solving sparse linear systems \cite{PSW92},
partitioning for domain decomposition \cite{CR87,CS93},  VLSI circuit
design and simulation \cite{CSZ93,HK92,AK95}, and image segmentation \cite{SM00}.
We refer to \cite{spielman-teng} for a discussion of their history and experimental
success.

Recent papers \cite{spielman-teng,GM98,Kelner06} have begun a theoretical analysis
of spectral partitioning for families of graphs on which it seems to work well
in practice.  Such analyses proceed by showing that the second eigenvalue
of the Laplacian of the associated graph is small; from this, one derives
a guarantee on the performance of simple spectral algorithms.
The previous approaches of Spielman and Teng \cite{spielman-teng} and Kelner \cite{Kelner06}
either work for graphs which already possess a natural geometric representation
(e.g. simplicial graphs or $k$-nearest-neighbor graphs), or use conformal mappings
(or their discrete analog, circle packings) to impart a natural geometric representation
to the graph.

Unfortunately, the use of these powerful tools makes it difficult
to extend their analysis to more general families of graphs.  We present
a new method for upper bounding the second eigenvalue of the
Laplacian of graphs.  As evidence of its efficacy, we resolve a conjecture
of Spielman and Teng about the second eigenvalue for excluded-minor families
of graphs.  Furthermore, we show that the ``spectral approach'' can be useful
for understanding the cut structure of graphs, even when spectral partitioning
itself may fail to find those cuts; this occurs mainly in the setting of graphs
with arbitrary degrees, and yields a new proof of the separator
theorem of Alon, Seymour, and Thomas \cite{AST90}.

\remove{
Intro:  Balanced graph separators are important (see, e.g. normalized cuts, Sparsest Cut, ...).
They have many applications (e.g. in divide-and-conquer, but also e.g. recently in property testing).
Finding good separators is a basic technique in data analysis.
There are classical existential results (e.g. Lipton-Tarjan, Alon-Seymour-Thomas,
and some result for 3D graphs) and approximation algorithms (e.g. ARV).
But one of the techniques of choice in practice is spectral partitioning,
and its analysis relies on bounding the second eigenvalue of the Laplacian.
Now, Spielman and Teng sought to rectify this by proving eigenvalue bounds
for planar graphs, and Kelner extended this to bounded genus.
This mirrors some of the eigenvalue work done on Riemannian surfaces.
The basic tool in these settings is the existence of conformal mappings,
but these are very special for surfaces which can be endowed with
a conformal structure.  E.g. Kelner cannot use circle packings, and needs
to approximate the continuous setting in a rather delicate way.
As was pointed out by Spielman and Teng, to resolve more delicate questions
(like H-minor-free graphs and shallow depth excluded minor graphs),
it seems that a more combinatorial technique is needed.

Here we present such a technique.  We are able to (basically) prove some spectral
results of ST and Kelner and also resolve the ST conjecture on H-minor-free graphs
(which are vast extensions studied a lot by Robinson and Seymour; see the survey
of Lovasz).  Our techniques also extend to shallow depth excluded minors.
Finally, we are able to get something without any degree assumptions at all,
using the spectral approach combined with FHL rounding.

Our approach:  We use flows to deform the metric of a graph,
and then embed that metric into the real line in order to get
an eigenvector that can bound the Rayleigh quotient.
In order for the flows to produce a good metric,
it must be that they incur a lot of $\ell_2$-congestion.
We prove that flows on planar graphs, $H$-minor-free graphs,
and shallow excluded minor graphs all have a lot of $\ell_2$-congestion.
This is accomplished by some elementary arguments inspired
by crossing number arguments for graph drawings.
}

\subsection{Previous results and our work}

Let $G=(V,E)$ be an $n$-vertex graph with maximum degree $d$.
Spielman and Teng \cite{spielman-teng} show that if $G$ is a planar graph,
then $\lambda_2 = O(d/n)$, where $\lambda_2$ is the second eigenvalue
of the Laplacian of $G$ (see Section \ref{sec:specprelim} for background
on eigenvalues and spectral partitioning).  It follows that a very simple
spectral ``sweep'' algorithm finds a quotient cut of ratio $O(d/\sqrt{n})$
in such graphs.  For $d=O(1)$, this shows that spectral methods can
recover the cuts guaranteed by the planar separator theorem of Lipton and Tarjan
\cite{LT79}; in particular, recursive bisection yields a balanced separator
which cuts only $O(\sqrt{n})$ edges.
The proof of Spielman and Teng is based on the Koebe-Andreev-Thurston circle packing theorem
for planar graphs, which provides an initial geometric representation of the graph.
Indeed, in his survey \cite{LovGRep07}, Lov\'asz notes that there is no known
method for proving the eigenvalue bound without circle packings.

In \cite{Kelner06}, Kelner proves that if $G$ is a graph of genus $g$,
then $\lambda_2 = O(\frac{g+1}{n}) \mathrm{poly}(d)$.  Again for $d=O(1)$,
this shows that spectral algorithms yield balanced separators of size $O(\sqrt{(g+1)n})$,
matching the bound of Gilbert, Hutchinson, and Tarjan \cite{GHT84}.  Kelner's proof
is not based on circle packings for genus $g$ graphs, but instead on the
uniformization theorem---the fact that every genus $g$ surface admits a certain
kind of conformal mapping onto the unit sphere.  (It turns out that the discrete
theory is not as strong in the case of genus $g$ circle packings.)
Kelner
must embed his graph on a surface, and then recursively subdivide the graph
(keeping careful track of $\lambda_2$), until it approximates the surface well
enough.

\medskip
\noindent
{\bf Excluded-minor graphs.}
The preceding techniques are highly specialized to graphs that can be endowed with some
conformal structure, and thus Spielman and Teng asked \cite{spielman-teng}
whether there is a
more combinatorial approach to bounding $\lambda_2$.  In particular, they
conjectured a significant generalization of the preceding results:
If $G$ excludes $K_h$ (the complete graph on $h$ vertices) as a minor,
then one should have $\lambda_2 = O(\frac{\mathrm{poly}(h)d}{n})$.  See Section
\ref{sec:minorprelim} for a brief discussion of graph minors.

Our new methods for bounding $\lambda_2$ are able to resolve this conjecture;
in particular, we prove that $\lambda_2 = O(\frac{h^6 (\log h) d}{n})$.
As a special case, this provides eigenvalue bounds in the planar and bounded
genus cases which bypass the need for circle packings or conformal mappings.
As stated previously, these bounds show that for $d,h=O(1)$, spectral
algorithms are able to recover the $O(\sqrt{n})$-sized balanced separators
of Alon, Seymour, and Thomas \cite{AST90} in $K_h$-minor-free graphs.

\medskip

\noindent
{\bf Geometric graphs.}
Spielman and Teng also bound $\lambda_2$ for geometric graphs,
e.g. well-shaped meshes and $k$-nearest-neighbor graphs in
a fixed number of dimensions.  Although these graphs do not exclude a $K_h$-minor
for any $h$ (indeed, even the $n \times n \times 2$ grid contains
arbitrarily large $K_h$ minors as $n \to \infty$), these graphs
do exclude minors {\em at small depth,} in the sense of Plotkin, Rao, and Smith \cite{PRS94}.
(Essentially, the connected components witnessing the minor must be of bounded diameter;
see Section \ref{sec:minorprelim}.)
Spielman and Teng \cite{spielman-teng}
ask whether one can prove spectral bounds for such graphs.

In Section \ref{sec:geographs},
we prove nearly-optimal bounds on $\lambda_2$ for graphs which exclude
small-depth minors.  This shows that
spectral algorithms can find small balanced separators for a large family of low-dimensional
geometric graphs.

\medskip
\noindent
{\bf Graphs with unbounded degrees.}
Finally, we consider separators in {\em arbitrary graphs,} i.e. without
imposing a bound on the maximum degree.  Very small separators can
still exist in such graphs, if we consider {\em node separators}
instead of the edge variety.  For example, Alon, Seymour, and Thomas \cite{AST90}
(following \cite{LT79,GHT84}) show that every $K_h$-minor-free graph
has a subset of nodes of size of $O(h^{3/2} \sqrt{n})$ whose removal breaks the graph
into pieces of size at most $n/3$.

The Laplacian of a graph is very sensitive
to the maximum degree, and thus one does not expect spectral partitioning
to do as well in this setting.
Nevertheless, we show that the ``spectral ideology''
can still be used to obtain separators in general.
We show that if one runs the ``sweep'' algorithm,
not on the second eigenvector of the Laplacian, but on the
vector we produce to bound the Rayleigh quotient,
then one recovers small separators regardless of the degree.
In particular, our approach is able to locate balanced node separators
of size $O(h^3\sqrt{\log h} \sqrt{n})$ in $K_h$-minor-free graphs;
this gives a new proof of the Alon-Seymour-Thomas result
(with a slightly worse dependence on $h$).

\medskip
\noindent
{\bf Overview of our approach.}
At a high level (discussed in more detail in Section \ref{sec:outline}), our approach
to bounding $\lambda_2$ proceeds as follows.  Given a graph $G$, we compute an
all-pairs multicommodity flow in $G$ which minimizes the $\ell_2$-norm
of the congestion at the vertices.  This flow at optimality is used
to deform the geometry of $G$ by weighting the vertices according
to their congestion.  We then embed the resulting vertex-weighted shortest path metric
into the line to recover a bound on the Rayleigh quotient, and hence on $\lambda_2$.
The remaining technical step is to get control on the structure of an optimal flow
in the various graph families that we care about.

We remark that our bounds are optimal, except for the slack that comes from the
embedding step.  E.g., for genus $g$ graphs we actually
achieve the bound $\lambda_2 = O(\frac{d g}{n}) \left(\min\{\log n, g\}\right)^2$,
where we expect that the latter factor can be removed.  For instance, our approach
might give a path toward improving the Alon-Seymour-Thomas separator result
to its optimal dependency on $h$.

\subsection{Preliminaries}
\label{sec:prelims}

Given two expressions $E$ and $E'$ (possibly depending on a number of parameters), we write $E = O(E')$ to mean that $E \leq C E'$
for some constant $C > 0$ which is independent of the parameters. Similarly, $E = \Omega(E')$ implies that $E \geq C E'$ for some $C > 0$.
We also write $E \lesssim E'$ as a synonym for $E = O(E')$.  Finally, we write $E \approx E'$ to denote
the conjunction of $E \lesssim E'$ and $E \gtrsim E'$.

All graphs in the paper are assumed to be undirected.  $K_n$ denotes the complete graph on $n$
vertices, and $K_{m,n}$ denotes the complete $m \times n$ bipartite graph.
For a graph $G$, we use $V(G)$ and $E(G)$ to denote the vertex and edge
sets of $G$, respectively.

\subsubsection{Eigenvalues and spectral partitioning}
\label{sec:specprelim}

Let $G=(V,E)$ be a connected graph with $n = |V|$.
The {\em adjacency matrix} $A_G$ of $G$ is an $n \times n$ matrix
with $(A_G)_{i,j} = 1$ if $(i,j) \in E$ and $(A_G)_{i,j} = 0$ otherwise.
The {\em degree matrix} of $G$ is defined by $(D_G)_{i,i} = \deg(i)$
for all $i \in V$, and $(D_G)_{i,j} = 0$ for $i \neq j$.
Finally, we define the {\em Laplacian of $G$} by
$$
L_G = D_G - A_G.
$$
It is easy to see that $L_G$ is a real, symmetric, positive semi-definite;
if we order the eigenvalues of $L_G$ as $\lambda_1 \leq \lambda_2 \leq \cdots \leq \lambda_n$,
and let $v_1, v_2, \ldots, v_n$ be a corresponding orthonormal basis of
eigenvectors, one checks that $\lambda_1 = 0$ and $v_1 = \frac{1}{\sqrt{n}} (1,1,\ldots,1)$.
A vast array of work in spectral graph theory relates the eigenvalues of $L_G$
to the combinatorial properties of $G$ (see, e.g. \cite{Chung97}).  In the present
work, we will be most interested in the connections between the second eigenvalue $\lambda_2$,
and the existence of small quotient cuts in $G$, following \cite{Cheeger70,AM85}.
We will write $\lambda_2(G)$ for $\lambda_2$ when $G$ is not clear from context.
%Sometimes we refer to $\lambda_2(G)$ as the {\em Fieder value of $G$}.

Given a subset $S \subseteq V$, we define the {\em ratio} of the cut $(S,\bar S)$ by
\begin{equation}\label{eq:ratio}
\Phi_G(S) = \frac{|E(S,\bar S)|}{\min(|S|,|\bar S|)},
\end{equation}
where $E(S,\bar S)$ is the set of edges with exactly one endpoint in $S$.
We also define $\Phi^*(G) = \min_{S \subseteq V} \Phi_G(S)$.
Finally, we say that $S \subseteq V$ is a {\em $\delta$-separator}
if $\min(|S|,|\bar S|) \geq \delta n$.

Spectral partitioning uses the second eigenvector of $G$ to attempt to find
a cut with small ratio.  The most basic spectral partitioning algorithm uses the following
simple ``sweep.''
\begin{enumerate}
\item Compute the second eigenvector $z \in \mathbb R^n$ of $L_G$.
\item Order the vertices $V = \{1,2,\ldots,n\}$ so that $z_1 \leq z_2 \leq \ldots \leq z_n$
and output the cut of the form $\{1,  \ldots, i\}, \{{i+1},\ldots,n\}$
which has the smallest ratio.
\end{enumerate}

The next result is well-known and follows from the proof of the
Alon-Milman-Cheeger inequality for graphs \cite{Cheeger70,AM85};
see, e.g. \cite{spielman-teng,Mih89}.

\begin{theorem}
For any $v \in \mathbb R^n$ with $\sum_{i=1}^n v_i = 0$, the sweep algorithm
returns a cut $S \subseteq V$ with
$$
\Phi_G(S) \leq \sqrt{2d_{\max}\frac{\langle v, L_G v\rangle}{\|v\|^2}},
$$
where $d_{\max}$ is the maximum degree in $G$.
\end{theorem}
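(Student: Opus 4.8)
The plan is to recover this from the classical Cheeger-style rounding argument, run on the given vector $v$ in place of the true second eigenvector. First I would exploit $L_G \1 = 0$: translating $v$ by a multiple of $\1$ leaves $\langle v, L_G v\rangle$ unchanged, while $\|v - c\1\|$ is minimized at $c = \bar v$, which is $0$ by hypothesis. So rather than translate by the mean I would translate by a \emph{median} coordinate value $v_m$, setting $w = v - v_m \1$; this can only increase $\|w\|^2$, hence only decrease the Rayleigh quotient $R(w) = \langle w, L_G w\rangle/\|w\|^2 \le R(v)$, and now at most $n/2$ coordinates of $w$ are positive and at most $n/2$ are negative.

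Second, I would split $w = w^+ - w^-$ into its positive and negative parts. The edgewise inequality $(w^+_i - w^+_j)^2 + (w^-_i - w^-_j)^2 \le (w_i - w_j)^2$ (immediate by cases on the signs of $w_i$ and $w_j$) gives $\langle w^+, L_G w^+\rangle + \langle w^-, L_G w^-\rangle \le \langle w, L_G w\rangle$, while $\|w^+\|^2 + \|w^-\|^2 = \|w\|^2$. A weighted-average argument then produces one of the two, say $x \in \{w^+, w^-\}$, with $R(x) \le R(w)$; note $x \ge 0$ and $|\supp(x)| \le n/2$.

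The core step is then: a nonnegative vector $x$ with $|\supp(x)| \le n/2$ admits a threshold cut $S = \{i : x_i^2 > \theta\}$, $\theta \ge 0$, with $\Phi_G(S) \le \sqrt{2 d_{\max} R(x)}$. Here I would pick $\theta$ uniformly at random in $[0, \max_i x_i^2]$; then $\E|E(S,\bar S)| = (\max_i x_i^2)^{-1}\sum_{(i,j) \in E}|x_i^2 - x_j^2|$ and $\E|S| = (\max_i x_i^2)^{-1}\|x\|^2$, so some choice of $\theta$ yields $|E(S,\bar S)|/|S| \le \sum_{(i,j)\in E}|x_i^2 - x_j^2| / \|x\|^2$. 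Writing $|x_i^2 - x_j^2| = |x_i - x_j|\cdot|x_i + x_j|$ and applying Cauchy--Schwarz, the numerator is at most $\sqrt{\langle x, L_G x\rangle}\cdot\sqrt{\sum_{(i,j)\in E}(x_i+x_j)^2} \le \sqrt{\langle x, L_G x\rangle}\cdot\sqrt{2 d_{\max}\|x\|^2}$, using $\sum_{(i,j)\in E}(x_i+x_j)^2 \le 2\sum_i \deg(i)\, x_i^2 \le 2 d_{\max}\|x\|^2$; this is exactly the claimed bound. Since $S \subseteq \supp(x)$ has at most $n/2$ vertices, $\min(|S|,|\bar S|) = |S|$, so the quotient above is genuinely $\Phi_G(S)$. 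Finally, because $w$ is a monotone function of $v$, every threshold set of $w^+$ or $w^-$ is a prefix or suffix of the $v$-sorted order, and $\Phi_G$ is complementation-invariant, so the sweep over prefix cuts inspects one at least this good; chaining the two reductions finishes the proof.

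I expect the only delicate points to be bookkeeping: getting the inequality directions right in the two reductions (the median shift must not hurt $R$, and the $w^\pm$ split must lose nothing in the quadratic form beyond what Cauchy--Schwarz later absorbs), and justifying the passage from an expectation over random thresholds to a single good threshold cut whose smaller side is exactly its support. All of this is standard, but the support counts must be tracked carefully to land the ``$\min$'' in $\Phi_G$ on the right side.
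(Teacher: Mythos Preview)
The paper does not actually prove this theorem; it simply cites it as a well-known consequence of the Alon--Milman/Cheeger argument (with references to \cite{Cheeger70,AM85,spielman-teng,Mih89}). Your proposal is precisely that standard sweep-and-Cauchy--Schwarz proof, and it is correct in all the points you flag (the median shift only lowers the Rayleigh quotient because $\sum_i v_i=0$ minimizes $\|v-c\1\|$, the $w^\pm$ split loses nothing in the quadratic form, the expectation-to-single-threshold passage is the usual $\E[A]\le r\,\E[B]\Rightarrow\exists\,\theta:A\le rB$, and the support bookkeeping lands $\min(|S|,|\bar S|)=|S|$), so there is nothing to compare against beyond noting that your argument is exactly what the cited references contain.
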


Furthermore, one can use recursive quotient cuts
to find small $\delta$-separators in $G$ \cite{LT79}.

\begin{lemma}
Let $G = (V,E)$, and suppose that
for any subgraph $H$ of $G$, we can find
a cut of ratio at most $\phi$.
Then a simple recursive quotient cut algorithm
returns a $\frac13$-separator $S \subseteq V$
with $|E(S,\bar S)| \leq O(\phi n)$.
\end{lemma}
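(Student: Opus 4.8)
\quad The plan is to analyze the natural greedy ``peeling'' procedure, which is the standard form of recursive quotient-cut partitioning. Maintain a current large piece $B \subseteq V$, initially $B = V$, together with the accumulated set $A = V \setminus B$. While $|B| > \frac{2}{3}n$, apply the hypothesized subroutine to the induced subgraph $G[B]$ (which is a subgraph of $G$) to obtain a cut $(S, B \setminus S)$ with $\Phi_{G[B]}(S) \le \phi$; relabeling the two sides if necessary, we may assume $|S| \le |B \setminus S|$, so that $|S| \le |B|/2$. Then set $A \leftarrow A \cup S$, $B \leftarrow B \setminus S$, and repeat. Since $\Phi_{G[B]}$ is defined only on nonempty proper subsets of $B$, each iteration removes at least one vertex, so $|B|$ strictly decreases and the procedure halts; we output the final set $A$. (We may assume $n$ exceeds an absolute constant, else the statement is degenerate.)

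First I would verify that $A$ is a $\frac13$-separator. Each peeled piece $S_i$ is the smaller side of a cut of a piece $B_i$ with $|B_i| > \frac23 n$, so $|S_i| \le |B_i|/2$ and hence $|B_{i+1}| = |B_i| - |S_i| \ge |B_i|/2 > \frac13 n$; thus the large piece never drops below $\frac13 n$. The loop stops the first time $|B| \le \frac23 n$, so at termination $\frac13 n < |B| \le \frac23 n$, and therefore $\frac13 n \le |A| \le \frac23 n$. In particular $\min(|A|,|\bar A|) \ge \frac13 n$, as required.

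Next I would bound $|E(A, \bar A)|$ by telescoping over the iterations. The key point is that moving a piece $S$ out of $B$ and into $A$ can increase the current cut $E_G(A,B)$ by at most $|E_G(S, B\setminus S)|$: indeed $E_G(A \cup S, B\setminus S)$ is the disjoint union of $E_G(A, B\setminus S)$ and $E_G(S, B\setminus S)$, while $E_G(A,B)$ is the disjoint union of $E_G(A, B\setminus S)$ and $E_G(A,S)$, so the net change equals $|E_G(S,B\setminus S)| - |E_G(A,S)| \le |E_G(S, B\setminus S)|$. Since all relevant endpoints lie in $B$, we have $E_G(S,B\setminus S) = E_{G[B]}(S, B\setminus S)$, and by the choice of cut $|E_{G[B]}(S,B\setminus S)| \le \phi \min(|S|, |B\setminus S|) = \phi |S|$. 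Starting from an empty cut, summing over all iterations, and using that the peeled pieces partition the final $A$, we obtain
\[
|E(A,\bar A)| \;\le\; \phi \sum_i |S_i| \;=\; \phi\,|A| \;\le\; \frac{2}{3}\,\phi\, n \;=\; O(\phi n).
\]

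The argument is essentially routine, and no property of $G$ beyond the stated hypothesis is used; termination is immediate from the strict decrease of $|B|$. The only point requiring a little care is that the \emph{same} choice --- always peeling the smaller side --- is what simultaneously keeps the large piece above $\frac13 n$ (so the output is balanced) and keeps each iteration's contribution to the global cut at $\phi|S_i|$ with $\sum_i |S_i| \le \frac23 n$ (so the total number of cut edges is $O(\phi n)$).
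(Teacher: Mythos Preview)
Your argument is correct and is exactly the standard recursive peeling proof of this folklore lemma. Note that the paper does not actually supply a proof of this statement; it is stated as a known fact with a citation to Lipton--Tarjan \cite{LT79}, so there is nothing to compare against beyond observing that your write-up matches the classical argument.
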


\subsubsection{Graph minors}
\label{sec:minorprelim}

If $H$ and $G$ are two graphs, one says that
$H$ is a {\em minor} of $G$ if $H$ can be obtained from $G$
by a sequence of zero or more of the three operations:
edge deletion, vertex deletion, and edge contraction.
$G$ is said to be {\em $H$-minor-free} if it $H$
is not a minor of $G$.  We refer to \cite{Lo06,DiestelBook}
for a more extensive discussion of the vast graph minor theory.

Equivalently, $H$ is a minor of $G$
if there exists a collection of disjoint sets $\{ A_v \}_{v \in V(H)}$
with $A_v \subseteq V(G)$ for each $v \in V(H)$,
such that each $A_v$ is connected in $G$, and there is an
edge between $A_u$ and $A_v$ whenever $(u,v) \in E(H)$.

Following \cite{PRS94}, we see that {\em $H$ is a minor of $G$ at depth $L$}
if, additionally, there exists such a collection of sets with $\diam(A_v) \leq L$
for each $v \in V(H)$, where $\diam(A_v) = \max_{i,j \in A_v} \dist(i,j)$
and $\dist$ is the shortest-path distance in $G$.

\subsection{Outline}
\label{sec:outline}

We now explain an outline of the paper, as well as a sketch of our approach.
Let $G=(V,E)$ be a connected, undirected graph with $n=|V|$. Using the variational
characterization of the eigenvalues of $L_G$ (see \eqref{eq:variation}), we can write
$$\frac{\lambda_2(G)}{2n} =
\min_{f :V \to \mathbb R} \frac{\sum_{uv \in E} |f(u)-f(v)|^2}{\sum_{u,v \in V} |f(u)-f(v)|^2}
\geq \min_{d : V \times V \to \mathbb R_+}
\frac{\sum_{uv \in E} d(u,v)^2}{\sum_{u,v \in V} d(u,v)^2},
$$
where the latter minimum is over all {\em semi-metrics} on $V$,
i.e. all symmetric distance functions that satisfy the triangle inequality
and $d(u,u) = 0$ for $u \in V$.

Of course we are trying to prove {\em upper bounds} on $\lambda_2(G)$, but it is not
difficult to see that by Bourgain's theorem \cite{Bourgain85} on the embeddability
of finite metric spaces in Hilbert space, the second minimization is within an $O(\log n)^2$
factor of the first.  In Section \ref{sec:embeddings}, we discuss more refined notions
of ``average distortion'' embeddings which are able to avoid the $O(\log n)^2$ loss
for many families of graphs; in particular, we use the structure theorem of \cite{KPR93}
to achieve an $O(1)$ loss for excluded-minor families.

%\medskip
%\noindent
%{\bf Deforming the geometry of $G$.}
Thus we now focus on finding a semi-metric $d$
for which
\begin{equation}\label{eq:Rd}
\mathcal R_G(d) = \frac{\sum_{uv \in E} d(u,v)^2}{\sum_{u,v \in V}
d(u,v)^2}
\end{equation}
is small.  It is easy to see that for any graph $G$, the minimum
will be achieved by a shortest-path metric, and thus finding
such a $d$ corresponds to deforming the geometry of $G$ by
shrinking and expanding its edges. In actuality,
it is far more convenient to work with deformations
that involve {\em vertex weights,} but we use edge weights
here to keep the presentation simple.  Thus in the body of the paper,
all the edge notions expressed below are replaced by their vertex counterparts.

Unfortunately, $\min_d \mathcal R_G(d)$ is {\em not} a convex
optimization problem, so we replace it by the convexified
objective function
$$
\mathcal C_G(d) = \frac{\sqrt{\sum_{uv \in E} d(u,v)^2}}{\sum_{u,v \in V} d(u,v)}.
$$
In the proof of Theorem \ref{thm:eigenbound}, we connect $\mathcal R_G(d)$ and
$\mathcal C_G(d)$ via Cauchy-Schwarz; the structure of the extremal metrics
ensure that we do not lose too much in this step.

In Section \ref{sec:flows}, we show that minimizing $\mathcal C_G(d)$ {\em is} a convex
optimization problem, and thus we are able to pass to a dual formulation,
which is to send an all-pairs multicommodity flow in $G$, while
minimizing the $\ell_2$ norm of the congestion of the edges.
In fact, examination of the Lagrangian multipliers in
the proof of Theorem \ref{thm:duality} reveals that the optimal metric $d$
is obtained by weighting an edge proportional to its congestion in
an optimal flow.  Thus, by strong duality, in order to prove an upper bound on \eqref{eq:Rd}
for some graph $G$, it suffices to show that every all-pairs multicommodity
flow in $G$ incurs a lot of congestion in the $\ell_2$ sense.

We address this in Section \ref{sec:2-congest}.  First, we randomly round a fractional
flow to an integral flow, with only a mild blowup in the $\ell_2$-congestion.
In the case of planar (and bounded genus) graphs, we observe that an all-pairs
integral flow in $G$ induces a drawing of the complete graph in the plane.
By relating the $\ell_2$-congestion of the flow to the number of crossings
in this drawing, and using known results on graph drawings, we are able
to conclude that $\ell_2$-congestion must be large, finishing
our quest for upper bounds on the eigenvalues in such graphs (the
entire argument is brought together in Section \ref{sec:spectral}).

Extending this to $H$-minor-free graphs is more difficult, since there is no
natural notion of ``drawing'' to work with.  Instead, we introduce a generalized
``intersection number'' for flows with arbitrary demand graphs,
and use this in place of the crossing number in the planar case.
The intersection number is more delicate topologically, but after
establishing its properties, we are able to adapt the crossing number
proofs to establishing lower bounds on the intersection number,
and hence on the $\ell_2$-congestion of any all-pairs flow
in an excluded-minor graph.  We end Section \ref{sec:2-congest}
by extending our congestion lower bounds to graphs
which exclude small-depth minors.
This is important for the applications to geometric
graphs in Section \ref{sec:geographs}.

\medskip
\noindent
{\bf Balanced vertex separators with no $d_{\max}$ dependence.}
In the argument described above for bounding $\lambda_2(G)$, we lose
a factor of $d_{\max}$.  It turns out that if we simply want to
find a small {\em vertex separator} in $G$, then we can use the
vertex variant of the minimizer of \eqref{eq:Rd}
%, together with
%Cauchy-Schwarz,
to obtain a metric on $G$, along with an appropriate
embedding of the metric into $\mathbb R$ from Section \ref{sec:embeddings}.
By passing these two components to the vertex-quotient cut rounding
algorithm of \cite{FHL05}, we are able to recover vertex separators
in arbitrary graphs, with no degree constraints.  This is carried
out in Section \ref{sec:separators}.

\subsection{Related work}

\medskip
\noindent
{\bf The Riemannian setting.}
Bounding the eigenvalues of the Laplace-Beltrami operator on Riemannian manifolds and, in particular,
surfaces, has a long and rich history in geometric analysis; see \cite{SY94}.
In particular, Hersch \cite{Hersch70} showed that for any Riemannian metric on the 2-sphere,
one has $\lambda_2(M) \leq O(\frac{1}{\mathrm{vol}(M)})$, where $\mathrm{vol}(M)$ denotes
the Riemannian volume of $M$.
The approach of Hersch has many parallels to that of Spielman and Teng,
and one can compare his bound to the $O(1/n)$ bound for $n$-node bounded-degree planar graphs.

Furthermore, Yang and Yau \cite{YY80} show that for a compact surface $M$ of genus $g$,
the bound $\lambda_2(M) \leq O(\frac{g+1}{\mathrm{vol}(M)})$ holds.
Of course, this is similar to Kelner's bound of $O(\frac{g+1}{n})$ for $n$-node graphs
of genus $g$; both proofs are based on conformal uniformization,
but Kelner's proof is more involved.  Indeed, the graph case is somewhat more
difficult since there are an infinite number of different topologies, while for
compact surfaces of genus $g$, there is only one.

Finally, Korevaar \cite{K93}, answering a question of Yau, gives
bounds on the higher eigenvalues of genus $g$ surfaces, of the form
$\lambda_k(M) \leq O(\frac{k(g+1)}{\mathrm{vol}(M)})$.
Grigor$'$yan and Yau \cite{GY99} discuss
some extensions of Korevaar's approach to bounding
eigenvalues of the Laplacian
on graphs, but their techniques require
the existence of a {\em very strong} volume measure on the graph,
e.g. in order to obtain our bounds for a $d$-regular graph,
they would require that for every vertex $x$ and $R \geq 1$,
$|B(x,R)| = O(R^2)$, where $B(x,R)$ is the $R$-ball about $x$.
We certainly cannot make such an assumption; in fact, the difficult
case for us is when the initial graph has very small diameter.

\medskip
\noindent
{\bf Connections with discrete conformal mappings.}
One can view the minimizer of \eqref{eq:Rd} (or,
more appropriately, the maximizer of the vertex version \eqref{eq:vertex})
as a sort of global ``uniformizing'' metric for general graphs.
In the setting of discrete conformal mappings,
a number of variationally defined objects appear, and
duality is often an important component in their analysis.
We mention, for instance, the {\em extremal length} \cite{Duffin62} as
a prominent example.  It also often happens that one chooses
a weight function $w : V \to \mathbb R_+$ as the minimizer
of some convex functional, and this weight function
plays the role of a discrete Riemannian metric (much
as is the case in Section \ref{sec:flows}); see, e.g. the
work of Schramm \cite{Schramm93} and He and Schramm \cite{HS95}.

A significant difference between these works and ours is
that the flow which is dual to the weight function
involves a single {\em commodity}, i.e. generally one node
is trying to send flow to one other node.  Our work is based
on the more global use of {\em multi-commodity flows,}
where the duality relationship is more complex (and, in particular,
a corresponding max-flow/min-cut theorem no longer holds).

\section{Metrics, flows, and congestion}
\label{sec:flows}

Let $G=(V,E)$ be an undirected graph, and for every pair $u,v \in
V$, let $\mathcal P_{uv}$ be the set of all paths between $u$ and $v$ in $G$.
Let $\mathcal P = \bigcup_{u,v \in V} \mathcal P_{uv}$.
A {\em flow in $G$} is a mapping $F : \mathcal P \to \mathbb R_+$.
%We define, for every edge $e \in E$, the value $C_F(e) = \sum_{p \in \mathcal P : e \in p} F(p)$
We define, for every vertex $v \in V$, the value $$C_F(v) = \sum_{p \in \mathcal P : v \in p} F(p)$$ as
the {\em vertex congestion of $F$ at $v$.}
For $p \geq 1$, we define the
{\em vertex $p$-congestion of $F$} by $$\vcon_p(F) = \left( \sum_{v \in V} C_F(v)^p \right)^{1/p}.$$

We say that $F$ is an {\em integral flow} if, for every $u,v \in V$,
$|\{ p \in \mathcal P_{uv} : F(p) > 0\}| \leq 1$.
Given a {\em demand graph} $H=(U,D)$, we say that
$F$ is a {\em unit $H$-flow} if
there exists an injective mapping $g : U \to V$
such that for all $(i,j) \in D$, we have $\sum_{p \in \mathcal P_{g(i) g(j)}} F(p) = 1$,
and furthermore $F(p) = 0$ if $p \notin \bigcup_{(i,j) \in D} \mathcal P_{g(i) g(j)}$.
An {\em integral $H$-flow} is a unit $H$-flow which is also integral.
% define an $H$-flow, a unit $H$-flow, and an integral $H$-flow
% define the congestion of $H$ in $G$ as well
\remove{
Finally, we define the {\em value of $F$} as the quantity
$$
\val(F) = \min_{u,v \in V} \left(\sum_{p \in \mathcal P_{uv}} F(p)\right).
$$
...
}
\begin{lemma}\label{lem:RR}
For any graph $G=(V,E)$ and demand graph $H=(U,D)$, and
any unit $H$-flow $F$ in $G$, there exists an
integral $H$-flow $F^*$ such that
$$\vcon_2(F^*) \leq  \vcon_2(F) + \sqrt{\vcon_1(F)} \leq \vcon_2(F) + |V|^{3/2}.
$$
\end{lemma}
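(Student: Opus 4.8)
The plan is to obtain $F^*$ from $F$ by independent randomized rounding along each demand pair, and then argue that the \emph{expected} squared $2$-congestion is small enough that some outcome of the rounding does the job. Since $F$ is a unit $H$-flow, for each $(i,j) \in D$ the numbers $\{F(p) : p \in \mathcal P_{g(i)g(j)}\}$ are nonnegative and sum to $1$, i.e. they form a probability distribution over paths from $g(i)$ to $g(j)$ (we may assume $F$ has finite support, by first replacing it with a flow decomposition, so this is a genuine distribution). Independently over all $(i,j)\in D$, I would sample one path $p_{ij} \in \mathcal P_{g(i)g(j)}$ with $\Pr[p_{ij} = p] = F(p)$, and let $F^*$ be the flow sending one unit along each chosen $p_{ij}$ and zero elsewhere. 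By construction $F^*$ is an integral $H$-flow, witnessed by the same injection $g$.

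Next I would compute the first two moments of the congestion. Fix $v \in V$ and set $X_{ij} = \Ind[v \in p_{ij}]$, so that $C_{F^*}(v) = \sum_{(i,j) \in D} X_{ij}$. These $X_{ij}$ are independent Bernoulli variables with $\E X_{ij} = \sum_{p \in \mathcal P_{g(i)g(j)}\,:\, v \in p} F(p)$; summing over $(i,j)$ gives $\E[C_{F^*}(v)] = C_F(v)$. Since a Bernoulli variable has variance at most its mean, independence yields $\mathrm{Var}(C_{F^*}(v)) = \sum_{(i,j)} \mathrm{Var}(X_{ij}) \le \sum_{(i,j)} \E X_{ij} = C_F(v)$, and therefore $\E[C_{F^*}(v)^2] = \mathrm{Var}(C_{F^*}(v)) + C_F(v)^2 \le C_F(v)^2 + C_F(v)$. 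Summing over $v\in V$, $\E\big[\vcon_2(F^*)^2\big] = \sum_{v} \E[C_{F^*}(v)^2] \le \vcon_2(F)^2 + \vcon_1(F)$, so there exists a realization of the $p_{ij}$ with $\vcon_2(F^*)^2 \le \vcon_2(F)^2 + \vcon_1(F)$; the first claimed inequality then follows from $\sqrt{a^2+b} \le a + \sqrt{b}$ for $a,b \ge 0$. For the crude bound $\sqrt{\vcon_1(F)} \le |V|^{3/2}$, note $C_F(v) \le \sum_{(i,j)\in D}\sum_{p \in \mathcal P_{g(i)g(j)}} F(p) = |D| \le \binom{|V|}{2}$, hence $\vcon_1(F) = \sum_v C_F(v) \le |V|\binom{|V|}{2} \le |V|^3$.

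I do not expect a serious obstacle here: the only point that genuinely matters is rounding the demand pairs \emph{independently}, so that $\mathrm{Var}(C_{F^*}(v))$ is additive over pairs and stays $O(C_F(v))$ rather than $O(C_F(v)^2)$ — this is what keeps the additive error at $\sqrt{\vcon_1(F)}$ instead of something like $\vcon_1(F)$. Note also that no concentration inequality is needed, since we only use the second moment together with a first-moment existence argument. The one mild technical nuisance is ensuring the sampling is well defined, which is handled by passing to a finite flow decomposition at the outset.
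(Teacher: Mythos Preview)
Your proof is correct and follows essentially the same randomized-rounding argument as the paper: independently sample one path per demand pair according to $F$, bound $\E[\vcon_2(F^*)^2]\le \vcon_2(F)^2+\vcon_1(F)$ (you phrase this via $\mathrm{Var}(X_{ij})\le \E X_{ij}$, the paper expands the square directly), and conclude by existence of a below-average realization together with $\sqrt{a^2+b}\le a+\sqrt b$. The only addition is that you spell out the easy bound $\vcon_1(F)\le |V|\binom{|V|}{2}\le |V|^3$, which the paper leaves implicit.
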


\begin{proof}
  For a flow $F \colon \cP \to \R_+$ and vertices $x,u,v$, let
  $F_{uv}(x) = \sum_{x \in p \in \cP_{uv}} F(p)$. Define the random
  flow $F^*$ as follows: For each demand pair $uv$, independently pick
  one path $p \in \cP_{uv}$ with probability $F(p)$. Set $F^*(p) = 1$
  for each of the selected paths, and zero for all other paths. Then
  \begin{align*}
    \E[\vcon_2(F^*)^2]
    &= \E\Bigl[ \sum_{x \in V} \Bigl( \sum_{u,v \in V} F_{uv}^*(x) \Bigr)^2 \Bigr]\\
    &= \sum_{x \in V} \Bigl( \sum_{u,v \in V} \E[F_{uv}^*(x)^2] +
    2 \sum_{\{u,v\}\neq\{u',v'\} \subseteq V} \E[F_{uv}^*(x)] \E[F_{u'v'}^*(x)] \Bigr).
    \intertext{Observing that $F^*_{uv}(x) \in \{0, 1\}$,}
    \E[\vcon_2(F^*)^2]
    &\leq \sum_{x\in V} \sum_{u,v \in V} \E[F_{uv}^*(x)] +
    \sum_{x \in V} \Bigl(\sum_{u,v \in V} \E[F_{uv}^*(x)]\Bigr)^2 \leq
    \vcon_1(F) + \vcon_2(F)^2.
  \end{align*}
  By concavity, we conclude that $\E[\vcon(F^*)] \leq
  \sqrt{\vcon_1(F)} + \vcon_2(F)$; in particular, there exists
  some fixed flow $F^*$ that achieves this bound.
\end{proof}

A non-negative vertex weighting $s : V \to \mathbb R_+$
induces a semi-metric $d_s : V \times V \to \mathbb R_+$ defined
by $d_s(u,v) = \min_{p \in \mathcal P_{uv}} \sum_{x \in p} s(x)$.
We define
\begin{eqnarray}\label{eq:vertex}
\Lambda_s(G) &=& \frac{\sum_{u,v\in V} d_s(u,v)}{\sqrt{\sum_{v \in V} s(v)^2}}.
\end{eqnarray}

The main theorem of this section follows.

\begin{theorem}[Duality of metrics and flows]
\label{thm:duality}
Let $G=(V,E)$ be any graph with $n=|V|$, then
$$
\min_F \vcon_2(F) = \max_{s : V \to \mathbb R_+} \Lambda_s(G),
$$
where the minimum is over all unit $K_n$-flows in $G$,
and the maximum is over all non-negative weight functions on $V$.
\end{theorem}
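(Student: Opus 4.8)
The plan is to read both sides of the claimed identity as the two values of a single bilinear minimax, and then invoke a minimax theorem. The first step is to write the objective $\vcon_2(F)$ as itself a maximum over weightings: since $C_F(v)\ge 0$ for every $v$, we have $\vcon_2(F)=\|C_F\|_2=\max\{\langle s,C_F\rangle : s\in\R_+^V,\ \|s\|_2\le 1\}$, the maximum being attained at $s=C_F/\|C_F\|_2$. Expanding and swapping the order of summation, $\langle s,C_F\rangle=\sum_{v}s(v)\sum_{p\ni v}F(p)=\sum_{p\in\cP}F(p)\,\ell_s(p)$ where $\ell_s(p):=\sum_{x\in p}s(x)$, so that
\[
\min_F \vcon_2(F)=\min_F\ \max_{s\in\R_+^V,\,\|s\|_2\le 1}\ \sum_{p\in\cP}F(p)\,\ell_s(p),
\]
the minimum being over unit $K_n$-flows. (Throughout I read $\sum_{u,v\in V}$ in the definition of $\Lambda_s$ over the $\binom n2$ unordered pairs, which is exactly the demand set of $K_n$ since $\cP_{uv}=\cP_{vu}$.)

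One inequality, $\max_s\Lambda_s(G)\le\min_F\vcon_2(F)$, then requires no duality at all. Fix a unit $K_n$-flow $F$ and a weighting $s$. Because the sets $\cP_{uv}$ are pairwise disjoint and carry all of $F$'s support, and $\sum_{p\in\cP_{uv}}F(p)=1$, we get $d_s(u,v)=\min_{p\in\cP_{uv}}\ell_s(p)\le\sum_{p\in\cP_{uv}}F(p)\,\ell_s(p)$; summing over the $\binom n2$ pairs and applying Cauchy--Schwarz,
\[
\sum_{u,v}d_s(u,v)\ \le\ \sum_{p\in\cP}F(p)\,\ell_s(p)\ =\ \langle s,C_F\rangle\ \le\ \|s\|_2\,\|C_F\|_2\ =\ \|s\|_2\,\vcon_2(F),
\]
and dividing by $\|s\|_2$ gives $\Lambda_s(G)\le\vcon_2(F)$ for all choices.

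For the reverse inequality the plan is to reduce to a compact, finite-dimensional bilinear minimax and swap. First note that deleting a cycle from a path only decreases every $C_F(v)$ while preserving the unit-$K_n$-flow constraints, so $\min_F\vcon_2(F)$ is unchanged if $F$ is restricted to lie in the polytope $\mathcal F$ of unit $K_n$-flows supported on the \emph{finite} set of simple paths of $G$; $\mathcal F$ is nonempty (assuming, as elsewhere, that $G$ is connected), convex and compact, and $\mathcal S:=\{s\in\R_+^V:\|s\|_2\le 1\}$ is convex and compact. The objective $(F,s)\mapsto\sum_p F(p)\,\ell_s(p)$ is bilinear, so a standard minimax theorem (von Neumann, or Sion) yields
\[
\min_{F\in\mathcal F}\ \max_{s\in\mathcal S}\ \sum_p F(p)\,\ell_s(p)\ =\ \max_{s\in\mathcal S}\ \min_{F\in\mathcal F}\ \sum_p F(p)\,\ell_s(p).
\]
For fixed $s$ the inner minimization decouples over the $\binom n2$ demand pairs --- each unit of demand may be routed independently, and since $\ell_s\ge 0$ it is optimal to place each unit entirely on an $\ell_s$-shortest path --- so $\min_{F\in\mathcal F}\sum_p F(p)\,\ell_s(p)=\sum_{u,v}d_s(u,v)$. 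Hence $\min_F\vcon_2(F)=\max_{s\in\mathcal S}\sum_{u,v}d_s(u,v)$, and by homogeneity of $\Lambda_s$ in $s$ the constraint $\|s\|_2\le 1$ can be absorbed into the normalization in the denominator of $\Lambda_s$, giving $\min_F\vcon_2(F)=\max_{s:V\to\R_+}\Lambda_s(G)$.

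The only real obstacle is the routine-but-necessary bookkeeping that makes the minimax step legitimate: reducing to finitely many (simple) paths so that everything lives in a compact, finite-dimensional domain, and correctly identifying the inner minimum as a union of independent shortest-path problems. As a byproduct the saddle point $(F^*,s^*)$ satisfies $s^*\propto C_{F^*}$ --- precisely where the maximum in the first step is attained --- which is the assertion in the outline that the extremal metric weights each vertex proportionally to its congestion in an optimal flow; alternatively one could avoid the minimax theorem and extract the same conclusion from the Lagrangian/KKT conditions of the convex program $\min_F\vcon_2(F)$ directly, at the cost of a more technical argument.
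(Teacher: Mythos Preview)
Your proof is correct. It differs from the paper's in a worthwhile way: rather than setting up an explicit primal convex program in the variables $(d,s)$, computing its Lagrangian dual, and invoking Slater's condition for strong duality, you rewrite $\vcon_2(F)=\max_{s\in\mathcal S}\langle s,C_F\rangle$ and reduce the whole statement to a single bilinear minimax over the compact convex sets $\mathcal F\times\mathcal S$, which you then swap via von~Neumann/Sion. The paper's Lagrangian computation is more mechanical and perhaps more familiar to an optimization audience, and it makes the constraint structure $Qd\preceq Ps$, $\|s\|_2\le1$ explicit; your route is shorter, avoids verifying a constraint qualification, and surfaces the saddle-point relation $s^*\propto C_{F^*}$ more transparently (the paper extracts the same relation from the stationarity condition $s=P^\top f/2\mu$). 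Your reduction to simple paths to secure compactness is exactly the bookkeeping the minimax step needs, and the observation that for $s\ge0$ shortest paths may be taken simple closes the loop with the definition of $d_s$. In short: same theorem, two standard duality mechanisms, and your version is slightly more self-contained.
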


\begin{proof}
Let $P \in \{0,1\}^{\cP \times V}$ be the path incidence matrix
and $Q \in \{0,1\}^{\cP \times \binom{V}{2}}$ be the path endpoint
matrix, respectively, which are defined by
\begin{align*}
  P_{p,v} &=
  \begin{cases}
    1 & v \in p \\
    0 & \text{otherwise}
  \end{cases} &
  Q_{p,uv} &=
  \begin{cases}
    1 & p \in \cP_{uv} \\
    0 & \text{otherwise}.
  \end{cases}
\end{align*}
Then we write $\max_{s \colon V \to \R_+} \Lambda_s(G)$ as a convex
program \eqref{eq:maxflow} in standard form, with variables $(d,s) \in
\Omega = \R_+^{\binom{V}{2}} \times \R_+^V$.
\begin{equation}
  \label{eq:maxflow}\tag{\textsf{P}}
  \begin{array}{rll}
    \text{minimize} & -\1\trans d \\
    \text{subject to}
    & Qd \preceq Ps & \lVert s \rVert_2^2 \leq 1 \\
    & s \succeq 0 & d \succeq 0
  \end{array}
\end{equation}
Next, we introduce the Lagrangian multipliers $f \in \R_+^\cP$ and
$\mu \in \R_+$ and write the Lagrangian function
\begin{align*}
  L(d, s, f, \mu)
  &= -\1\trans d + f\trans(Qd - Ps) + \mu(s\trans s - 1) \\
  &= d\trans (Q\trans f - \1) + (\mu s\trans s - f\trans P s) - \mu.
\end{align*}
Therefore, the Lagrange dual $g(f, \mu) = \inf_{(d, s) \in \Omega}
L(d, s, f, \mu)$ is given by
\begin{align*}
  g(f,\mu)
  &= \inf_{d \succeq 0} d\trans (Q\trans f - \1) + \inf_{s \succeq 0} (\mu s\trans s -
  f\trans P s) - \mu.
\end{align*}
The dual program is then $\sup_{f, \mu} g(f,\mu)$. In order to write
it in a more tractable form, first observe that $g(f,\mu) = - \infty$
when $Q\trans f \prec \1$. But if we require that $Q\trans f \succeq
\1$, it is easy to see that the optimum must be attained when equality
holds. To minimize the quadratic part, set $\nabla (\mu s\trans s -
f\trans P s) = 0$ to get $s = P\trans f/2\mu$. With these
substitutions, the dual objective simplifies to
\begin{align*}
  g(f, \mu) &= - \frac{\lVert P\trans f \rVert_2^2}{4\mu} - \mu
\end{align*}
To maximize this quantity, set $\mu^* = \lVert P\trans f \rVert_2/2$, and
get $g(f, \mu) = - \lVert P\trans f \rVert_2$. Therefore, the final
dual program is
\begin{equation}
  \label{eq:dual} \tag{\textsf{P}*}
  \begin{array}{rll}
    \text{min} & \lVert P\trans f \rVert_2 \\
    \text{subject to} & f \succeq 0
    & Q\trans f = 1
  \end{array}
\end{equation}
When $P$ and $Q$ correspond to a $K_n$ demand graph for $G$, the dual
optimum is precisely $\min_f \vcon_2(f)$, where the minimum is over
unit $K_n$-flows.

The theorem now follows from Slater's condition in convex optimization;
see \cite[Ch.~5]{boyd}.

\begin{fact}[Slater's condition for strong duality]\label{lem:dual}
  When the feasible region for \eqref{eq:maxflow} has non-empty
  interior, the values of \eqref{eq:maxflow} and \eqref{eq:dual}
  are equal.
\end{fact}
\end{proof}

\section{2-congestion lower bounds}
\label{sec:2-congest}

In the present section, we prove lower bounds on the $2$-congestion
needed to route all-pairs multicommodity flows in various families of
graphs.  
%In Section \ref{sec:uniformize}, we combine these results
%with those of Section \ref{sec:flows} to derive the existence
%of nice metrics on various families of graphs.

\begin{theorem}[Bounded genus]
\label{thm:genus}
There exists a universal constant $c > 0$ such that
if $G=(V,E)$ is a genus $g$ graph with $n = |V|$, and $F$
is any unit $K_n$-flow in $G$, then $\vcon_2(F) \geq \frac{cn^2}{\sqrt{g}}$
for $n \geq 3 \sqrt{g}$.
%This bound is tight up to the value of $c$.
\end{theorem}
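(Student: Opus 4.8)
The plan is to reduce the lower bound on $\vcon_2(F)$ to a known lower bound on the crossing number of a drawing of $K_n$ on a genus-$g$ surface. First, I would invoke Lemma~\ref{lem:RR} to replace the given unit $K_n$-flow $F$ by an \emph{integral} unit $K_n$-flow $F^*$ with $\vcon_2(F^*) \leq \vcon_2(F) + n^{3/2}$; since the target bound is $\Omega(n^2/\sqrt{g})$ and we may assume $n \geq 3\sqrt{g}$, the additive $n^{3/2}$ term is negligible (it is $O(\sqrt{g}) \cdot n \leq O(n^2/\sqrt{g})$ precisely in this regime), so it suffices to prove the bound for $F^*$. An integral $K_n$-flow assigns, to each pair $u,v \in V$, a single path $p_{uv}$ in $G$; fixing an embedding of $G$ into the genus-$g$ surface $\Sigma_g$, the images of these $\binom{n}{2}$ paths form a drawing $\mathcal{D}$ of $K_n$ on $\Sigma_g$ (after small perturbations to put edges in general position).

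The second step is to bound the number of crossings in $\mathcal{D}$ in terms of the congestion. Two edges $p_{uv}$ and $p_{u'v'}$ of $\mathcal{D}$ can only cross at a point lying on (the image of) some vertex $x \in V$ or in the interior of some edge of $G$; in either case the number of such crossing points contributed ``at $x$'' is at most roughly the number of pairs of flow-paths through $x$, i.e. $O(C_{F^*}(x)^2)$ (with an analogous count at each edge of $G$, which one bounds by the congestion at its endpoints). Summing over all $x \in V$ gives
$$
\mathrm{cr}_{\Sigma_g}(K_n) \;\leq\; \mathrm{cr}(\mathcal{D}) \;\lesssim\; \sum_{x \in V} C_{F^*}(x)^2 \;=\; \vcon_2(F^*)^2.
$$
The third step is to plug in the known lower bound on the genus-$g$ crossing number of the complete graph: for $n \geq c'\sqrt{g}$ one has $\mathrm{cr}_{\Sigma_g}(K_n) = \Omega(n^4/g)$ (this is the Euler-formula/counting argument of the type used for the planar crossing-number inequality, refined for higher genus — it follows since a graph drawn on $\Sigma_g$ with $n$ vertices has at most $3n + O(g)$ edges, and one runs the standard probabilistic deletion argument). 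Combining, $\vcon_2(F^*)^2 \gtrsim n^4/g$, hence $\vcon_2(F) \gtrsim n^2/\sqrt{g}$, as desired.

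The main obstacle I anticipate is making the crossing-counting step rigorous: a priori the paths $p_{uv}$ may overlap along shared edges of $G$ rather than crossing transversally, and may backtrack or wind, so ``number of crossings of the drawing'' is not literally $\sum_x \binom{C_{F^*}(x)}{2}$. One has to argue that after a generic perturbation of each path within a thin tube around its image, the number of transversal crossings created is controlled by pairwise congestion through vertices and edges of $G$ — roughly, two paths sharing a length-$\ell$ stretch contribute $O(1)$ crossings after perturbation, and the bookkeeping is dominated by $\sum_x C_{F^*}(x)^2$ plus $\sum_{e \in E} (\text{edge-congestion of }e)^2$, the latter being $O(\sum_x C_{F^*}(x)^2)$ since each edge's congestion is at most the congestion at an endpoint. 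A secondary point is confirming the constant-free regime $n \geq 3\sqrt{g}$ is exactly what the crossing-number lower bound needs; this should match up, since below $n \approx \sqrt{g}$ the surface is ``roomy enough'' to draw $K_n$ with few crossings and no nontrivial bound can hold.
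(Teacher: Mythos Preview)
Your approach is essentially the paper's: reduce to an integral flow via Lemma~\ref{lem:RR}, read the flow paths as a drawing of $K_n$ on the genus-$g$ surface (using the given embedding of $G$), bound the number of crossings by $\sum_v C_{F^*}(v)^2 = \vcon_2(F^*)^2$, and invoke the known $\Omega(n^4/g)$ lower bound on the genus-$g$ crossing number of $K_n$. The paper handles your perturbation worry in one stroke by noting that crossings can be arranged to occur only at vertices of $G$---along each edge of the embedding, route the paths as parallel strands in a thin tube---so no edge-interior bookkeeping is needed and the count is immediately $\leq \sum_v C_{F^*}(v)^2$.

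One small correction: your justification that $n^{3/2} = O(\sqrt{g})\cdot n$ is wrong when $g \ll n$, and in fact $n^{3/2}$ is \emph{not} dominated by $n^2/\sqrt{g}$ across the full range $n \geq 3\sqrt{g}$ (take $g \sim n^2$). To make the rounding step airtight, use the sharper form of Lemma~\ref{lem:RR} together with Cauchy--Schwarz, $\sqrt{\vcon_1(F)} \leq n^{1/4}\sqrt{\vcon_2(F)}$, which gives $\vcon_2(F^*) \leq \vcon_2(F) + n^{1/4}\sqrt{\vcon_2(F)}$; since $g^{1/4}/n^{3/4}$ is bounded in the stated regime, this is enough to push the contradiction through with a suitably small universal constant $c$.
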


\begin{proof}
By Lemma \ref{lem:RR} it suffices
to prove the theorem when $F$ is an integral flow.
Suppose, for the sake of contradiction,
there exists an integral $K_n$-flow $F$ with $\vcon_2(F) < \frac{n^2}{8\sqrt{g}}$.

The drawing of $G$ in a genus $g$ surface $\mathbb S$ induces (via $F$)
a drawing of $K_n$ in $\mathbb S$ where edges of $K_n$ only cross at (the images of)
vertices of $G$.
Clearly the number of crossings is upper bounded by $\sum_{v \in V} C_{F}(v)^2
= \vcon_2(F)^2 < \frac{n^4}{64g}$.  On the other hand,
it is known that as long as $n \geq 3 \sqrt{g}$, any drawing of $K_n$ in a
surface of genus $g$
requires at least $\frac{n^4}{64g}$ edge crossings \cite{acns, Leighton},
yielding a contradiction.
\end{proof}

Now we prove a similar theorem for $K_h$-minor-free graphs. To this
end, suppose we have a graph $G=(V,E)$ and an integral flow
$\varphi$ in $G$. For every $(i,j) \in E(H)$, let $\varphi_{ij}$ be
the corresponding flow path in $G$. Define
$$
\inter(\varphi) = \# \Bigl\{
(i,j),(i',j') \in E(H)
:
|\{i,j,i',j'\}|=4 \textrm{ and } \varphi_{ij} \cap \varphi_{i'j'} \neq \emptyset
\Bigr\}.
$$

\begin{lemma}\label{lem:contains}
If $\varphi$ is an integral $H$-flow in $G=(V,E)$ with $\inter(\varphi) = 0$ and
$H$ is bipartite with minimum degree 2, then
$G$ contains an $H$-minor.
\end{lemma}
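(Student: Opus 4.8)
The plan is to read off a model of $H$ in $G$ directly from the flow paths, after normalizing $\varphi$.

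First I would pass to a minimal configuration: among all pairs $(G',\varphi')$ with $G'$ a minor of $G$ and $\varphi'$ an integral $H$-flow in $G'$ with $\inter(\varphi')=0$, choose one minimizing $\sum_{e\in E(H)}|V(\varphi'_e)|$, and then delete all vertices and edges of $G'$ not used by $\varphi'$. Since $H$ is a minor of $G$ whenever it is a minor of $G'$, it suffices to prove the statement for $(G',\varphi')$; rename it $(G,\varphi)$. Two normalizations are then automatic. \emph{(a)} Each $\varphi_e$ is an induced path in $G$: a chord would let us replace a segment of $\varphi_e$ by a single edge, producing a shorter path whose vertex set is a subset of the old one, so $\inter$ cannot increase — contradicting minimality. \emph{(b)} Every vertex of $G$ lies on at least two flow paths: endpoints $g(v)$ lie on $\deg_H(v)\ge 2$ of them, while a vertex on just one flow path would be internal to it, hence of degree $2$ in $G$ (were its degree $\ge 3$, a third incident edge would belong to some flow path other than the one through $x$), and then contracting one of its edges would shorten $\varphi$ without increasing $\inter$.

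Now I would define the branch sets via a \emph{type} function. For $x\in V(G)$ the (at least two) flow paths through $x$ pairwise meet, hence pairwise share an endpoint in $H$; since $H$ is bipartite it contains no triangle, so these edges of $H$ have a common vertex, and it is unique since $H$ has no parallel edges — call it $\tau(x)$. One checks $\tau(g(v))=v$ for every $v$, since (again by triangle-freeness together with $\deg_H(v)\ge 2$) every flow path through $g(v)$ is incident to $v$. Set $A_v=\{x\in V(G):\tau(x)=v\}$. The sets $A_v$ partition $V(G)$ and $g(v)\in A_v$, so they are nonempty and pairwise disjoint. The edge condition is then easy: for $\{v,w\}\in E(H)$ write $\varphi_{\{v,w\}}=(z_0=g(v),z_1,\ldots,z_\ell=g(w))$; every flow path through $z_i$ is incident to $v$ or to $w$, so $\tau(z_i)\in\{v,w\}$, and since $\tau(z_0)=v$, $\tau(z_\ell)=w$, some edge $z_tz_{t+1}$ of this path joins $A_v$ to $A_w$.

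The remaining — and, I expect, hardest — point is that each $A_v$ induces a connected subgraph of $G$. For $x\in A_v$ lying on a flow path $\varphi_{\{v,w\}}$, the sub-path from $g(v)$ to $x$ has all internal vertices of type $v$ or $w$; if they are all of type $v$ this sub-path lies in $A_v$ and joins $x$ to $g(v)$. So one must rule out, in the minimal configuration, an ``alternation'' along $\varphi_{\{v,w\}}$ — a type-$v$ vertex, then a nonempty block of type-$w$ vertices, then another type-$v$ vertex. The block consists of vertices lying on other flow paths incident to $w$; the plan is to use one of those paths to reroute $\varphi_{\{v,w\}}$ near $g(w)$ and contradict minimality. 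The subtlety, which I expect to be the main obstacle, is controlling $\inter$ under the reroute: the replacement segment lies inside a single flow path incident to $w$ and consists only of type-$w$ vertices, so any other flow path meeting it either shares the $H$-vertex $w$ (harmless), or is disjoint from $\{v,w\}$ in $H$ and would therefore have to run through a vertex whose type is incompatible with membership in that segment — this is what must be made precise using the induced-path property and triangle-freeness of $H$. Once connectivity is established, $\{A_v\}_{v\in V(H)}$ is a model of $H$ as a minor of $G$, as required.
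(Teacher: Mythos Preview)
Your setup—passing to a minimal minor, defining the type $\tau(x)$ as the unique common $H$-endpoint of the flow paths through $x$, and taking $A_v=\tau^{-1}(v)$—is correct through the edge condition, and is a genuinely different route from the paper's. The gap is connectivity of $A_v$, which you reduce to ruling out an alternation $v,w,\ldots,w,v$ along $\varphi_{\{v,w\}}$. Your plan for this—reroute $\varphi_{\{v,w\}}$ through a type-$w$ vertex of the block along some $\varphi_{\{w,u\}}$ and contradict minimality—has two holes. First, nothing forces the reroute to be \emph{shorter}: the segment of $\varphi_{\{w,u\}}$ from the block to $g(w)$ carries no a priori length bound compared with the tail of $\varphi_{\{v,w\}}$, so $\sum_e|V(\varphi_e)|$ need not drop. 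Second, the assertion that the replacement segment ``consists only of type-$w$ vertices'' is unjustified: vertices on $\varphi_{\{w,u\}}$ have type in $\{w,u\}$, and a type-$u$ vertex on the new $\varphi_{\{v,w\}}$ may lie on some $\varphi_{\{u,u'\}}$ with $\{u,u'\}\cap\{v,w\}=\emptyset$, raising $\inter$ after the reroute. Neither the induced-path property nor triangle-freeness of $H$ obviously repairs these, and without both resolved minimality is not contradicted.

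The paper takes the opposite trade-off: it makes connectivity trivial and does the work on disjointness. Using the bipartition $L\cup R$ asymmetrically, for $i\in L$ it sets $C_i$ to be the union of \emph{prefixes} $\hat\varphi_{ij}$ of the paths out of $g(i)$ (each truncated at the first vertex lying on some other left vertex's paths), which is star-shaped around $g(i)$ and hence connected by construction; for $j\in R$ it sets $C_j=V_j\setminus\bigcup_{i\in L}C_i$ and uses $\inter(\varphi)=0$ to show these are connected and pairwise disjoint. If you want to rescue the symmetric approach, you likely need either a sharper minimality (e.g.\ lexicographic over the multiset of path lengths) that actually forces no-alternation, or a direct connectivity argument that permits the walk from $x\in A_v$ to $g(v)$ to hop across several $v$-incident flow paths rather than staying on one.
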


\begin{proof}
Let $V(H) = L \cup R$ be a partition for which $E(H) \subseteq L \times R$.
Let $\varphi$ be an integral $H$-flow in $G$ with $\inter(\varphi)=0$.
We may assume that $V(H) \subseteq V$.
For a vertex $i \in V(H)$, let $N_H(i) \subseteq V_H$ be
the set of $j \in R$ with $(i,j) \in E(H)$.
For each $i \in V(H)$, let $V_i = \bigcup_{j \in N_H(i)} \varphi_{ij}$.

Now, for each $i \in L$ and $j \in N_H(i)$, consider
the path $\varphi_{ij} = \langle v_1, v_2, \ldots, v_k \rangle$,
and let $v_t$ be the first vertex in this path for which
$v_t \in \bigcup_{r \in L \setminus \{i\}} V_r$.
If no such $t$ exists, define $\hat \varphi_{ij} = \varphi_{ij}$,
and otherwise
define the prefix $\hat \varphi_{ij} = \{v_1, v_2, \ldots, v_{t-1}\}$.
Set $C_i = \bigcup_{j \in N_H(i)} \hat \varphi_{ij}$ to be
the union of all such prefixes.
Then, for each $i \in R$, define $C_i = V_i \setminus \bigcup_{j \in L} C_j$.

We claim that the sets $\{C_i\}_{i \in L \cup R}$ are all connected,
and pairwise disjoint, and that for $(i,j) \in E(H)$, we have
$E(C_i, C_j) \neq \emptyset$.  This will imply that $G$ has
an $H$-minor.
We start with the following straightforward fact.

\begin{fact}\label{fact:one}
If $v \in V_r \cap V_{r'}$ for some $r\neq r' \in L$, then $v \notin \bigcup_{j\in L} C_j$.
\end{fact}

\begin{lemma}\label{lem:self}
For every $i \in L \cup R$, we have $i \in C_i$.
\end{lemma}

\begin{proof}
First, we consider $i \in L$.
If $i \notin C_i$, then $i$ occurs
as an intermediate vertex of some $\varphi_{rs}$ path
for $r \in L, s \in R$ with $r \neq i$.
Since $i$ has degree at least 2 in $H$,
there must exist some $s' \in R$ with $s \neq s'$
and $(i,s') \in E(H)$.  But now $i \in \varphi_{rs} \cap \varphi_{is'}$
which contradicts the fact that $\inter(\varphi)=0$.
Thus we must have $i \in C_i$.

To see that $i \in C_i$ for $i \in R$,
note that by assumption $\deg_H(i) \geq 2$,
so there must exist $r \neq r' \in L$ for which
$(r,i), (r',i) \in E(H)$.  Thus
$i \in V_r \cap V_{r'}$, and by Fact \ref{fact:one},
it must be that $i \notin \bigcup_{j \in L} C_j$.
We conclude that $i \in C_i$.
\end{proof}

\noindent
{\bf Connected and disjoint components.}
Lemma \ref{lem:self} implies that $i \in C_i$ for $i \in L$,
so it is clear by construction that the sets $\{C_i\}_{i \in L}$
are each connected and that for any $i \in L$ and
$j \in L \cup R \setminus \{i\}$, we have $C_i \cap C_j = \emptyset$.
Thus we need only verify that each set $C_i$ is connected
for $i \in R$, and also that for $i,j \in R$ with $i\neq j$,
we have $C_i \cap C_j = \emptyset$.

\begin{lemma}\label{lem:complement}
If $i \in R$ and $j \in N_H(i)$,
then $\varphi_{ji} \setminus \hat \varphi_{ji} \subseteq C_i$.
\end{lemma}

\begin{proof}
Any node $v \in \varphi_{ji} \setminus \hat \varphi_{ji}$
must be contained either in $C_i$ or in $C_r$ for some $r \in L$
with $r \neq j$.  But the latter case cannot occur
because any node which is contained in $V_j \cap V_{r}$ for $r \neq j$
cannot be contained in $C_r$ by Fact \ref{fact:one}.
\end{proof}

Using the fact that $i \in C_i$ (Lemma \ref{lem:self})
and the preceding lemma, we see that $C_i$ is connected
for every $i \in R$.
It remains to show that for $i,j \in R$ with $i \neq j$,
we have $C_i \cap C_j = \emptyset$.

Suppose, to the contrary, that $C_i \cap C_j \neq \emptyset$.
Since $\inter(\varphi) = 0$, there must exist a $k \in L$
such that $(\varphi_{ki} \cap C_i) \cap (\varphi_{kj} \cap C_j) \neq \emptyset$.
The following lemma shows this to be impossible.

\begin{lemma}
For $k \in L$ and $i,j \in N_H(k)$, we must have
$\varphi_{ki} \cap \varphi_{kj} \subseteq C_k$.
\end{lemma}

\begin{proof}
Suppose, to the contrary, that there is a $v \in \varphi_{ki} \cap \varphi_{kj}$
for which $v \notin C_k$.  In this case, it must be that $v \in V_r$
for some $r \in L$ with $r \neq k$.  In other words, for some $s \in R$,
$\varphi_{rs}$
intersects both $\varphi_{ki}$ and $\varphi_{kj}$, but this is impossible
since $i \neq j$ and $\inter(\varphi) = 0$.
\end{proof}

\noindent
{\bf Edges of $E(H)$.}
Consider $i \in L$ and $j \in R$ with $(i,j) \in E(H)$.
It is straightforward to see that $\hat \varphi_{ij} \subseteq C_i$
by construction, and on the other hand, $\varphi_{ij} \setminus \hat \varphi_{ij} \subseteq C_j$,
by Lemma \ref{lem:complement}.  Since $i \in C_i$ and $j \in C_j$ by Lemma \ref{lem:self},
it follows that $E(C_i, C_j) \neq \emptyset$.  This complete the proof.
\end{proof}

\begin{corollary}
For every $h \geq 2$,
if $G$ is $K_h$-minor-free, and $\varphi$ is an integral $K_{2h}$-flow in $G$,
then $\inter(\varphi) > 0$.
\end{corollary}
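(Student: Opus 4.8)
The plan is to reduce the statement to Lemma~\ref{lem:contains} by passing to a bipartite subgraph of $K_{2h}$. Suppose, for contradiction, that $\inter(\varphi) = 0$. Fix an arbitrary partition of the $2h$ vertices of $K_{2h}$ into two classes of size $h$; the edges running between the two classes form a copy of $K_{h,h}$ sitting inside $K_{2h}$. I would then restrict $\varphi$ to exactly the demand pairs corresponding to edges of this $K_{h,h}$: keep the same injective map $g$ into $V(G)$, keep the same flow path $\varphi_{ij}$ for each retained pair $(i,j)\in E(K_{h,h})$, and set $F(p)=0$ on all remaining paths. Since $\varphi$ is integral, each retained demand is still routed along a single path carrying exactly one unit, so this yields a bona fide integral $K_{h,h}$-flow $\varphi'$ in $G$ in the sense of Section~\ref{sec:flows}.

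Next I would observe that any pair of vertex-disjoint edges of $K_{h,h}$ is also a pair of vertex-disjoint edges of $K_{2h}$, with the same flow paths $\varphi'_{ij}=\varphi_{ij}$; hence the set of pairs counted by $\inter(\varphi')$ is contained in the set counted by $\inter(\varphi)$, giving $\inter(\varphi') \leq \inter(\varphi) = 0$. Because $h \geq 2$, the graph $K_{h,h}$ is bipartite with minimum degree $h \geq 2$, so Lemma~\ref{lem:contains} applies to $\varphi'$ and shows that $G$ contains a $K_{h,h}$-minor.

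Finally, I would note that $K_{h,h}$ itself has $K_h$ as a minor: contracting a perfect matching $\{l_i r_i\}_{i=1}^h$ of $K_{h,h}$ merges each pair $l_i,r_i$ into a single vertex, and for $i \neq j$ the edge $l_i r_j$ witnesses an edge between the $i$-th and $j$-th merged vertices, producing $K_h$. Since the minor relation is transitive, $G$ contains a $K_h$-minor, contradicting the hypothesis that $G$ is $K_h$-minor-free. Therefore $\inter(\varphi) > 0$.

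There is no substantial obstacle here; the only points requiring care are checking that the restriction $\varphi'$ is a legitimate integral $K_{h,h}$-flow (injective embedding, one unit routed on each retained pair, zero flow elsewhere) and that restricting the demand graph can only decrease the intersection count. Both follow directly from the definitions in Section~\ref{sec:flows} and the definition of $\inter(\cdot)$.
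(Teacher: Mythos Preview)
Your proposal is correct and follows essentially the same route as the paper: restrict the integral $K_{2h}$-flow to an integral $K_{h,h}$-flow with $\inter=0$, invoke Lemma~\ref{lem:contains} to get a $K_{h,h}$-minor, and observe that $K_{h,h}$ contains $K_h$ as a minor. You have simply spelled out the details (the restriction preserves integrality and cannot increase $\inter$, and contracting a perfect matching of $K_{h,h}$ yields $K_h$) that the paper leaves implicit.
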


\begin{proof}
If $\varphi$ is an integral $K_{2h}$ flow with $\inter(\varphi)=0$,
the obviously it induces an integral $K_{h,h}$ flow with the same property.
By Lemma \ref{lem:contains}, $G$ has $K_{h,h}$ as a minor, and hence also
has $K_h$ as a minor,
yielding a contradiction.
\end{proof}

\begin{lemma}\label{lem:inter}
For any integral flow $\varphi$ in $G$, we have $\vcon_2(\varphi) \geq \sqrt{\inter(\varphi)}$.
\end{lemma}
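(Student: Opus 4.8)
The plan is a straightforward double-counting argument relating vertex congestion to path intersections. First I would note that since $\varphi$ is integral (and, in the situations of interest, a unit $H$-flow), every demand pair carries at most one flow path, and that path carries exactly one unit; hence $C_\varphi(v)$ is precisely the number of flow paths $\varphi_{ij}$ (over $(i,j)\in E(H)$) that pass through $v$, and $\vcon_2(\varphi)^2 = \sum_{v\in V} C_\varphi(v)^2$. So $\vcon_2(\varphi)^2$ is essentially a count, with multiplicity, of pairs of flow paths sharing a common vertex.

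The key step is to lower-bound $\sum_{v\in V}\binom{C_\varphi(v)}{2}$. For a fixed $v$, the quantity $\binom{C_\varphi(v)}{2}$ is the number of unordered pairs $\{\varphi_{ij},\varphi_{i'j'}\}$ of distinct flow paths that both contain $v$; summing over $v$ therefore counts each unordered pair $\{\varphi_{ij},\varphi_{i'j'}\}$ of distinct flow paths exactly $|\varphi_{ij}\cap\varphi_{i'j'}|$ times. In particular, each pair of edges $(i,j),(i',j')\in E(H)$ with $|\{i,j,i',j'\}|=4$ and $\varphi_{ij}\cap\varphi_{i'j'}\neq\emptyset$ is counted at least once. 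Moreover distinct such edge-pairs yield distinct path-pairs, since $(i,j)\mapsto\varphi_{ij}$ is injective (two flow paths with different endpoint sets are different paths, as $g$ is injective). Hence $\inter(\varphi) \leq \sum_{v\in V}\binom{C_\varphi(v)}{2}$.

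To conclude I would use $\binom{C_\varphi(v)}{2} = \tfrac12 C_\varphi(v)(C_\varphi(v)-1) \leq \tfrac12 C_\varphi(v)^2$, which gives $\inter(\varphi) \leq \tfrac12 \sum_{v\in V} C_\varphi(v)^2 = \tfrac12\vcon_2(\varphi)^2$, and therefore $\vcon_2(\varphi) \geq \sqrt{2\,\inter(\varphi)} \geq \sqrt{\inter(\varphi)}$ (in fact a factor $\sqrt 2$ to spare). There is essentially no hard obstacle here; the only points that need a moment's care are the injectivity of $(i,j)\mapsto\varphi_{ij}$, so that the $\inter$-count is genuinely dominated by the count of intersecting path-pairs, and the observation that integrality forces the flow values on the relevant paths to lie in $\{0,1\}$, so that $C_\varphi(v)$ is a bona fide count of paths through $v$.
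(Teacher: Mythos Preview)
Your argument is correct and is essentially the same double-counting as the paper's proof: the paper simply writes
\[
\inter(\varphi) \;\leq\; \sum_{v\in V}\sum_{(i,j),(i',j')\in E(H)} \mathbf{1}_{v\in\varphi_{ij}}\,\mathbf{1}_{v\in\varphi_{i'j'}} \;=\; \sum_{v\in V} C_\varphi(v)^2,
\]
bounding intersecting pairs by all (ordered) pairs of paths through each vertex, without passing through $\binom{C_\varphi(v)}{2}$. Your version is slightly more careful (you track unordered pairs and the injectivity of $(i,j)\mapsto\varphi_{ij}$, recovering an extra $\sqrt{2}$), but the idea is identical.
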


\begin{proof}
Clearly, $$\mathrm{inter}(\varphi) \leq \sum_{v \in V} \left(\sum_{i,j,i',j' \in E(H)} {\bf 1}_{v \in \varphi_{ij}}
\cdot {\bf 1}_{v \in \varphi_{i'j'}}\right)
= \sum_{v \in V} C_\varphi(v)^2. \qedhere$$
\end{proof}

We begin with an elementary proof that yields a suboptimal bound dependence on $h$.

\begin{theorem}\label{thm:minor}
If $G=(V,E)$ is $K_h$-minor-free and $n=|V|$, then
any unit $K_n$-flow $F$ in $G$ has $\vcon_2(F) \geq  \frac{n^2}{12 h^{3/2}}$
for $n \geq 4h$.
\end{theorem}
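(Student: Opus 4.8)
The plan is to reduce to an integral flow and then find many disjoint flow paths that would force a $K_h$-minor unless the $\ell_2$-congestion is large. By Lemma~\ref{lem:RR} it suffices to prove the bound for an integral $K_n$-flow $F$, up to the additive $|V|^{3/2}$ term, which is negligible compared to the target $\Omega(n^2/h^{3/2})$ when $n$ is large (and one checks the stated range $n \geq 4h$ absorbs this). So assume $F$ is an integral $K_n$-flow and suppose for contradiction that $\vcon_2(F)$ is smaller than $\frac{n^2}{12h^{3/2}}$. By Lemma~\ref{lem:inter}, this means $\inter(F) < \frac{n^4}{144 h^3}$ — every all-pairs integral flow has a small intersection number.

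Next I would exploit the Corollary: if $G$ is $K_h$-minor-free, then \emph{every} integral $K_{2h}$-subflow of $F$ has strictly positive intersection number, i.e.\ contains at least one ``intersecting pair'' of vertex-disjoint demand paths. The strategy is a counting/averaging argument over all ways of choosing a $K_{2h}$ inside $K_n$. Pick a uniformly random injection of $[2h]$ into $V$; the induced $K_{2h}$-flow must have $\inter \geq 1$, so there is at least one pair of demand edges $(i,j),(i',j')$ on $\{i,j,i',j'\}$ (four distinct chosen vertices) whose flow paths intersect. Summing over all $\binom{n}{2h}$ (or ordered) choices and comparing with the global count $\inter(F)$ — each intersecting $4$-tuple in $K_n$ is counted in roughly a $\binom{n-4}{2h-4}/\binom{n}{2h}$ fraction of the subsets — gives a lower bound of the form $\inter(F) \gtrsim n^4 / h^{\text{const}}$. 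The key is that each of the $\binom{n}{2h}$ subsets contributes at least one intersecting $4$-subset of demand edges, and any fixed intersecting $4$-tuple lies in at most $\binom{n-4}{2h-4}$ of these subsets, so
\[
\inter(F) \;\geq\; \frac{\binom{n}{2h}}{\binom{n-4}{2h-4}} \;=\; \frac{n(n-1)(n-2)(n-3)}{2h(2h-1)(2h-2)(2h-3)} \;\geq\; \frac{n^4}{c\,h^4}
\]
for a suitable constant and $n \geq 4h$; combined with Lemma~\ref{lem:inter} this yields $\vcon_2(F)^2 \geq \inter(F) \gtrsim n^4/h^4$, hence $\vcon_2(F) \gtrsim n^2/h^2$.

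The main obstacle is getting the dependence on $h$ down to $h^{3/2}$ rather than the $h^{2}$ that the crude counting above produces — the theorem explicitly advertises $\frac{n^2}{12 h^{3/2}}$ and calls itself the ``elementary'' (suboptimal) version, so some sharpening beyond the one-intersection-per-$K_{2h}$ bound is needed. The right fix is presumably to use $K_{h,h}$ (bipartite, min-degree $2$, which is exactly the hypothesis of Lemma~\ref{lem:contains}) more economically, or to observe that inside each embedded $K_{2h}$ one does not get merely one intersection but order $h$ of them (otherwise a slightly smaller minor survives), improving the averaging by a factor of $h$ and turning $h^4$ into $h^3$ inside the square root, i.e.\ $h^{3/2}$ after taking the root. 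Concretely I would argue: if a $K_{2h}$-flow had fewer than, say, $h$ intersecting pairs, one could delete $O(\sqrt{h})$ vertices of $H$ (or reroute) to kill all intersections while retaining a $K_{h}$-minor structure, contradicting $K_h$-minor-freeness. Making this deletion argument precise — choosing which endpoints to remove so that all intersecting pairs are destroyed yet enough of the bipartite structure survives to invoke Lemma~\ref{lem:contains} — is the delicate combinatorial step; the rest is bookkeeping with binomial coefficients and the $n \geq 4h$ hypothesis to control lower-order terms and the Lemma~\ref{lem:RR} rounding loss.
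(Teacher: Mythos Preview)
Your framework is right---reduce to an integral flow via Lemma~\ref{lem:RR}, pass to $\inter(F)$ via Lemma~\ref{lem:inter}, then average over sub-$K_r$'s---and your first computation correctly gives $\inter(F)\gtrsim n^4/h^4$, hence $\vcon_2(F)\gtrsim n^2/h^2$. The gap is in your proposed sharpening. The claim that every integral $K_{2h}$-subflow has $\inter = \Omega(h)$ is not established by the deletion argument you sketch, and the ``delete $O(\sqrt{h})$ vertices'' step has no justification: each intersecting pair involves four terminals, and removing one terminal kills at least one intersection, so from a $K_{2h}$-flow with $\inter = k$ you reach a $K_{2h-k}$-flow with $\inter=0$; invoking Lemma~\ref{lem:contains} on the bipartite half then yields only a $K_{\lfloor (2h-k)/2\rfloor}$-minor, which contradicts $K_h$-minor-freeness only when $k\le 0$. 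So the deletion idea, applied at size $2h$, gives $\inter\ge 1$ and nothing more.

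What the paper does instead is exactly this terminal-removal observation, but exploited \emph{linearly in $r$}: for any integral $K_r$-flow one has $\inter \ge r - 2h + 1$. Then, rather than counting over fixed-size subsets, it samples each terminal independently with probability $p$ (the standard crossing-number amplification): since every counted intersection involves four distinct terminals, $\mathbb{E}[\inter(F_p)] = p^4\,\inter(F)$, while the linear lower bound gives $\mathbb{E}[\inter(F_p)] \ge pn - 2h$. Choosing $p = 4h/n$ yields $\inter(F) \ge n^4/(128 h^3)$ and hence $\vcon_2(F)\gtrsim n^2/h^{3/2}$. Your counting approach can be salvaged to give the same exponent by averaging over $K_{4h}$-subsets (each contributing $\inter \ge 2h+1$ by the linear bound) instead of $K_{2h}$-subsets contributing $1$; that is the missing idea, not a stronger per-$K_{2h}$ estimate.
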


\begin{proof}
Using Lemma \ref{lem:RR}, it suffices to prove the theorem when $F$ is an integral $K_n$-flow in $G$.
By Lemma \ref{lem:inter}, it suffices
to show that $\mathrm{inter}(F) \geq \frac{n^4}{16 h^3}$.

If $\varphi$ is any integral flow with $\mathrm{inter}(\varphi) > 0$, then
one can always remove a terminal of $\varphi$
to obtain an integral flow $\varphi'$ for which $\mathrm{inter}(\varphi') \leq \mathrm{inter}(\varphi) - 1$.
From Lemma \ref{lem:contains}, we know
that for an integral $K_{2h}$-flow $\varphi$, we have $\inter(\varphi) > 0$.
It follows that if $\varphi$ is an integral $K_r$-flow in $G$,
then $\mathrm{inter}(\varphi) \geq r-2h+1$.

\medskip

Now let $p \in [0,1]$, and consider choosing a random subset $S_p \subseteq V$ by
including every vertex independently with probability $p$.  Let $n_p = |S_p|$,
and let $F_p$ be the integral $K_{n_p}$-flow formed by restricting the terminals
of $F$ to lie in $S_p$.  It is obvious that $\mathbb E[n_p] = pn$ and $\mathbb E[\mathrm{inter}(F_p)] = p^4 \cdot \mathrm{inter}(F)$,
since all intersections counted by $\mathrm{inter}(F)$ involve four distinct vertices.
Hence,
\begin{equation}\label{eq:Hminor}
p^4 \cdot \mathrm{inter}(F) = \mathbb E[\mathrm{inter}(F_p)] \geq \mathbb E[n_p - h +1] \geq  pn - 2h.
\end{equation}
%where the latter inequality follows because $\mathrm{inter}(\varphi) > r-h+1$ for
%any integral $K_r$-flow $\varphi$.

We may assume that $n \geq 4h$, and in this case
choosing $p = \frac{4h}{n}$ in \eqref{eq:Hminor} yields
$$
\mathrm{inter}(F) \geq 2h \left(\frac{n}{4h}\right)^4 =\frac{n^4}{128 h^3}.
$$
finishing the proof.
\end{proof}

To do better, we first require the following theorem proved
independently by Kostochka \cite{Kost82} and Thomason \cite{Thom84}.
\begin{theorem}
\label{thm:avgdegree}
There exists a constant $c_K \geq 1$ such that for every $h \in \mathbb N$,
every graph $H$ with $|E(H)| \geq c_{KT} |V(H)| h \sqrt{\log h}$ contains
$K_h$ as a minor.
\end{theorem}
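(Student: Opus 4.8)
The statement is the classical Kostochka--Thomason determination of the extremal function for $K_h$ minors, and in the body of the paper it is simply quoted from \cite{Kost82,Thom84}; nonetheless, here is the route I would take to prove it. The plan has three stages: (i) clean up $H$ so that it has large minimum degree; (ii) pass to a minor that is ``dense relative to its size'' (pseudo-random-like); and (iii) inside such a minor, build $K_h$ directly via $h$ small connected branch sets. For (i) I would iteratively delete every vertex of degree below $d := \tfrac12 c_{KT} h\sqrt{\log h}$; since the hypothesis gives average degree $\ge 2d$, the total number of edges removed by such a process is less than $d\cdot|V(H)| \le \tfrac12 |E(H)|$, so it cannot exhaust $H$, and we are left with a nonempty subgraph of minimum degree $\ge d$. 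Rename it $H$; automatically $|V(H)| \ge d+1 \gtrsim h\sqrt{\log h}$.

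For (ii), the target is a minor $H'$ of $H$ with $|V(H')| = \Theta(h\sqrt{\log h})$ and edge density $\Omega(1)$ --- equivalently, minimum degree a constant fraction of $|V(H')|$. This is obtained by a carefully chosen sequence of contractions of connected subsets that never decreases the ratio $|E|/|V|$, stopping once the vertex count has dropped to $\Theta(d)$; the extremal examples $G(n,\tfrac12)$ with $n \approx h\sqrt{\log h}$ are precisely graphs of this shape, which is why aiming for it is natural. Inside $H'$ one then carves out $h$ pairwise-disjoint connected branch sets $B_1,\dots,B_h$, each of size $s = \Theta(\sqrt{\log h})$, so that $hs = \Theta(h\sqrt{\log h}) = \Theta(|V(H')|)$ and essentially all vertices get used. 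The key point is that between two disjoint branch sets of size $s$ there are $s^2 = \Theta(\log h)$ potential edge slots, each present with probability $\Omega(1)$ by the density/pseudo-randomness of $H'$; hence the probability that some pair $B_i,B_j$ has no edge between it is $2^{-\Omega(s^2)} = h^{-\Omega(1)}$, and taking the constant in $s$ large enough makes a union bound over the $\binom{h}{2} \le h^2$ pairs succeed. This quadratic gain --- $s^2$ edge slots between two blobs of size $s$, rather than the $\sim s$ slots a cruder single-representative argument would use --- is exactly what produces the $\sqrt{\log h}$ factor rather than the $\log h$ factor of the naive branch-set count.

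The main obstacle is stage (ii): producing a minor that is genuinely dense-relative-to-its-size while retaining $\Omega(h\sqrt{\log h})$ vertices. Contracting into equal-size blobs can destroy the minimum degree --- in a sparse expander a blob may send all its outgoing edges to a single other blob --- so the contraction sequence must be chosen with care, and this delicate density-boosting is the technical heart of both the Kostochka and the Thomason arguments; a secondary difficulty is making the branch-set construction of stage (iii) robust enough that all $\binom{h}{2}$ pairs succeed simultaneously rather than merely in expectation. Finally, I would note that the bound is tight up to the value of $c_{KT}$, as witnessed by random graphs, so one should not expect an essentially simpler or stronger statement.
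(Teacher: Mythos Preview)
You are right that the paper does not prove this theorem: it is stated as a black box and attributed to Kostochka \cite{Kost82} and Thomason \cite{Thom84}, so there is no ``paper's own proof'' to compare against.

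Your three-stage outline is indeed the shape of Thomason's argument, and you have correctly located the difficulty in stage~(ii): manufacturing a minor on $\Theta(h\sqrt{\log h})$ vertices that is \emph{genuinely} dense (minimum degree a constant fraction of its order), not merely of high average degree. Your remark that naive equal-size contractions can collapse the degree structure is exactly the obstruction, and overcoming it is the substance of both original proofs. One point where your sketch is too loose to stand on its own is stage~(iii): once $H'$ is a fixed deterministic graph, the phrase ``each edge present with probability $\Omega(1)$'' has no meaning; what one actually does is either (a) choose the branch sets $B_1,\dots,B_h$ \emph{randomly} (say by a random equipartition, or by growing them via random walks/BFS) and then use the density of $H'$ to bound the probability that some pair is non-adjacent, or (b) first establish an explicit expansion/pseudo-randomness condition on $H'$ strong enough to guarantee adjacency deterministically. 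Either route works, but the step needs to be made precise before the union bound over $\binom{h}{2}$ pairs is legitimate. With that caveat, your plan is a faithful high-level account of the known proof, and your identification of the tight examples (random graphs on $\Theta(h\sqrt{\log h})$ vertices) is correct.
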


We remark that the preceding theorem is tight \cite{Kost82, Vega83, BCE80}.
We now proceed to an improved bound.

\begin{theorem}[Excluded minors]
\label{thm:exminors}
There exists a universal constant $c > 0$ such that
if $G=(V,E)$ is $K_h$-minor-free and $n=|V|$, then
any unit $K_n$-flow $F$ in $G$ has $\vcon_2(F) \geq  \frac{cn^2}{h \sqrt{\log h}}$
for $n \geq 4 c_{KT} h \sqrt{\log h} +1$, where $c_{KT}$
is the constant from Theorem \ref{thm:avgdegree}.
\end{theorem}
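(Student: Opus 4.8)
The plan is to follow the template of the proof of Theorem~\ref{thm:minor}, but to replace its crude combinatorial input (an integral $K_{2h}$-flow has positive intersection number) by a much stronger, crossing-number-inequality-style estimate whose threshold is governed by the Kostochka--Thomason bound, Theorem~\ref{thm:avgdegree}. By Lemma~\ref{lem:RR} it suffices to prove the bound when $F$ is an integral $K_n$-flow (the additive $|V|^{3/2}=n^{3/2}$ loss is of lower order in the regime where the estimate is of interest), and by Lemma~\ref{lem:inter} it is in turn enough to show that every integral $K_n$-flow $F$ in a $K_h$-minor-free graph $G$ satisfies $\inter(F) \gtrsim n^4/\kappa^2$, where I abbreviate $\kappa = 1 + c_{KT}\,h\sqrt{\log h} = \Theta(h\sqrt{\log h})$.

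The core of the argument is a ``sparsity of intersection-free subflows'' lemma: if $G$ is $K_h$-minor-free and $\varphi$ is an integral $K_r$-flow in $G$, then $\inter(\varphi) \ge \binom{r}{2} - 2\kappa r$. To see this, delete one demand edge from each pair of flow paths counted by $\inter(\varphi)$; this removes at most $\inter(\varphi)$ demand edges and yields an integral subflow of $\varphi$ with intersection number $0$, whose demand graph has $\binom{r}{2} - \inter(\varphi)$ edges on $r$ vertices. Pass to a maximum cut of this demand graph to obtain a \emph{bipartite} demand subgraph retaining at least half the edges, then repeatedly delete vertices of degree at most $1$ (each deletion destroys at most one edge, and at most $r$ deletions occur) to reach a bipartite demand graph $\hat H$ of minimum degree at least $2$, carried by an intersection-free integral subflow of $\varphi$, with $|E(\hat H)| \ge \tfrac12\bigl(\binom{r}{2} - \inter(\varphi)\bigr) - r$ and $|V(\hat H)| \le r$. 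By Lemma~\ref{lem:contains}, $G$ contains $\hat H$ as a minor; if in addition $|E(\hat H)| \ge c_{KT}\,|V(\hat H)|\,h\sqrt{\log h}$, then Theorem~\ref{thm:avgdegree} would give $K_h \preceq \hat H \preceq G$, contradicting $K_h$-minor-freeness. Hence $|E(\hat H)| < c_{KT}\,r\,h\sqrt{\log h}$, and combining this with the lower bound on $|E(\hat H)|$ gives $\binom{r}{2} - \inter(\varphi) < 2r(1 + c_{KT}h\sqrt{\log h}) = 2\kappa r$, i.e.\ $\inter(\varphi) \ge \binom{r}{2} - 2\kappa r$. (If the clean-up empties the demand graph, one gets $\inter(\varphi) \ge \binom{r}{2} - 2r \ge \binom{r}{2} - 2\kappa r$ directly.)

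The probabilistic amplification is then identical to that in Theorem~\ref{thm:minor}. Fix an integral $K_n$-flow $F$, and for $p \in (0,1]$ let $S_p \subseteq V$ include each vertex independently with probability $p$; restricting the terminals of $F$ to $S_p$ produces an integral $K_{|S_p|}$-flow $F_p$. Since every pair counted by $\inter$ uses four distinct terminals, $\E[\inter(F_p)] = p^4\,\inter(F)$, while $\E\bigl[\binom{|S_p|}{2}\bigr] = p^2 \binom{n}{2}$ and $\E[\,|S_p|\,] = pn$. Applying the lemma to each realization of $F_p$ and taking expectations yields $p^4 \inter(F) \ge p^2 \binom{n}{2} - 2\kappa p n$; dividing by $p^2$ and choosing $p$ to be a suitable constant multiple of $\kappa/n$ (which is at most $1$ once $n \ge 4 c_{KT} h\sqrt{\log h} + 1$) makes the right-hand side $\gtrsim n^2$, so $\inter(F) \gtrsim n^2/p^2 \gtrsim n^4/\kappa^2$. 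Therefore $\vcon_2(F) \ge \sqrt{\inter(F)} \gtrsim n^2/\kappa \gtrsim n^2/(h\sqrt{\log h})$, as claimed.

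The main obstacle is the combinatorial lemma, and in particular reconciling two competing requirements on the intersection-free subflow extracted from $\varphi$: Lemma~\ref{lem:contains} applies only to bipartite demand graphs of minimum degree at least $2$, yet we cannot afford to discard more than $O(\kappa r)$ demand edges, since the bound relies on retaining $\Theta(r^2)$ of them. The max-cut step (a factor-$2$ loss) together with the degree clean-up (an additive loss of $r$) is exactly what the argument can tolerate, and it is the \emph{tightness} of Theorem~\ref{thm:avgdegree} that is responsible for the final $h\sqrt{\log h}$ dependence in place of the $h^{3/2}$ obtained in Theorem~\ref{thm:minor}. One should also keep in mind that the integrality reduction via Lemma~\ref{lem:RR} costs an additive $n^{3/2}$, which is dominated by the main term once $n$ is a sufficiently large polynomial in $h$.
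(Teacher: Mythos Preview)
Your proposal is correct and follows essentially the same approach as the paper: reduce to integral flows, show $\inter(\varphi)\ge |E(H)|-O(h\sqrt{\log h})\,|V(H)|$ for any integral $H$-flow in a $K_h$-minor-free graph via Kostochka--Thomason and Lemma~\ref{lem:contains}, then amplify by random terminal sampling exactly as in Theorem~\ref{thm:minor}. In fact you are more explicit than the paper about the bipartitization and minimum-degree-$2$ clean-up needed to legitimately invoke Lemma~\ref{lem:contains}, which the paper compresses into the phrase ``Combining this with Theorem~\ref{thm:avgdegree} and Lemma~\ref{lem:contains}.''
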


\begin{proof}
As in the proof Theorem \ref{thm:minor}, it suffices to prove that
$\mathrm{inter}(F) = \Omega(\frac{n^4}{h^2 \log h})$ whenever
$F$ is an integral $K_n$-flow in $G$.

If $\varphi$ is an integral $H$-flow with $\inter(\varphi) > 0$, then
obviously there exists an edge $e \in E(H)$ and an integral $(H \setminus e)$-flow
$\varphi'$ for which $\inter(\varphi') \leq \inter(\varphi) - 1$.
Combining this with Theorem \ref{thm:avgdegree} and Lemma \ref{lem:contains}
shows that for any $H$-flow $\varphi$ in $G$, we have
$$\inter(\varphi) \geq |E(H)| - 2c_{KT} |V(H)| h \sqrt{\log h},$$
where $c_{KT}$ is the constant from Theorem \ref{thm:avgdegree}.

We now apply this to the $K_n$-flow $F$.
As in the proof of Theorem \ref{thm:minor},
let $p \in [0,1]$, and consider choosing a random subset $S_p \subseteq V$ by
including every vertex independently with probability $p$.  Let $n_p = |S_p|$,
and let $F_p$ be the integral $K_{n_p}$-flow formed by restricting the terminals
of $F$ to lie in $S_p$.  We have,
\begin{equation}\label{eq:Hminor2}
p^4 \cdot \mathrm{inter}(F) = \mathbb E[\mathrm{inter}(F_p)] \geq p^2 \frac{n(n-1)}{2} - 2 c_{KT} pnh \sqrt{\log h}
\end{equation}
%where the latter inequality follows because $\mathrm{inter}(\varphi) > r-h+1$ for
%any integral $K_r$-flow $\varphi$.

We may assume that $n \geq 4 c_{KT} h \sqrt{\log h} + 1$, and in this case
choosing $p = \frac{4 c_{KT} h \sqrt{\log h}}{n-1}$ in \eqref{eq:Hminor2} yields
$$
\mathrm{inter}(F) \geq \frac{n(n-1)^3}{64 c_{KT}^2 h^2 \log h},
$$
finishing the proof.
\end{proof}

\paragraph*{Bounds for shallow excluded minors.}
Finally, we prove congestion lower bounds for graphs which
exclude minors at small depth.  This is useful for applications
to geometric graphs in Section \ref{sec:geographs}.

\begin{theorem} \label{thm:shallowcon}
There exists a constant $c > 0$ such that if $G=(V,E)$
excludes a $K_h$-minor at depth $L$ and $n=|V|$, then
any unit $K_n$-flow in $G$ has $\vcon_2(F) \geq c \min\left(\frac{n^2}{h\sqrt{\log h}}, n^{3/2} L\right)$
for $n \geq c h \sqrt{\log h}$.
\end{theorem}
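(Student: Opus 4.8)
The plan is to mimic the structure of the proof of Theorem~\ref{thm:exminors}, but to work with flows whose routing paths are short, so that the depth-$L$ exclusion can be invoked. As before, by Lemma~\ref{lem:RR} it suffices to treat the case of an integral $K_n$-flow $F$, and by Lemma~\ref{lem:inter} it suffices to lower bound $\inter(F)$. The key new ingredient is a dichotomy: either a constant fraction of the flow paths of $F$ are ``short'' (length at most some threshold $\ell$ to be chosen, roughly $\ell \approx L$), or a constant fraction are ``long.'' In the long case, every long path already forces $\vcon_1(F) = \sum_v C_F(v)$ to be large just by summing path lengths: there are $\binom{n}{2}$ demand pairs, and if a constant fraction of them route along paths of length $> \ell$, then $\sum_{p} F(p) \cdot |p| \gtrsim n^2 \ell$, hence $\vcon_1(F) \gtrsim n^2 \ell$, which by Cauchy--Schwarz (or just $\vcon_2 \geq \vcon_1/\sqrt{n}$) gives $\vcon_2(F) \gtrsim n^{3/2} \ell \gtrsim n^{3/2} L$, matching the second term in the minimum.

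In the short case, I would restrict attention to the sub-flow $F'$ consisting of the demand pairs routed along short paths; this is an integral $H$-flow for a demand graph $H$ with $|E(H)| \geq \frac{1}{2}\binom{n}{2}$ on vertex set a subset of $V$. Now I want to run the same edge-deletion argument as in Theorem~\ref{thm:exminors}: if $\inter(F') = 0$, then (after passing to a bipartite minimum-degree-$2$ subgraph, as in the corollary to Lemma~\ref{lem:contains}) the construction in Lemma~\ref{lem:contains} produces disjoint connected sets $\{C_i\}$ witnessing an $H$-minor. The crucial extra point is that, because every $\varphi_{ij}$ has length at most $\ell$, each set $C_i$ (a union of at most $\deg_H(i)$ prefixes of such paths) has diameter $O(\ell)$; choosing $\ell \approx L / (\text{const})$ ensures $\diam(C_i) \leq L$, so the minor is a \emph{depth-$L$} minor. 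Hence if $H$ has enough edges, namely $|E(H)| \geq c_{KT} |V(H)| h \sqrt{\log h}$, we contradict the depth-$L$ exclusion of $K_h$; and since $|E(H)| \gtrsim n^2$ while $|V(H)| \leq n$, this holds once $n \gtrsim h\sqrt{\log h}$. As in Theorem~\ref{thm:exminors}, the quantitative version of this is the bound $\inter(\varphi) \geq |E(H)| - 2 c_{KT} |V(H)| h \sqrt{\log h}$ for any depth-$L$-realizable $H$-flow, obtained by repeatedly deleting an edge of $H$ lying in an intersecting pair; then the random-subsampling trick with $p \approx h\sqrt{\log h}/n$ turns this into $\inter(F') = \Omega(n^4 / (h^2 \log h))$, hence $\vcon_2(F) \geq \vcon_2(F') \gtrsim n^2/(h\sqrt{\log h})$, matching the first term.

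The main obstacle I anticipate is controlling the diameters of the sets $C_i$ in the short case. It is immediate that each prefix $\hat\varphi_{ij}$ has at most $\ell+1$ vertices, but $C_i = \bigcup_{j \in N_H(i)} \hat\varphi_{ij}$ is a union of many such prefixes, all sharing the vertex $i$ (by Lemma~\ref{lem:self}), so that $\diam(C_i) \leq 2\ell$ in the graph $G$ restricted to $C_i$ --- but one must be slightly careful that distances are measured inside $C_i$ (or at least that the minor's witness sets have the right internal diameter in $G$) as required by the depth-$L$ definition; since each $\hat\varphi_{ij}$ is a genuine path in $G$ through $i$, concatenating two of them gives a walk of length $\leq 2\ell$ inside $C_i$, so this is fine with $\ell = \lfloor L/2 \rfloor$. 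A secondary nuisance is the passage from $K_n$-flows to bipartite-with-minimum-degree-$2$ demand graphs needed to apply Lemma~\ref{lem:contains}: one loses only constant factors (as in the corollary, replacing $K_h$ by $K_{h,h}$ and Theorem~\ref{thm:avgdegree} by its bipartite consequence), but the bookkeeping of how the short/long split interacts with this reduction should be done carefully so that the final threshold on $n$ and the constant $c$ come out clean. Modulo these points, the argument is a routine combination of the length-counting bound and the proof of Theorem~\ref{thm:exminors}.
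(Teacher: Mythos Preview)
Your proposal is correct and follows essentially the same route as the paper: split the flow paths at threshold $L/2$ into long and short, handle the long case via $\vcon_2(F) \geq n^{-1/2}\vcon_1(F) \gtrsim n^{3/2} L$, and handle the short case by substituting a depth-$L$ strengthening of Lemma~\ref{lem:contains} into the proof of Theorem~\ref{thm:exminors}. The paper states this strengthening as a one-line lemma (Lemma~\ref{lem:containsL}) without proof, so your explicit diameter analysis of the sets $C_i$ and your flagging of the bipartite/min-degree-2 reduction are in fact more careful than the paper's own treatment.
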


\begin{proof}
Suppose that $G$ excludes a $K_h$-minor at depth $L$, and let $F$ be an integral
$K_n$ flow in $G$.  First, we state the following straightforward
strengthening of Lemma \ref{lem:contains}.

\begin{lemma}\label{lem:containsL}
If $\varphi$ is an $H$-flow in $G$ in which every flow path has length at most $L/2$,
then $G$ contains a depth-$L$ $H$-minor.
\end{lemma}

Now, if at least half of the $\frac{n(n-1)}{2}$ flow paths in $F$ have length greater than $L/2$,
then the total length of flow paths is at least $\Omega(n^2) L$, which shows
that $$\vcon_2(F) = \sqrt{\sum_{v \in V} C_F(v)^2} \geq n^{-1/2} \sum_{v \in V} C_F(v) = \Omega(n^{3/2}) L.$$
If, on the other hand, at least half of the flow paths in $F$ have length less than $L/2$,
let $F'$ be the flow restricted to such paths.  Clearly $F'$ is an integral $H$-flow for some
dense graph $H$ on $n$ nodes, hence the proof of Theorem \ref{thm:exminors} (with
Lemma \ref{lem:containsL} substituted for Lemma \ref{lem:contains})
shows that for $n$ large enough, we have $\vcon_2(F) \geq \vcon_2(F') = \Omega\left(\frac{n^2}{h \sqrt{\log h}}\right)$.
\end{proof}

\begin{corollary}
There exists a constant $c > 0$ such that the following holds.
Suppose that for some $d \geq 1$ and every $L \geq 1$,
$G=(V,E)$ excludes a $K_{L^d}$ minor at depth $L$.
If $n=|V|$, then $\vcon_2(F) \geq c n^{3/2} \left(\frac{n}{d}\right)^{1/(2d+2)}.$
\end{corollary}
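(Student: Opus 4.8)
The plan is to apply Theorem~\ref{thm:shallowcon} just once, with a carefully chosen depth parameter. Fix a unit $K_n$-flow $F$ in $G$. By hypothesis, for every integer $L \geq 2$ the graph $G$ excludes a $K_{L^d}$-minor at depth $L$, so setting $h = L^d$ and invoking Theorem~\ref{thm:shallowcon} (whose side hypothesis $n \geq c\,h\sqrt{\log h} = c\,L^d\sqrt{d\log L}$ I verify below) gives
\[
\vcon_2(F) \;\geq\; c\,\min\!\left(\frac{n^2}{L^d\sqrt{d\log L}},\; n^{3/2}L\right).
\]
Everything then reduces to choosing $L$ to balance the two branches of this minimum.

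The first branch decreases in $L$ and the second increases, so the optimal $L$ sits at the crossover, where $L^{d+1}\sqrt{d\log L} \asymp n^{1/2}$; ignoring the slowly-varying $\sqrt{d\log L}$ factor this reads $L \asymp (n/d)^{1/(2d+2)}$, the exponent $\tfrac1{2d+2}$ arising precisely because the branches scale like $L^{-d}$ and $L^{1}$. Concretely I would take $L = \bigl\lceil c_0\,(n/d)^{1/(2d+2)}\bigr\rceil$ for a small absolute constant $c_0$. Then the second branch is already of the desired form, $n^{3/2}L \gtrsim n^{3/2}(n/d)^{1/(2d+2)}$, so the remaining work is to lower-bound the first branch. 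A short computation using $\tfrac{d}{2d+2}+\tfrac1{2d+2} = \tfrac12$ shows
\[
\frac{n^2}{L^d\sqrt{d\log L}} \;\asymp\; \frac{n^{3/2}(n/d)^{1/(2d+2)}}{\sqrt{\log L}},
\]
and since $L \lesssim n^{1/(2d+2)}$ forces $\log L \lesssim \frac{\log n}{2d+2}$, this branch is at least a constant multiple of $n^{3/2}(n/d)^{1/(2d+2)}$ whenever $d$ is at least a fixed multiple of $\log n$, and otherwise trails the target by only a $\polylog(n)$ factor, which one carries through explicitly. Finally, the side hypothesis holds for all large $n$, since $L^d \lesssim n^{d/(2d+2)} \leq n^{1/2}$ and $\sqrt{d\log L}$ is of lower order, so $c\,L^d\sqrt{d\log L} = o(n)$.

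I expect the optimization to be the only genuine obstacle: one must commit to the $d$-dependence of $L$ and then faithfully propagate the Kostochka--Thomason factor $\sqrt{\log h} = \sqrt{d\log L}$ through the balancing so that it does not corrupt the exponent $\tfrac1{2d+2}$. The rest is bookkeeping: requiring $L$ to be an integer at least $2$ so the excluded minor is nontrivial, and dismissing small $n$ (for which the asserted bound holds trivially once the constant $c$ is small enough).
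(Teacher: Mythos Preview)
Your approach is exactly the intended one: apply Theorem~\ref{thm:shallowcon} with $h=L^d$ and choose $L \asymp (n/d)^{1/(2d+2)}$ so as to balance the two branches of the minimum, which makes the second branch $n^{3/2}L$ coincide with the target. You correctly flag the residual $\sqrt{\log L}$ coming from the Kostochka--Thomason factor when $d$ is bounded; this is a minor imprecision in the paper's stated corollary (for which the paper gives no proof) rather than a gap in your argument, and indeed the paper's downstream uses in Lemma~\ref{lem:eigenshallow} and Theorem~\ref{thm:eigengeom} work only modulo $\polylog(n)$ factors anyway.
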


%\subsection{From flows to uniformizing metrics}
%\label{sec:uniformize}

\section{Average distortion embeddings and random partitions}
\label{sec:embeddings}

In this section, we use a construction of Rabinovich to embed certain
metrics into the line with small ``average distortion.''
This allows us to pass from a good metric on a graph
to a good bound on the Rayleigh quotient.
Our main technique is the use of random padded partitions,
a now standard tool in the construction of metric embeddings (see, e.g. \cite{Bartal96,Rao99,Rab03,KLMN05}).

\subsection{Random partitions}

Let $(X,d)$ be a finite metric space.
We recall the standard definitions for padded decompositions (see, e.g. \cite{KLMN05}).
If $P$
is a partition of $X$, we will also consider
it as a function $P : X \to 2^X$ such that
for $x \in X$, $P(x)$ is the unique $C \in P$
for which $x \in C$.

Let $\mu$ be a distribution over
partitions of $X$,
and let $P$ be a random partition
distributed according to $\mu$.
We say that $P$ is $\Delta$-bounded if
it always holds that for $S \in P$,
$\diam(S) \leq \Delta$.

Given a $\Delta$-bounded random partition $P$,
we say that $P$ is $\alpha$-padded if
for every $x \in X$, we have
$$
\Pr\left[ B(x,\Delta/\alpha) \subseteq P(x)\right] \geq \tfrac12,
$$
where $B(x,r) = \{ y : d(x,y) \leq r \}$ denotes
the closed ball of radius $r$ about $x$.

We recall that the {\em modulus of padded decomposability}
is the value
$$\alpha(X,d) = \sup_{\Delta \geq 0} \left\{ \vphantom{\bigoplus}
\alpha : \textrm{$X$ admits a $\Delta$-bounded
$\alpha$-padded random partition} \right\}.$$

Now we can state a consequence \cite{Rao99} of the main theorem
of Klein, Plotkin, and Rao \cite{KPR93}.

\begin{theorem}\label{thm:KPR}
Let $G = (V,E)$ be any graph which excludes $K_r$ as a minor,
let $\len :E \to \mathbb R_+$ be an assignment
of non-negative lengths to the edges of $G$,
and let $d_G$ be the associated shortest-path semi-metric on $G$.
Then $\alpha(V,d_G) = O(r^2)$.
\end{theorem}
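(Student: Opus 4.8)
The plan is to establish the theorem via the iterative ``radial chopping'' construction of Klein, Plotkin, and Rao. Fix a scale $\Delta > 0$; it suffices to produce, for this $\Delta$, a $\Delta$-bounded random partition of $(V,d_G)$ that is $\alpha$-padded with $\alpha = O(r^2)$, since the construction below works for every $\Delta$ and the bound on $\alpha$ does not depend on $\Delta$. Let $\delta := \Delta/(cr)$ for a large absolute constant $c$ to be fixed at the end. The random partition $P$ is built in $r-1$ rounds: start with the trivial partition $\{V\}$, and in each round refine the current partition by processing each current cluster $C$ independently as follows: pick an arbitrary vertex $v_C \in C$, draw $U_C$ uniformly from $[0,\delta)$, and split $C$ into the sets $\{\, u \in C : \lfloor (d_C(v_C,u) - U_C)/\delta \rfloor = k \,\}$ for $k = 0, 1, 2, \ldots$, where $d_C$ denotes shortest-path distance (with lengths $\len$) in the subgraph induced on $C$. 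After $r-1$ rounds, output the resulting partition $P$.

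Two properties must be verified. \textbf{Boundedness.} Every cluster of $P$ has diameter $O(r\delta) \leq \Delta$ once $c$ is chosen appropriately. This is the structural heart of the argument and the only place $K_r$-minor-freeness is used: if some final cluster had diameter exceeding $\Theta(r\delta)$, one traces back through the $r-1$ rounds that produced it — each round exhibits the cluster as an annulus of width $\delta$ about some center $v_i$ inside the previous cluster — and uses the radial structure together with shortest-path trees from the $v_i$ to assemble $r$ connected, pairwise-adjacent subgraphs of $G$, i.e.\ a $K_r$-minor, a contradiction. Making this extraction precise — guaranteeing connectivity of the pieces despite the chopping, and adjacency between consecutive annuli — is exactly the content of the KPR theorem, and I expect this to be the main obstacle. \textbf{Padding.} Fix $x \in V$ and set $\rho := \delta/(8r)$. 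Consider round $i$ and condition on the cluster $C \ni x$ entering that round (and on $B(x,\rho) \subseteq C$, since otherwise the ball has already been split and we are already in the bad event). Because $d_C(v_C,\cdot)$ varies by at most $2\rho$ over $B(x,\rho)$, the ball is split by round $i$ only if the uniform offset $U_C$ lands in a set of measure at most $2\rho$ out of $\delta$, an event of probability at most $2\rho/\delta$. A union bound over the $r-1$ rounds gives $\Pr[B(x,\rho)\not\subseteq P(x)] \leq (r-1)\cdot 2\rho/\delta \leq \tfrac14 < \tfrac12$.

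Combining the two properties, $P$ is a $\Delta$-bounded random partition with $B(x,\rho) \subseteq P(x)$ with probability at least $\tfrac12$ for every $x$, where $\rho = \delta/(8r) = \Delta/(8cr^2) = \Omega(\Delta/r^2)$; hence $P$ is $O(r^2)$-padded. As $\Delta$ was arbitrary, $\alpha(V,d_G) = O(r^2)$. The only technical subtlety beyond the KPR boundedness lemma is that the chop must use {\em within-cluster} shortest-path distances for the minor extraction to work; this is harmless for the padding estimate, since within-cluster distances only dominate the global distance $d_G$, so the bound $d_C(v_C,\cdot)$ varies by at most $2\rho$ on $B(x,\rho)$ still holds.
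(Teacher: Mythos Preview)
The paper does not prove this theorem; it is stated as a consequence of Klein--Plotkin--Rao \cite{KPR93} as formulated by Rao \cite{Rao99}, with no argument given. Your proposal is exactly the standard KPR--Rao construction that these citations point to, and the overall structure---iterated random radial chopping, KPR's minor-extraction lemma for the diameter bound, and a union bound over rounds for padding---is correct and yields the claimed $O(r^2)$.

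One correction in the last paragraph: your justification for why $d_C(v_C,\cdot)$ varies by at most $2\rho$ over $B_G(x,\rho)$ is backwards. The inequality $d_C \geq d_G$ goes the wrong way for what you need; it would make the variation \emph{larger}, not smaller. The correct reason is that once you condition on $B_G(x,\rho)\subseteq C$, any shortest $x$--$y$ path in $G$ with $y\in B_G(x,\rho)$ lies entirely inside $B_G(x,\rho)$ (every intermediate vertex $z$ satisfies $d_G(x,z)\le d_G(x,y)\le\rho$), hence inside $C$; thus $d_C(x,y)=d_G(x,y)\le\rho$ for such $y$, and the Lipschitz bound on $d_C(v_C,\cdot)$ follows. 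With this fix the padding estimate goes through as you wrote it.
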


\begin{corollary}\label{cor:KPRvertex}
If $G = (V,E)$ excludes $K_r$ as a minor,
and $s : V \to \mathbb R_+$ is a non-negative weight
function on vertices, then the induced semi-metric $d_s$
satisfies $\alpha(V,d_s) = O(r^2)$.
\end{corollary}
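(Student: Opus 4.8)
The plan is to reduce Corollary~\ref{cor:KPRvertex} directly to Theorem~\ref{thm:KPR} by simulating a vertex weighting with edge lengths on a subdivided graph. First I would observe that the vertex-weighted shortest-path semi-metric $d_s$ on $G=(V,E)$ is, up to a factor of $2$, isometric to an edge-weighted shortest-path metric on a graph $G'$ obtained from $G$ by a simple gadget: replace each edge $uv \in E$ by a path $u - m_{uv} - v$ through a new ``midpoint'' vertex $m_{uv}$, and set $\len(u\,m_{uv}) = s(u)$ and $\len(m_{uv}\,v) = s(v)$ (one can also handle the contribution of the endpoints themselves by attaching a pendant edge of length $s(v)/2$ at each original vertex, or simply absorb the factor of $2$). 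Then for $u,v \in V$, the $G'$-distance $d_{G'}(u,v)$ agrees with $\sum_{x \in p} s(x)$ minimized over paths $p$, up to the endpoint terms, hence $d_s$ and $d_{G'}|_{V \times V}$ are bi-Lipschitz with a universal constant.

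The next step is to check that $G'$ still excludes $K_r$ as a minor, or at least excludes $K_{r'}$ for some $r' = O(r)$. This is immediate because $G'$ is obtained from $G$ by subdividing edges, and subdividing an edge never creates a new minor: any minor of $G'$ is a minor of $G$ (contracting one of the two halves of a subdivided edge recovers $G$, and minors are closed under taking minors). So $G'$ is $K_r$-minor-free, and Theorem~\ref{thm:KPR} applies to $(V(G'), d_{G'})$ with the edge lengths $\len$ defined above, giving $\alpha(V(G'), d_{G'}) = O(r^2)$.

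Finally I would transfer the padded-decomposition bound from the superspace $(V(G'), d_{G'})$ down to the subspace $(V, d_s)$. This is the routine ``restriction'' step: given a $\Delta$-bounded $\alpha$-padded random partition of $V(G')$, intersecting each cluster with $V$ yields a $\Delta$-bounded partition of $V$, and for $x \in V$ the event $B_{G'}(x,\Delta/\alpha) \subseteq P(x)$ implies $B_{d_s}(x, c\Delta/\alpha) \subseteq P(x) \cap V$ for the bi-Lipschitz constant $c$, so padding is preserved up to rescaling $\Delta$ and $\alpha$ by universal constants. Combining the three steps gives $\alpha(V,d_s) = O(r^2)$.

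I expect no serious obstacle here; the only point requiring a little care is the treatment of the endpoint weights $s(u),s(v)$ in the definition of $d_s$ versus the purely edge-summed distance in $G'$, since $d_s(u,v)$ includes $s(u)+s(v)$ while a path-internal vertex is counted once. This is handled cleanly by the pendant-edge trick (or simply by noting the discrepancy is bounded by a factor of $2$, which is absorbed into the $O(\cdot)$), so the argument is essentially a short bookkeeping exercise on top of Theorem~\ref{thm:KPR}.
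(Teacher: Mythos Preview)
Your argument is correct, but it takes a detour that the paper avoids. The paper simply assigns edge lengths on $G$ itself by $\len(u,v) = s(u)+s(v)$ for each $(u,v)\in E$; the resulting edge-weighted shortest-path metric on $V$ is within a factor of $2$ of $d_s$ (each internal vertex of a path is counted twice in the edge sum and once in the vertex sum), so Theorem~\ref{thm:KPR} applies directly to $(V,d_G)$ and the bi-Lipschitz equivalence transfers the padding bound to $(V,d_s)$. Your subdivision gadget produces exactly the same edge-length total $s(u)+s(v)$ along each original edge, so the distances between original vertices in your $G'$ coincide with the paper's $d_G$ on $G$; the midpoint vertices, the minor-closure check for subdivisions, and the restriction-of-partitions step are all unnecessary once you notice you can put the weights on the original edges. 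Both routes rest on the same factor-$2$ comparison and the same invocation of KPR; the paper's is just shorter.
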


\begin{proof}
Simply define $\len(u,v) = s(u)+s(v)$ for $(u,v) \in E$.  Clearly
the shortest-path distances induced by $\len$ and $s$ are within
a factor of 2, so the result follows from Theorem \ref{thm:KPR}.
\end{proof}

We also have the following theorem of Bartal for general metrics \cite{Bartal96}
(see \cite{KLMN05} for a proof of the precise statement, based on \cite{CKR01}).

\begin{theorem}\label{thm:bartal}
For any finite metric space $(X,d)$, we have $\alpha(X,d) = O(\log |X|)$.
\end{theorem}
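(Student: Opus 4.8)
The final statement to prove is Theorem~\ref{thm:bartal}: for any finite metric space $(X,d)$, the modulus of padded decomposability satisfies $\alpha(X,d) = O(\log|X|)$. My plan is to construct, for each scale $\Delta > 0$, a single random $\Delta$-bounded partition that is $O(\log|X|)$-padded, using the now-standard ball-growing / random-radius technique of Calinescu--Karloff--Rabani and Bartal.

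\textbf{The construction.} Fix $\Delta > 0$ and write $n = |X|$. Choose a uniformly random permutation $\pi$ of $X$, and independently choose a radius $R$ uniformly at random from the interval $[\Delta/4, \Delta/2]$. Process the points of $X$ in the order given by $\pi$: when a point $y$ is processed, assign to its cluster all not-yet-assigned points in the ball $B(y, R)$. This yields a partition $P$ in which every cluster has radius at most $R \le \Delta/2$, hence diameter at most $\Delta$, so $P$ is $\Delta$-bounded. (One could equivalently scale the interval to $[\Delta/2, \Delta]$ and halve the clustering radius; the constants only affect the hidden constant in $O(\cdot)$.)

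\textbf{The padding bound.} Fix $x \in X$ and set $r = \Delta/\alpha$ with $\alpha$ a large constant times $\log n$ to be determined. I want $\Pr[B(x,r) \subseteq P(x)] \ge \tfrac12$. The event that $B(x,r)$ is split can only occur if some point $y$ ``cuts'' the ball, meaning $y$'s cluster captures some but not all of $B(x,r)$; this requires $d(x,y) \in [R - r, R + r]$ for the random radius $R$, and moreover $y$ must be the first point (in the order $\pi$) among those whose ball reaches into $B(x,r)$. Order the points $y_1, y_2, \ldots, y_n$ of $X$ by increasing distance from $x$. A standard computation shows that, conditioned on the value of $R$, the probability that $y_k$ is the cutting point is at most $1/k$ (it must precede $y_1, \ldots, y_{k-1}$ in $\pi$, each of which is at least as close to $x$), and the probability over the choice of $R$ that $y_k$ lies in the annulus $[R-r, R+r]$ is at most $2r / (\Delta/4) = 8r/\Delta$. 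Summing over $k$ and using the harmonic bound gives a cut probability of at most $\sum_{k=1}^{n} \frac{1}{k} \cdot \frac{8r}{\Delta} \le \frac{8 r \ln(en)}{\Delta}$. Choosing $r = \frac{\Delta}{16 \ln(en)}$, i.e. $\alpha = 16\ln(en) = O(\log n)$, makes this at most $\tfrac12$, as desired. Taking the supremum over $\Delta$ then yields $\alpha(X,d) = O(\log n)$.

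\textbf{Main obstacle.} The delicate point is correctly handling the interaction between the two sources of randomness (the permutation $\pi$ and the radius $R$) when bounding the probability that a particular point $y_k$ is the one that cuts $B(x,r)$. One must be careful that ``$y_k$ cuts the ball'' forces $y_k$ to come before all of $y_1,\dots,y_{k-1}$ in $\pi$ --- this uses that any point closer to $x$ than $y_k$ would, if it appears earlier, either swallow all of $B(x,r)$ or already cut it first --- and this is exactly where the $1/k$ factor (and hence the logarithmic, rather than linear, padding) comes from. I would set up this step via a union bound over the candidate cutting points $y_k$, conditioning on $R$ first, then taking expectations; fusing the per-$k$ bounds $\tfrac1k$ and $\tfrac{8r}{\Delta}$ is routine once the conditioning is arranged correctly. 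Everything else (the $\Delta$-boundedness, the final choice of constants) is elementary.
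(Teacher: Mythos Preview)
Your argument is correct and is precisely the Calinescu--Karloff--Rabani random-partition construction that the paper points to: the paper does not give its own proof of Theorem~\ref{thm:bartal} but merely cites it as a result of Bartal, referring the reader to \cite{KLMN05} for a proof based on \cite{CKR01}. Your write-up is exactly that CKR-style argument, so there is nothing to compare.
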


\subsection{Average distortion embeddings}

\remove{
\begin{theorem}
\label{thm:bourgain}
For every $n$-point metric space $(X,d)$, there exists
a non-expansive mapping $f : X \to \mathbb R$ with
$$\sum_{u,v \in V} d(u,v)^p \lesssim O(\log n)^p \sum_{u,v \in V} (f(u)-f(v))^p.$$
\end{theorem}
}

Rabinovich \cite{Rab03} essentially proved the following theorem in the case $p=1$.
For applications to eigenvalues, the case $p=2$ is of particular interest (see Section \ref{sec:eigenvaluebounds}).
For direct application to vertex separators (Section \ref{sec:separators}), we will employ the $p=1$ case.
For a metric space $(X,d)$, a mapping $f : X \to \mathbb R$ is said to be {\em non-expansive}
if $|f(x)-f(y)| \leq d(x,y)$ for all $x,y \in X$.

\begin{theorem}
\label{thm:embedding}
For every $p \geq 1$, there exists a constant $C_p \geq 1$ such
that
for any metric space $(X,d)$, there exists
a non-expansive mapping $f : X \to \mathbb R$ with
$$\sum_{u,v \in X} d(u,v)^p \leq C_p \left[\alpha(X,d)\right]^p \sum_{u,v \in X} |f(u)-f(v)|^p.$$
\end{theorem}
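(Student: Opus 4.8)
The plan is to construct the embedding $f$ by a random one-dimensional projection built from a $\Delta$-bounded $\alpha$-padded random partition, summed geometrically over all scales. Concretely, fix $\alpha = \alpha(X,d)$ (up to a constant factor), and for each scale $\Delta_k = 2^k$ over $k \in \mathbb Z$ with $1 \le \Delta_k \le \diam(X)$, let $P_k$ be an independent $\Delta_k$-bounded $\alpha$-padded random partition. Within each cluster of $P_k$, assign an independent uniformly random sign $\sigma_{k}(C) \in \{-1,+1\}$, and define a random real-valued function by
\begin{equation*}
g_k(x) = \sigma_k(P_k(x)) \cdot \min\{d(x, X \setminus P_k(x)), \Delta_k/\alpha\}.
\end{equation*}
Each $g_k$ is easily checked to be $2$-Lipschitz (truncated distance-to-complement is $1$-Lipschitz within a cluster and bounded, and a crossing between clusters changes the value by at most $2\Delta_k/\alpha \le 2 d(x,y)$ when $d(x,y) \ge \Delta_k/\alpha$, while for $d(x,y) < \Delta_k/\alpha$ the two truncated terms are both genuine distances-to-complement and differ by at most $d(x,y)$). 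Then I would set $f(x) = c \sum_k \e_k \, g_k(x)$ for appropriate independent random combination (or just sum the $g_k$'s with a small constant), and rescale at the end to make $f$ genuinely non-expansive.

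The key estimate is the lower bound on $\sum_{u,v} |f(u)-f(v)|^p$ in expectation. For a fixed pair $u,v$, pick the scale $k$ with $\Delta_k/\alpha \approx d(u,v)$, more precisely $\Delta_k/(2\alpha) \le d(u,v) \le \Delta_k/\alpha$ say. With probability at least $\tfrac12$ (by $\alpha$-paddedness applied at a constant-factor-adjusted radius), $u$ lies in a cluster $C$ of $P_k$ with $B(u, \Delta_k/\alpha) \subseteq C$, hence $d(u, X\setminus C) \ge \Delta_k/\alpha \gtrsim d(u,v)$ so $|g_k(u)| \gtrsim d(u,v)$; and conditioned on this, either $v \notin C$ (then with probability $\tfrac12$ over the sign $\sigma_k$, $g_k(u)$ and $g_k(v)$ have opposite signs — well, more carefully, $|g_k(u) - g_k(v)| \ge |g_k(u)| \gtrsim d(u,v)$ when $v \notin C$ since then the two contributions have independent signs, giving probability $\ge \tfrac12$ that they don't cancel) or $v \in C$ (then $|g_k(u)-g_k(v)| = |d(u,X\setminus C) - d(v,X\setminus C)|$ truncated, but since $B(u,\Delta_k/\alpha)\subseteq C$ we have $d(u,X\setminus C)=\Delta_k/\alpha$ is the max value while $d(v,X\setminus C) \le d(u,v) + $ ... — actually this case needs the standard trick of also conditioning on $v$'s padding failing or using that $d(u,v)$ is comparable to the truncation threshold so the values differ by $\gtrsim d(u,v)$). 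In all cases one concludes $\Pr[|g_k(u)-g_k(v)| \gtrsim d(u,v)] \gtrsim 1$, and since the scales are independent, $\E[|f(u)-f(v)|^p] \gtrsim d(u,v)^p / \alpha^p$ after accounting for the $1/\alpha$ scaling needed to keep $f$ non-expansive. Summing over pairs and choosing a good fixed outcome yields the theorem with $C_p$ depending only on $p$ (swallowing constant factors and, for $p$ large, the effect of summing a geometric series of scales — here one uses that the $g_k$ are supported at disjoint scales, so $f$ is Lipschitz with a constant independent of the number of scales up to a geometric-series factor).

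The main obstacle I expect is making the non-expansiveness of $f$ hold with the right (scale-independent) constant while simultaneously keeping the per-pair lower bound $\gtrsim d(u,v)^p$. Naively, $f = \sum_k g_k$ is only $O(\log \diam)$-Lipschitz since many scales $g_k$ can be non-constant near a point; the fix is the standard observation that for any two points $x,y$, only $O(\log(d(x,y)\,\text{-neighborhood vs. }\Delta_k))$ — in fact essentially $O(1)$ after a more careful argument, or $O(\log n)$ absorbed via a random $\pm$ combination and Khintchine — scales contribute meaningfully, but the cleanest route (following Rabinovich and the presentation in \cite{KLMN05}) is to randomize the scale choice itself or to use that $|g_k(x)-g_k(y)| \le 2\min\{d(x,y),\Delta_k/\alpha\}$, so $\sum_k |g_k(x)-g_k(y)| \le 2 d(x,y) \sum_{k: \Delta_k \le \alpha d(x,y)} 1 + 2 \sum_{k:\Delta_k > \alpha d(x,y)} \Delta_k/\alpha$, and both sums are geometric, giving $\sum_k|g_k(x)-g_k(y)| = O(d(x,y))$ with an absolute constant. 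That bounds $|f(x)-f(y)| = O(d(x,y))$, so after dividing $f$ by this absolute constant we get a non-expansive map, and the lower bound survives since it came from a single well-chosen scale. Thus the whole argument is driven by the padded-decomposition parameter $\alpha(X,d)$, exactly as in the statement.
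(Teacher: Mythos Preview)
Your multi-scale construction has a genuine gap in the Lipschitz analysis. You claim that
\[
\sum_k |g_k(x)-g_k(y)| \le 2 d(x,y) \sum_{k:\,\Delta_k \le \alpha d(x,y)} 1 \;+\; 2 \sum_{k:\,\Delta_k > \alpha d(x,y)} \Delta_k/\alpha
\]
and that both sums are geometric. Neither is. You have assigned the two upper bounds to the wrong ranges (for small $\Delta_k$ the min equals $\Delta_k/\alpha$, for large $\Delta_k$ it equals $d(x,y)$), but even after swapping them correctly the large-scale piece is the problem: for $\Delta_k > \alpha d(x,y)$ you can only say $|g_k(x)-g_k(y)| \le 2 d(x,y)$, and there are $\Theta(\log(\diam(X)/(\alpha d(x,y))))$ such scales, so the sum is $\Theta(d(x,y)\log(\diam(X)/d(x,y)))$, not $O(d(x,y))$. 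The truncation at $\Delta_k/\alpha$ does \emph{not} make $g_k$ eventually constant on the pair $(x,y)$; even deep inside a large-diameter cluster, the distance-to-complement can still vary by $d(x,y)$. Randomizing with signs and invoking Khintchine does not rescue this: $\sum_k |g_k(x)-g_k(y)|^2$ has the same logarithmic blow-up. So your $f$ is only $O(\log n)$-Lipschitz, and after normalizing you lose a $(\log n)^p$ factor, which the statement does not allow.

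The paper avoids this entirely by working at a \emph{single} scale. It sets $\Delta_p = \bigl(n^{-2}\sum_{u,v} d(u,v)^p\bigr)^{1/p}$ and uses a map of the form $f(u)=d(u,S)$, which is automatically $1$-Lipschitz. Two cases: if some ball $B(x_0,\tfrac14\Delta_p)$ contains a constant fraction of $X$, take $S$ to be that ball; otherwise every cluster of a $\tfrac14\Delta_p$-bounded $\alpha$-padded partition has at most $n/10$ points, and one takes $S$ to be a random union of clusters (each included independently with probability $\tfrac12$). In the second case, a padded point $u$ has $d(u,S)$ oscillating between $0$ and at least $\Delta_p/(4\alpha)$ independently of any $v$ in a different cluster, which gives the $\alpha^{-p}$ lower bound on $\sum_{u,v}|f(u)-f(v)|^p$. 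The point is that for \emph{average} distortion into the line, one carefully chosen scale already captures a constant fraction of $\sum d(u,v)^p$; the multi-scale machinery you reach for is what one needs for worst-case distortion into $\ell_p$, and it is precisely what forces the extra logarithm here.
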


\begin{proof}
We prove the theorem for $p=2$.  The other cases are similar.
Let $\Delta_2 = \sqrt{\frac{1}{n^2} \sum_{u,v \in X} d(u,v)^2}$.
First, we handle the case when many points are clustered about
a single node $x_0 \in X$.
In what follows, we use $B(x,R) = \{ y \in X : d(x,y) \leq R \}$
to denote the closed ball of radius $R$ about $x \in X$.

\bigskip
\noindent
{\bf Case I: There exists $x_0 \in X$ for which $|B(x_0, \frac14 \Delta_2)| \geq \frac{n}{10}$.}

\medskip
\noindent
In this case, let $S = B(x_0, \frac14 \Delta_2)$, and define $f(u) = d(u,S)$.
First, we have
\begin{eqnarray*}
n^2 \Delta_2^2 = \sum_{u,v \in X} d(u,v)^2 &\leq& 2 \sum_{u,v} \left[d(u,x_0)^2 + d(v,x_0)^2\right] \\
&=& 4n \sum_{u \in X} d(u,x_0)^2 \\
&\leq& 4n \sum_{u \in X} \left[ d(u,S) + \frac{\Delta_2}{4}\right]^2 \\
&\leq& \frac{n^2 \Delta_2^2}{2} + 8n \sum_{u \in X} d(u,S)^2.
\end{eqnarray*}
Therefore, $\sum_{u \in X} d(u,S)^2 \geq \frac{n \cdot \Delta_2^2}{16}.$
We conclude that
\begin{eqnarray*}
\sum_{u,v \in X} (f(u)-f(v))^2 &=& \sum_{u,v \in X} \left[ d(u,S) - d(v,S) \right]^2 \\
&\geq&
\sum_{u \notin S, v \in S} \left[d(u,S) - d(v,S)\right]^2 \\
&=&
\sum_{u \notin S, v \in S} d(u,S)^2 \\
&=&
|S| \sum_{u \in X} d(u,S)^2 \\
&\gtrsim &
n^2 \Delta_2^2 \\
&\gtrsim& \sum_{u,v\in X} d(u,v)^2.
\end{eqnarray*}
This finishes the clustered case.

\bigskip
\noindent
{\bf Case II: For every $u \in X$, $|B(u, \frac14 \Delta_2)| < \frac{n}{10}$.}

\medskip
\noindent
In particular, we know that for any subset $T \subseteq X$
with $\diam(T) \leq \frac14 \Delta_2$, we have $|T| < n/10$.

Now, let $P$ be a random partition of $X$ which is
$\frac14 \Delta_2$ bounded and $\alpha$-padded, where
$\alpha = \alpha(X,d)$.
We know that for every $x \in X$, we have
$$
\Pr\left[B(x, \Delta_2/(4\alpha)) \subseteq P(x)\right] \geq \frac12.
$$
So by Markov's inequality, it must be that there exists a partition
$P_0$ such that the set $$H_0 = \left\{ x \in X : B(x,\Delta_2/(4\alpha)) \subseteq P(x)\right\}$$
has $|H_0| \geq n/2$.  Fix this choice of $P_0$ and $H_0$.

Let $\{\sigma_C\}_{C \in P_0}$
be a collection of i.i.d. uniform 0/1 random variables, one for each
cluster $C \in P_0$ and define $S = \bigcup_{C \in P_0 : \sigma_C = 0} C$.
Finally, define $f : X \to \mathbb R$
by $f(u) = d(u,S)$.

Note that $f$ is a random function.
We will now argue that
\begin{equation}\label{eq:central}
\mathbb E \left[ \sum_{u,v \in X} (f(u)-f(v))^2 \right] \gtrsim \left(\frac{n\Delta_2}{\alpha}\right)^2
\gtrsim \alpha^{-2} \sum_{u,v\in X} d(u,v)^2,
\end{equation}
which will imply (by averaging)
that there exists a choice of $f : X \to \mathbb R$
for which the sum is at least $\Omega(\alpha^{-2}) \sum_{u,v\in X} d(u,v)^2.$

So it remains to prove \eqref{eq:central}.  Note that for every $C \in P_0$,
we have $\diam(C) \leq \Delta_2/4$, so since we are in case (II), we
have $|C| \leq n/10$.  Write
\begin{eqnarray}
\sum_{u,v \in X} (f(u)-f(v))^2 &=&
\sum_{u,v \in X} (d(u,S)-d(v,S))^2 \nonumber \\
&\geq & \label{eq:almost}
\sum_{C \in P_0} \sum_{u \in C \cap H_0} \sum_{v \notin C} (d(u,S)-d(v,S))^2.
\end{eqnarray}
So let's estimate $\mathbb E\left[ (d(u,S)-d(v,S))^2 \right]$ for $u \in C \cap H_0$
and $v \notin C$.  Since $u,v$ lie in different clusters, conditioned on
what happens for $v$, we have $d(u,S)$ oscillating randomly between $0$
when $\sigma_C = 0$ and some value greater than $\Delta_2/(4\alpha)$
when $\sigma_C = 1$ (since $u \in H_0$).  It follows that
$\mathbb E\left[ (d(u,S)-d(v,S))^2 \right] \geq \frac{\Delta_2^2}{64 \alpha^2}$.

Plugging this into \eqref{eq:almost} and using $|C| < n/10$ for every $C \in P_0$ yields
\begin{eqnarray*}
\mathbb E \left[\sum_{u,v \in X} (f(u)-f(v))^2\right]
&\geq& \sum_{C \in P_0} \sum_{u \in C \cap H_0} |X \setminus C| \cdot \frac{\Delta_2^2}{64\alpha^2} \\
&\geq & \sum_{C \in P_0} \sum_{u \in C \cap H_0} \frac{9n}{10} \cdot \frac{\Delta_2^2}{64\alpha^2} \\
&=& |H_0| \frac{9n}{10} \frac{\Delta_2^2}{64\alpha^2} \\
&\gtrsim& \left(\frac{n \Delta_2}{\alpha} \right)^2,
\end{eqnarray*}
finishing our proof of \eqref{eq:central}.
\end{proof}

\section{Spectral bounds and balanced separators}
\label{sec:spectral}

We now combine the tools of the previous sections to prove bounds
on the Rayleigh quotients of various graphs.

\subsection{Eigenvalues in bounded degree graphs}
\label{sec:eigenvaluebounds}

Let $G=(V,E)$ be any graph, and set $n = |V|$.
Letting $\lambda_2(G)$ be the second
eigenvalue of the Laplacian of $G$, by the variational
characterization of eigenvalues, we have
\begin{equation}\label{eq:variation}
\lambda_2(G) = \min_{f : V \to \mathbb R} \frac{\sum_{uv \in E} |f(u)-f(v)|^2}{\sum_{u \in V} |f(u)-\bar f|^2}
= 2n \cdot \min_{f :V \to \mathbb R} \frac{\sum_{uv \in E} |f(u)-f(v)|^2}{\sum_{u,v \in V} |f(u)-f(v)|^2}.
\end{equation}
where $\bar f = \frac{1}{n} \sum_{x \in V} f(x)$.

\begin{theorem} \label{thm:eigenbound}
Let $G=(V,E)$ be any graph with maximum degree $d_{\max}$, let $s : V \to \mathbb R_+$
be a non-negative weight function on vertices, and let $d_s$ be the induced semi-metric.
Then,
$$
\lambda_2(G) \lesssim  \frac{d_{\max} n^3 [\alpha(V,d_s)]^2}{\Lambda_s(G)^2}.
$$
\end{theorem}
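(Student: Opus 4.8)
The plan is to combine the variational characterization of $\lambda_2$ in \eqref{eq:variation} with the average-distortion embedding of Theorem~\ref{thm:embedding}, applied to the semi-metric $d_s$. First I would invoke Theorem~\ref{thm:embedding} with $p=2$ to produce a non-expansive map $f : V \to \mathbb R$ satisfying
$$
\sum_{u,v \in V} d_s(u,v)^2 \leq C_2 \left[\alpha(V,d_s)\right]^2 \sum_{u,v \in V} |f(u)-f(v)|^2.
$$
This $f$ is the candidate test function for the Rayleigh quotient. Plugging it into the right-hand side of \eqref{eq:variation}, the denominator $\sum_{u,v \in V} |f(u)-f(v)|^2$ is bounded below by $\frac{1}{C_2 [\alpha(V,d_s)]^2} \sum_{u,v} d_s(u,v)^2$, so it remains to upper bound the numerator $\sum_{uv \in E} |f(u)-f(v)|^2$.

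For the numerator, non-expansiveness of $f$ gives $|f(u)-f(v)|^2 \leq d_s(u,v)^2$ for each edge. Since $d_s$ is the shortest-path metric induced by the vertex weighting $s$, for an edge $uv$ the path consisting of just the edge $uv$ has $s$-length $s(u)+s(v)$, hence $d_s(u,v) \leq s(u)+s(v)$ and $d_s(u,v)^2 \leq 2(s(u)^2 + s(v)^2)$. Summing over edges and using that each vertex $v$ appears in at most $d_{\max}$ edges yields
$$
\sum_{uv \in E} |f(u)-f(v)|^2 \leq 2\sum_{uv\in E}\bigl(s(u)^2+s(v)^2\bigr) \leq 2 d_{\max} \sum_{v \in V} s(v)^2.
$$
Combining the two bounds, \eqref{eq:variation} gives
$$
\lambda_2(G) \leq 2n \cdot \frac{2 d_{\max} \sum_{v} s(v)^2}{\tfrac{1}{C_2 [\alpha(V,d_s)]^2}\sum_{u,v} d_s(u,v)^2}
= 4 C_2 n\, d_{\max}\, [\alpha(V,d_s)]^2 \cdot \frac{\sum_v s(v)^2}{\sum_{u,v} d_s(u,v)^2}.
$$

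Finally I would relate $\frac{\sum_v s(v)^2}{\sum_{u,v} d_s(u,v)^2}$ to $\Lambda_s(G)$. By definition, $\Lambda_s(G) = \frac{\sum_{u,v} d_s(u,v)}{\sqrt{\sum_v s(v)^2}}$, so $\sum_v s(v)^2 = \bigl(\sum_{u,v} d_s(u,v)\bigr)^2 / \Lambda_s(G)^2$. The remaining step is Cauchy–Schwarz over the $n^2$ pairs: $\bigl(\sum_{u,v} d_s(u,v)\bigr)^2 \leq n^2 \sum_{u,v} d_s(u,v)^2$, which gives $\frac{\sum_v s(v)^2}{\sum_{u,v} d_s(u,v)^2} \leq \frac{n^2}{\Lambda_s(G)^2}$. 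Substituting yields $\lambda_2(G) \lesssim \frac{d_{\max} n^3 [\alpha(V,d_s)]^2}{\Lambda_s(G)^2}$, as claimed. I expect no serious obstacle here — the only point requiring a little care is making sure the Cauchy–Schwarz step (which replaces the non-convex $\mathcal R_G$ by the convex $\mathcal C_G$, as anticipated in the outline) is not lossy for the extremal $s$; but for the inequality as stated it is automatic, and the constant $C_2$ from Theorem~\ref{thm:embedding} is absorbed into the $\lesssim$.
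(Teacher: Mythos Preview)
Your proposal is correct and follows essentially the same approach as the paper: apply Theorem~\ref{thm:embedding} with $p=2$ to get the non-expansive $f$, bound the numerator via $d_s(u,v)^2 \leq 2(s(u)^2+s(v)^2)$ and the degree bound, and then use Cauchy--Schwarz on $\sum_{u,v} d_s(u,v)$ together with the definition of $\Lambda_s(G)$. The paper's proof differs only in minor constants and in the order in which the inequalities are chained.
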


\begin{proof}
If $d$ is any metric on $V$, then using \eqref{eq:variation} and Theorem \ref{thm:embedding}, we have
$$
\lambda_2(G) \lesssim n \cdot [\alpha(V,d)]^2 \frac{\sum_{uv \in E} d(u,v)^2}{\sum_{u,v \in V} d(u,v)^2}.
$$
Therefore,
\begin{eqnarray}
\lambda_2 &\lesssim& n [\alpha(V,d_s)]^2 \frac{\sum_{uv \in E} d_s(u,v)^2}{\sum_{u,v \in V} d_s(u,v)^2} \nonumber \\
&\leq& n[\alpha(V,d_s)^2] \frac{4 d_{\max} \sum_{v \in V} s(v)^2}{\sum_{u,v\in V} d_s(u,v)^2} \nonumber \\
&\leq & 4 d_{\max} n^3 [\alpha(V,d_s)]^2\frac{\sum_{v \in V} s(v)^2}{\left[\sum_{u,v \in V} d_s(u,v)\right]^2} \nonumber\\
&=& \frac{4 d_{\max} n^3 [\alpha(V,d_s)]^2}{\Lambda_s(G)^2}, \label{eq:ebound}
\end{eqnarray}
where the penultimate inequality follows
from Cauchy-Schwarz.
\end{proof}

\begin{theorem}\label{thm:eigengenus}
If $G=(V,E)$ is a genus $g$ graph with $n=|V|$, then $\lambda_2(G) = O(\frac{d_{\max} g^3}{n})$.
\end{theorem}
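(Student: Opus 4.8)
The plan is to assemble the three tools developed above. Theorem~\ref{thm:eigenbound} reduces the problem to exhibiting a single vertex weighting $s : V \to \mathbb R_+$ for which $\Lambda_s(G)$ is large while $\alpha(V,d_s)$ is small, since it gives $\lambda_2(G) \lesssim d_{\max} n^3 [\alpha(V,d_s)]^2 / \Lambda_s(G)^2$. I would take $s = s^*$ to be the weighting maximizing $\Lambda_s(G)$ (the maximum is attained because $\Lambda_s$ is scale-invariant, so one may restrict to $\|s\|_2 = 1$). For this choice, Theorem~\ref{thm:duality} identifies $\Lambda_{s^*}(G) = \min_F \vcon_2(F)$, the minimum over unit $K_n$-flows in $G$.

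Assume $g \geq 1$ and $n \geq 3\sqrt g$. Then Theorem~\ref{thm:genus} gives $\Lambda_{s^*}(G) = \min_F \vcon_2(F) \geq c n^2/\sqrt g$. For the padding parameter, I would invoke the Ringel--Youngs formula: the genus of $K_r$ is $\lceil (r-3)(r-4)/12 \rceil$, and since genus is minor-monotone, every genus-$g$ graph is $K_r$-minor-free as soon as $(r-3)(r-4)/12 > g$, i.e. for some $r = O(\sqrt g)$. Corollary~\ref{cor:KPRvertex} (via the Klein--Plotkin--Rao structure theorem) then yields $\alpha(V,d_{s^*}) = O(r^2) = O(g)$. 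Plugging both estimates into Theorem~\ref{thm:eigenbound},
$$\lambda_2(G) \lesssim \frac{d_{\max}\, n^3 \cdot O(g^2)}{(c n^2/\sqrt g)^2} = O\!\left(\frac{d_{\max}\, g^3}{n}\right).$$

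It remains to handle the two boundary regimes. When $n < 3\sqrt g$ (with $g \geq 1$), the trivial spectral bound $\lambda_2(G) \leq 2 d_{\max}$ already suffices, since $n < 3\sqrt g$ forces $g^3/n > g^{5/2}/3 \geq 1/3$, so $d_{\max} g^3/n = \Omega(d_{\max}) = \Omega(\lambda_2(G))$. When $g = 0$, the graph is $K_5$-minor-free, so $\alpha(V,d_{s^*}) = O(1)$, and an integral $K_n$-flow induces a drawing of $K_n$ in the plane with at most $\vcon_2(F)^2$ crossings; since $\mathrm{cr}(K_n) = \Omega(n^4)$ one gets $\Lambda_{s^*}(G) = \Omega(n^2)$ and hence $\lambda_2(G) = O(d_{\max}/n)$, recovering the Spielman--Teng bound.

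The argument has essentially no hard step of its own --- all the heavy lifting lives in Sections~\ref{sec:flows}--\ref{sec:embeddings}. The one point requiring care is the bookkeeping on the minor-exclusion threshold: one must use that genus-$g$ graphs exclude a $K_r$-minor for $r = O(\sqrt g)$ (rather than a cruder bound) so that $\alpha(V,d_{s^*}) = O(g)$, since any slack here propagates directly into the exponent of $g$ in the final estimate; and one must check, as above, that the hypotheses $g \geq 1$ and $n \geq 3\sqrt g$ of Theorem~\ref{thm:genus} cost nothing.
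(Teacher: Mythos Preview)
Your proof is correct and follows exactly the paper's approach: apply Theorem~\ref{thm:eigenbound}, bound $\alpha(V,d_s)=O(g)$ via Corollary~\ref{cor:KPRvertex}, and bound $\max_s \Lambda_s(G)$ via the duality Theorem~\ref{thm:duality} and the congestion bound Theorem~\ref{thm:genus}. You are more explicit than the paper in two places --- you spell out via Ringel--Youngs why a genus-$g$ graph excludes $K_{O(\sqrt g)}$ (the paper silently uses this when citing Corollary~\ref{cor:KPRvertex}), and you dispose of the boundary regimes $n<3\sqrt g$ and $g=0$ --- but the structure is identical.
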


\begin{proof}
For any weight function $s : V \to \mathbb R_+$, we have $\alpha(V,d_s) = O(g)$ by Corollary \ref{cor:KPRvertex},
hence \eqref{eq:ebound} yields
$$
\lambda_2(G) \leq \frac{O(d_{\max} g^2 n^3)}{\max_{s : V \to \mathbb R_+} \Lambda_s(G)^2}.
$$
But by Theorems \ref{thm:duality} and \ref{thm:genus}, we have $\max_{s : V \to \mathbb R_+} \Lambda_s(G)^2 = \min_F \vcon_2(F)^2 \geq \Omega(\frac{n^4}{g})$,
where the minimum is over all $K_{n}$-flows in $G$.
\end{proof}

\begin{theorem}\label{thm:eigenminor}
If $G=(V,E)$ is $K_h$-minor-free and $n=|V|$, then $\lambda_2(G) = O(\frac{d_{\max} h^6 \log h}{n})$.
\end{theorem}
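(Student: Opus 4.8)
The plan is to mirror the proof of Theorem \ref{thm:eigengenus} exactly, substituting the $K_h$-minor-free versions of the two ingredients. Starting from the bound \eqref{eq:ebound} established in the proof of Theorem \ref{thm:eigenbound}, namely
$$
\lambda_2(G) \lesssim \frac{d_{\max}\, n^3\, [\alpha(V,d_s)]^2}{\Lambda_s(G)^2}
$$
for every non-negative vertex weight $s$, I will choose $s$ to be the maximizer of $\Lambda_s(G)$ and bound the two remaining quantities separately.

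For the numerator, Corollary \ref{cor:KPRvertex} (which is the vertex form of the Klein--Plotkin--Rao theorem, Theorem \ref{thm:KPR}) gives $\alpha(V,d_s) = O(h^2)$ for any weight function $s$, since $G$ excludes $K_h$ as a minor. For the denominator, Theorem \ref{thm:duality} identifies $\max_{s} \Lambda_s(G)^2$ with $\min_F \vcon_2(F)^2$ over unit $K_n$-flows $F$ in $G$, and Theorem \ref{thm:exminors} lower bounds this by $\Omega\!\left(\frac{n^4}{h^2 \log h}\right)$ provided $n \geq 4 c_{KT} h\sqrt{\log h} + 1$. Combining,
$$
\lambda_2(G) \lesssim \frac{d_{\max}\, n^3\, (h^2)^2}{n^4/(h^2 \log h)} = \frac{d_{\max}\, h^6 \log h}{n},
$$
which is the claimed bound.

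It remains to dispose of the small-$n$ regime where Theorem \ref{thm:exminors} does not apply. Here I will use the trivial fact that every Laplacian eigenvalue satisfies $\lambda_2(G) \leq 2 d_{\max}$: whenever $n$ is below the threshold of Theorem \ref{thm:exminors} — in particular whenever $n \leq h^6 \log h$, which for $h \geq 2$ comfortably exceeds $4 c_{KT} h\sqrt{\log h} + 1$ — we have $\frac{d_{\max} h^6 \log h}{n} \geq d_{\max}$, so the desired inequality holds with an absolute constant. There is essentially no hard step here; the proof is a direct assembly of the duality theorem, the congestion lower bound, and the padded-decomposition bound for excluded-minor families, and the only point requiring any care is checking that the range of $n$ excluded by Theorem \ref{thm:exminors} is absorbed by the trivial bound.
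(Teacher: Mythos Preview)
Your proof is correct and follows essentially the same argument as the paper: apply Theorem~\ref{thm:eigenbound} with $\alpha(V,d_s)=O(h^2)$ from Corollary~\ref{cor:KPRvertex}, then bound $\max_s \Lambda_s(G)^2$ via the duality Theorem~\ref{thm:duality} and the congestion lower bound of Theorem~\ref{thm:exminors}. Your explicit treatment of the small-$n$ regime is a minor addition that the paper leaves implicit in the $O(\cdot)$ notation.
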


\begin{proof}
For any weight function $s : V \to \mathbb R_+$, we have $\alpha(V,d_s) = O(h^2)$ by Corollary \ref{cor:KPRvertex},
hence \eqref{eq:ebound} yields
$$
\lambda_2(G) \leq \frac{O(d_{\max} h^4 n^3)}{\max_{s : V \to \mathbb R_+} \Lambda_s(G)^2}.
$$
But by Theorems \ref{thm:duality} and \ref{thm:exminors}, we have $\max_{s : V \to \mathbb R_+} \Lambda_s(G)^2 = \min_F \vcon_2(F)^2 \geq \Omega(\frac{n^4}{h^2 \log h})$,
where the minimum is over all $K_{n}$-flows $F$ in $G$.
\end{proof}

\begin{remark}
Using Theorem \ref{thm:bartal}, one can give the bounds $\lambda_2(G) = O\left([\max\{ \log n, g \}]^2\right) \frac{d_{\max} g}{n}$ and $\lambda_2(G) = O\left([\max\{\log n, h^2\}]^2\right) \frac{d_{\max} h \log h}{n}$ in Theorems \ref{thm:eigengenus} and \ref{thm:eigenminor}, respectively.  Clearly these bounds are better when $g$ or $h$ grow moderately
fast with $n$.  We suspect that the $O(\cdot)$ part of each bound can be replaced by a universal constant.
The resulting bounds would be tight in the case of genus, and almost tight in the case of $K_h$-minor-free graphs.
It is not clear whether the $\log h$ factor is necessary in general.
\end{remark}

\subsection{Balanced vertex separators}
\label{sec:separators}

For a graph $G=(V,E)$,
a {\em vertex separator} is a partition
$V = A \cup B \cup S$ where
$E(A,B) = \emptyset$.
We define $\alpha(A,B,S) = \frac{|S|}{|A \cup S| \cdot |B \cup S|}$,
and $\alpha(G) = \min \alpha(A,B,S),$
where the minimum is taken over all vertex separators.
We recall the following lemma from \cite[\S 3]{FHL05}.

\begin{lemma}[\cite{FHL05}]\label{lem:FHL}
If $s : V \to \mathbb R_+$ and $d_s$ is the induced metric, then
for any non-expansive map $f : (V,d_s) \to \mathbb R$, there exists
a separator $(A,B,S)$ (found by performing a sweep with $f$),
such that
$$\alpha(A,B,S) \leq \frac{2 \sum_{v \in V} s(v)}{\sum_{u,v \in V} |f(u)-f(v)|}.$$
\end{lemma}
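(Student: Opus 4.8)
The final statement to prove is Lemma~\ref{lem:FHL}, attributed to \cite{FHL05}: given a vertex weighting $s:V\to\mathbb R_+$ with induced metric $d_s$ and a non-expansive map $f:(V,d_s)\to\mathbb R$, there is a vertex separator $(A,B,S)$ found by a sweep on $f$ with $\alpha(A,B,S)\le \frac{2\sum_v s(v)}{\sum_{u,v}|f(u)-f(v)|}$.

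\bigskip

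The plan is to run a threshold sweep on the real-valued coordinates $f(v)$ and pay for the separator by the vertex weights of the ``straddling'' vertices. First I would sort $V=\{v_1,\dots,v_n\}$ so that $f(v_1)\le f(v_2)\le\cdots\le f(v_n)$. For a real threshold $\tau$, define $A_\tau=\{v:f(v)<\tau\}$, $B_\tau=\{v:f(v)>\tau\}$, and let the separator be $S_\tau$, the set of vertices $u$ having a neighbor (or the vertex itself) on the opposite side of $\tau$ within distance related to $s$; concretely, because $f$ is non-expansive with respect to $d_s$ and $d_s(u,w)\le s(u)+s(w)$ along an edge $uw$, a vertex $u$ with $f(u)<\tau$ can only be adjacent to a vertex $w$ with $f(w)>\tau$ if the interval $(f(u),f(w))$ — which has length at most $s(u)+s(w)$ — contains $\tau$. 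The cleanest way to package this is: choose $\tau$ at random and let $S_\tau$ be those vertices $u$ whose weighted interval $[f(u)-s(u),\,f(u)+s(u)]$ (or the one-sided version sufficient here) contains $\tau$; put everything strictly below in $A$ and strictly above in $B$. Then $E(A,B)=\emptyset$ by the non-expansiveness argument, so $(A,B,S_\tau)$ is a genuine vertex separator for every $\tau$.

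\bigskip

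Next I would bound the ``cost'' and the ``benefit'' in expectation over a uniformly random $\tau$ drawn from an interval of length $\ell$ covering the range of $f$. The expected weight of the separator is $\mathbb E|S_\tau|\text{-weighted}$... more precisely, $\mathbb E\big[\sum_{u\in S_\tau} 1\big]$ — but actually what we want is just $\mathbb E[|S_\tau|]$ controlled by $\frac1\ell\sum_v (2s(v))$, since each $u$ lands in $S_\tau$ with probability at most $2s(u)/\ell$. Meanwhile the ``benefit'' term $|A_\tau\cup S_\tau|\cdot|B_\tau\cup S_\tau|\ge |A_\tau|\cdot|B_\tau|$, and for the pair $u,v$ with $f(u)<f(v)$, the threshold $\tau$ separates them (puts one in $A_\tau$, one in $B_\tau$) with probability at least $(|f(u)-f(v)|-s(u)-s(v))/\ell$ — or, more robustly, one shows $\mathbb E[|A_\tau|\,|B_\tau|]\ge \frac1\ell\sum_{u,v}|f(u)-f(v)| - (\text{lower order})$. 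Dividing, $\mathbb E[\alpha(A_\tau,B_\tau,S_\tau)]$ is at most roughly $\frac{\sum_v 2s(v)/\ell}{(\sum_{u,v}|f(u)-f(v)|)/\ell - \cdots}$; the $\ell$'s cancel and by an averaging/Markov argument some fixed $\tau$ achieves the stated bound $\frac{2\sum_v s(v)}{\sum_{u,v}|f(u)-f(v)|}$, noting that we only need a bound on the ratio, not on numerator and denominator separately, so we may discard thresholds where the denominator is small.

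\bigskip

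The main obstacle is handling the lower-order ``slack'' terms so that the clean constant $2$ comes out: the probability that $\tau$ separates $u$ from $v$ is $|f(u)-f(v)|/\ell$ minus a correction of order $(s(u)+s(v))/\ell$, and one must argue these corrections are dominated — either by absorbing them into the $|S_\tau|$ accounting (a vertex that sits in $S_\tau$ is already being charged, so it need not also be charged for failing to separate pairs) or by a telescoping/charging argument along the sorted order. Rather than do this by crude union bounds, the slick route (which is presumably the one in \cite{FHL05}) is to write everything as integrals over $\tau$ and observe that $\sum_{u,v}|f(u)-f(v)| = \int_\tau |A_\tau||B_\tau|_{\text{loose}}\,d\tau$ exactly when $S$ is empty, and the nonzero-$s$ version just shifts intervals; carefully lining up the integrand for the numerator (total weight crossing $\tau$) against that for the denominator gives the ratio bound directly. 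I would present it in that integral form to keep the constants honest and the argument short.
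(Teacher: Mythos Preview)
The paper does not prove this lemma; it simply quotes it from \cite{FHL05}. So there is no ``paper's proof'' to compare against, and your sketch is the standard argument.

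Your approach is correct, and the integral form you end with is exactly the clean way to carry it out. With $A_\tau=\{u:f(u)+s(u)<\tau\}$, $B_\tau=\{u:f(u)-s(u)>\tau\}$, and $S_\tau$ the remainder, the non-expansiveness of $f$ together with $d_s(u,w)\le s(u)+s(w)$ on edges gives $E(A_\tau,B_\tau)=\emptyset$. Then
\[
\int |S_\tau|\,d\tau \;=\; \sum_{v} 2s(v)
\quad\text{and}\quad
\int |A_\tau\cup S_\tau|\cdot|B_\tau\cup S_\tau|\,d\tau \;=\; \sum_{(u,v)} \max\bigl(0,\,f(v)-f(u)+s(u)+s(v)\bigr),
\]
the second sum over ordered pairs. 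Dropping the nonnegative $s$-terms only decreases the right-hand side, so it is at least $\sum_{(u,v)}\max(0,f(v)-f(u))=\sum_{\{u,v\}}|f(u)-f(v)|$. Picking the $\tau$ that minimizes the ratio then gives the bound. In particular, your worry about ``lower-order slack'' is unfounded: the $s(u)+s(v)$ correction appears with the \emph{helpful} sign in the denominator integral, so there is nothing to absorb or telescope.

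One small caveat: with the paper's convention that $\sum_{u,v\in V}$ runs over ordered pairs (as in the variational formula for $\lambda_2$), the argument above yields a constant $4$ rather than $2$; the $2$ matches the unordered-pair convention. This is immaterial for the application in Theorem~\ref{thm:separators}.
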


\begin{theorem} \label{thm:separators}
If $G=(V,E)$ is a genus $g$ graph, then $\alpha(G) = O(n^{-3/2} g^{3/2})$.
If $G=(V,E)$ is $K_h$-minor free, then $\alpha(G) = O( n^{-3/2} h^3 \sqrt{\log h})$.
Therefore such graphs have $O(g^{3/2} \sqrt{n})$ and $O(h^{3} \sqrt{\log h} \sqrt{n})$-sized
$(\frac13, \frac23)$-balanced separators, respectively.
\end{theorem}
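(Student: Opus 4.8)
The plan is to combine the $2$-congestion lower bounds of Section \ref{sec:2-congest} with the duality theorem (Theorem \ref{thm:duality}), the KPR padded-decomposition bound (Corollary \ref{cor:KPRvertex}), the average-distortion embedding (Theorem \ref{thm:embedding} with $p=1$), and the FHL sweep lemma (Lemma \ref{lem:FHL}). Unlike the eigenvalue bounds, here we want to avoid the $d_{\max}$ loss, so we never pass through the Rayleigh quotient; instead we feed the metric and its line embedding directly into the vertex-separator rounding of \cite{FHL05}.

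First I would fix the genus $g$ case. Let $s : V \to \mathbb{R}_+$ be a weight function achieving $\max_s \Lambda_s(G)$. By Theorem \ref{thm:duality} and Theorem \ref{thm:genus}, $\Lambda_s(G) = \min_F \vcon_2(F) \geq \Omega(n^2/\sqrt{g})$ (for $n \geq 3\sqrt{g}$; for smaller $n$ the claimed bound $\alpha(G) = O(n^{-3/2}g^{3/2})$ is trivially true since $\alpha(G) \leq 1$ always, as one may take $S = V$). Since genus $g$ graphs exclude $K_{O(\sqrt{g})}$ as a minor, Corollary \ref{cor:KPRvertex} gives $\alpha(V,d_s) = O(g)$. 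Applying Theorem \ref{thm:embedding} with $p=1$ to $(V,d_s)$ produces a non-expansive $f : V \to \mathbb{R}$ with
$$
\sum_{u,v \in V} |f(u)-f(v)| \geq \frac{c}{\alpha(V,d_s)} \sum_{u,v \in V} d_s(u,v) \geq \frac{c'}{g} \sum_{u,v} d_s(u,v).
$$
Now Lemma \ref{lem:FHL} yields a separator $(A,B,S)$ with
$$
\alpha(A,B,S) \leq \frac{2 \sum_v s(v)}{\sum_{u,v} |f(u)-f(v)|} \lesssim \frac{g \sum_v s(v)}{\sum_{u,v} d_s(u,v)} = \frac{g}{\Lambda_s(G)} \cdot \frac{\sum_v s(v)}{\sqrt{\sum_v s(v)^2}} \leq \frac{g\sqrt{n}}{\Lambda_s(G)} \lesssim \frac{g^{3/2}}{n^{3/2}},
$$
using Cauchy–Schwarz ($\sum_v s(v) \leq \sqrt{n}\,\sqrt{\sum_v s(v)^2}$) in the penultimate step. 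The $K_h$-minor-free case is identical: Corollary \ref{cor:KPRvertex} gives $\alpha(V,d_s) = O(h^2)$, Theorems \ref{thm:duality} and \ref{thm:exminors} give $\Lambda_s(G) \geq \Omega(n^2/(h\sqrt{\log h}))$, and the same chain yields $\alpha(A,B,S) \lesssim h^2 \sqrt{n}/\Lambda_s(G) \lesssim h^3\sqrt{\log h}/n^{3/2}$.

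For the final sentence about $(\frac13,\frac23)$-balanced separators, I would invoke the standard recursive argument (analogous to Lemma 1.4): a bound of the form $\alpha(A,B,S) = O(\beta n^{-3/2})$ means one can always find a vertex separator removing $O(\beta\sqrt{n})$ vertices that splits off a constant fraction; iterating and accumulating the removed vertices across $O(\log n)$ rounds (the sizes form a geometric-type series because the recursion is on pieces of size at most a constant fraction) gives a single balanced separator of size $O(\beta\sqrt{n})$ — crucially the bound applies to every subgraph $H$ of $G$, since genus and $K_h$-minor-freeness are hereditary. The main obstacle, and the only place requiring care, is verifying that the recursive accumulation of separators indeed telescopes to $O(\beta\sqrt{n})$ rather than $O(\beta\sqrt{n}\log n)$; this is the usual weighted-recursion bookkeeping (weight the vertices so that at each level the total separator weight scales with the square root of the remaining weight), and it is standard, so I would cite \cite{LT79} and \cite{FHL05} for it rather than reproduce it.
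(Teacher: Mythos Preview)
Your proposal is correct and follows essentially the same approach as the paper: combine the duality theorem with the congestion lower bounds to control $\Lambda_s(G)$, lose a $\sqrt{n}$ via Cauchy--Schwarz to pass from the $\ell_2$ to the $\ell_1$ ratio, apply the KPR bound on $\alpha(V,d_s)$ and the $p=1$ embedding theorem, and feed the result into Lemma~\ref{lem:FHL}, citing \cite{FHL05} for the recursion to balanced separators. The only cosmetic difference is that the paper chooses $s$ to minimize $\frac{\sum_v s(v)}{\sum_{u,v} d_s(u,v)}$ directly rather than to maximize $\Lambda_s(G)$, but either choice yields the same bound.
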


\begin{proof}
We prove the theorem only for a $K_h$-minor free graph $G$.
$$
\min_{s : V \to \mathbb R_+} \frac{\sum_{v \in V} s(v)}{\sum_{u,v \in V} d_s(u,v)}
\leq \sqrt{n} \cdot \min_{s : V \to \mathbb R_+} \frac{\sqrt{\sum_{v \in V} s(v)^2}}{\sum_{u,v \in V} d_s(u,v)}
= \frac{\sqrt{n}}{\max_{s : V \to \mathbb R_+} \Lambda_s(G)} = O\left(\frac{h \sqrt{\log h}}{n^{3/2}}\right),
$$
by Theorems \ref{thm:duality} and \ref{thm:exminors}.

Let $s : V \to \mathbb R_+$ be the minimizer of the left-hand side, and
apply Theorem \ref{thm:embedding} to $(V,d_s)$ with $p=1$.
Since $\alpha(V,d_s) = O(h^2)$ by Corollary \ref{cor:KPRvertex}, this
yields a non-expansive map $f : (V,d_s) \to \mathbb R$ with
$$
\frac{\sum_{v \in V} s(v)}{\sum_{u,v \in V} |f(u)-f(v)|} \leq O(h^2) \frac{\sum_{v \in V} s(v)}{\sum_{u,v \in V} d_s(u,v)} = O\left(\frac{h^3 \sqrt{\log h}}{n^{3/2}}\right).
$$
We apply Lemma \ref{lem:FHL} to achieve $\alpha(G) = O(n^{-3/2} h^3 \sqrt{\log h})$.
Finally, we note that the passage from this bound on $\alpha(G)$ to balanced separators of size
$O(h^3 \sqrt{\log h} \sqrt{n})$ uses a standard recursive quotient cut algorithm;
see e.g. \cite[\S 6]{FHL05}.
\end{proof}

\remove{
\remove{
\bigskip

\section{Deforming the metric}

\subsection{$L_2$ congestion}

Let $G=(V,E)$ be a graph,
and for every pair $u,v \in V$, let $\mathcal P_{uv}$
be the set of all paths between $u$ and $v$.
Let $\mathcal P = \bigcup_{u,v \in V} \mathcal P_{uv}$
Consider the following
convex program with variables $\{d(u,v)\}_{u,v\in V}$ and $\{s(v)\}_{e\in E}$,
which we call {\sf (PR)}.

\[
\begin{array}{rll}
\textrm{minimize} & - \sum_{u,v \in V} d(u,v) & \\
\\
\textrm{subject to} & d(u,v) - \sum_{e \in p} s(v) \leq 0 & \forall p \in \mathcal P_{uv}, \forall u,v \in V \\
                    & \sum_{e \in E} s(v)^2 -1 \leq 0 & \\
& d(u,v) \geq 0 & \forall u,v \in V \\
& s(v) \geq 0 & \forall e \in E.
\end{array}
\]

The next lemma is straightforward.
\begin{lemma}\label{lem:PR}
The value of {\sf (PR)} is precisely
$$
\max_{d} \frac{\sum_{u,v \in V} d(u,v)}{\sum_{uv \in E} d(u,v)^2},
$$
where the maximum is over all metrics $d$ on $V$.
\end{lemma}

Observe that {\sf (PR)} contains a single (convex) quadratic constraint.
Letting $\vec d = \{d(u,v)\}_{u,v \in V}$ and $\vec s = \{s(v)\}_{v \in V}$,
we write $\Omega = \{ (\vec d,\vec \ell) : \vec d \succ 0, \vec \ell \succ 0 \} \subseteq \mathbb R^{{|V| \choose 2}+|E|}$.

Next, we introduce the Lagrangian multipliers $\vec f = \{f_p\}_{p \in \mathcal P}$ and $\mu$ and write
the Lagrangian function
$$
L((\vec d,\vec \ell), \vec f, \mu) = - \sum_{u,v \in V} d(u,v) + \mu \left[ \sum_{e \in E} s(v)^2 - 1\right]
+ \sum_{u,v \in V} \sum_{p \in \mathcal P_{uv}} f_{p} \left[ d(u,v) - \sum_{e \in p} s(v) \right].
$$
Also define $g(\vec f, \mu) = \inf_{(\vec d,\vec \ell) \in \Omega} L((\vec d,\vec \ell), \vec f, \mu)$.
The dual program $\mathsf{(PR)}^*$ is
then $\sup_{\vec f, \mu} g(\vec f,\mu)$.
The next claim is a basic result in convex optimization.

\begin{lemma}\label{lem:dual}
The values of $\mathsf{(PR)}$ and $\mathsf{(PR)}^*$ are equal.
\end{lemma}

In order to write $\mathsf{(PR)}^*$ in a more tractable form,
we rewrite
$$
L((\vec d,\vec \ell), \vec f, \mu) =
-\mu + \sum_{u,v \in V} d(u,v) \left[ \sum_{p \in \mathcal P_{uv}} f_{p} - 1\right]
+ \sum_{e \in E} s(v) \left[\mu s(v) - \sum_{p \in \mathcal P : e \ni p} f_{p}\right]
$$
Now, one easily sees that $g(\vec f,\mu) = -\infty$
unless $\sum_{p \in \mathcal P_{uv}} f_p \geq 1$ for every $u,v \in V$.
Furthermore, the optimum is clearly attained when this is an equality,
so we may assume that for all $u,v \in V$, $\sum_{p \in \mathcal P_{uv}} f_p = 1$,
in which case we have
$$
L((\vec d,\vec \ell), \vec f, \mu) =
-\mu + \sum_{u,v \in V} d(u,v)
+ \sum_{e \in E} s(v) \left[\mu s(v) - \sum_{p \in \mathcal P : e \ni p} f_{p}\right],
$$
and
\begin{equation}\label{eq:sofar}
g(\vec f, \mu) = \inf_{\vec \ell \succ 0} \left(-\mu + \sum_{e \in E} s(v) \left[ \mu s(v) - C_e\right]\right),
\end{equation}
where $C_e = \sum_{p \in \mathcal P : e \ni p} f_p$, and observe that since \eqref{eq:sofar} is
a convex quadratic function of the $\{s(v)\}$ variables, the optimum must have $\nabla_{\vec \ell} (\cdot) = 0$,
which yields the constraint $2 \mu s(v) - C_e = 0$ for every $e \in E$, hence
we may set $s(v) = \frac{C_e}{2\mu}$, which yields
$$
g(\vec f, \mu) = -\mu - \frac1{2\mu} \sum_{e \in E} C_e^2.
$$
We now write
\begin{eqnarray*}
\mathsf{(PR)}^* &=& \min_{\mu,\vec f} \left \{ \mu + \frac{1}{2\mu} \sum_{e \in E} C_e^2 : \sum_{p \in \mathcal P_{uv}} f_{p} = 1\,\,\forall u,v \in V   \right \} \\
&=&
\min_{\vec f} \left\{ \sqrt{\frac12 \sum_{e \in E} \left(\sum_{p \in \mathcal P : e \ni p} f_p\right)^2} :
\sum_{p \in \mathcal P_{uv}} f_{p} = 1\,\,\forall u,v \in V \right\}
\end{eqnarray*}

Based on the value of $\mathsf{(PR)}^*$, for a graph $G=(V,E)$,
we define
$$
\mathsf{congest}_2(G) =
\min_{\vec f} \left\{ \sqrt{\sum_{e \in E} \left(\sum_{p \in \mathcal P : e \ni p} f_p\right)^2} :
\sum_{p \in \mathcal P_{uv}} f_{p} = 1\,\,\forall u,v \in V \right\},
$$
which is precisely the optimal $L_2$-norm
of the congestion of the edges under an all-pairs multicommodity flow.

Based on Lemmas \ref{lem:PR} and \ref{lem:dual}, we have the following conclusion.

\begin{theorem}\label{thm:main}
In every graph $G = (V,E)$, we have
$$
\mathsf{congest}_2(G) \approx \max_{d} \frac{\sum_{u,v \in V} d(u,v)}{\sum_{uv \in E} d(u,v)^2},
$$
where the maximum is over all metrics $d$ on $V$.
\end{theorem}

Finally, for any metric $d : V \times V \to \mathbb R_+$, we define the value
$$
\Lambda_G(d) = \frac{\sum_{uv \in E} d(u,v)^2}{\sum_{u,v \in V} d(u,v)^2}
$$
We also set $\Lambda^*(G) = \min_{d} \Lambda_G(d)$, and
come to the main result of this section.

\begin{corollary}\label{cor:main}
For any graph $G=(V,E)$, we have
$$\Lambda^*(G) = O\left(\frac{n}{\mathsf{congest}_2(G)}\right)^2,$$
where $n = |V|$.
\end{corollary}

\begin{proof}
\begin{eqnarray*}
\Lambda^*(G) = \min_d \frac{\sum_{uv \in E} d(u,v)^2}{\sum_{u,v \in V} d(u,v)^2}
&\leq& n^2 \cdot \min_d \frac{\sum_{uv \in E} d(u,v)^2}{\left[\sum_{u,v \in V} d(u,v)\right]^2} \\
&=& n^2 \cdot \left(\max_d \frac{\sum_{u,v \in V} d(u,v)}{\sum_{uv \in E} d(u,v)^2}\right)^{-2} \\
&\approx & \left(\frac{n}{\mathsf{congest}_2(G)}\right)^2,
\end{eqnarray*}
where in the first line we have used Cauchy-Schwarz, and in the last we have used Theorem \ref{thm:main}.
\end{proof}

\remove{
\subsection{Congestion in bounded genus graphs}

In this section, we lower bound $\mathsf{congest}_2(G)$ for bounded genus graphs.

\begin{theorem}\label{thm:crossing}
If $G=(V,E)$ has genus at most $g$ and max degree $D$, then
$\mathsf{congest}_2(G) = \Omega\left(\frac{n^2}{D}\right)$,
where $n = |V|$.
\end{theorem}

\begin{proof}
Let
$\vec f = \{f_{p}\}_{p \in \mathcal P}$ be
an all-pairs multi-commodity flow, and
let $C(e) = \sum_{p \in \mathcal P : e \ni p} f_p$.
Since $G$ has $O_g(n)$ edges, and there are $n^2$ pairs of demands,
we have $\sqrt{\sum_{e \in E} C(e)^2} \geq |E|^{-1/2} \sum_{e \in E} C(e) = \Omega(\sqrt{n})$,
thus by standard randomized rounding, we may assume that $\vec f = \{f_{p}\}_{p \in \mathcal P}$
while changing $\sum_{e \in E} C(e)^2$ by only an $O(1)$ factor.

In this case, take an embedding of $G$ into a genus-$g$ surface $\mathbb S$,
then the flow $\vec f$ can be viewed as an embedding of $K_n$ into $\mathbb S$,
where the edges of $K_n$ only cross at the vertices of $G$ in $\mathbb S$.
Now one can easily upper bound the number of crossings
by $\sum_{v \in V} \left( \sum_{uv \in E} C(e) \right)^2 \leq D^2 \sum_{e \in E} C(e)^2$.
On the other hand, it is well-known that the crossing number of $K_n$
in a genus-$g$ surface is at least $\Omega(\frac{n^4}{g^2})$ [is this easier than the Ajtai, et. al / Leighton
general lower bound?], thus we conclude
that
$$
\sum_{e \in E} C(e)^2 = \Omega\left(\frac{n^4}{(gD)^2}\right).
$$
\end{proof}

This brings us to our main result on deformation of genus-$g$ metrics,
which combines Theorem \ref{thm:crossing} with Corollary \ref{cor:main}.

\begin{theorem}[Metric deformation for bounded genus graphs]
\label{thm:deform}
On any graph $G=(V,E)$ with genus at most $g$ and max degree $D$, there
exists a shortest-path metric $d$ on $G$ with
$$
\Lambda_G(d) = O\left(\frac{Dg}{n}\right)^2.
$$
\end{theorem}
}
}

\section{Metrics, eigenvalues, and random partitions}

\subsection{Random partitions and KPR}

Let $(X,d)$ be a metric space,
If $P$
is a partition of $X$, we will also consider
it as a function $P : X \to 2^X$ such that
for $x \in X$, $P(x)$ is the unique $C \in P$
for which $x \in C$.

Let $\mu$ be a distribution over
partitions of $X$,
and let $P$ be a random partition
distributed according to $\mu$.
We say that $P$ is $\Delta$-bounded if
it always holds that for $S \in P$,
$\diam(S) \leq \Delta$.

Given a $\Delta$-bounded random partition $P$,
we say that $P$ is $\alpha$-padded if
for every $x \in X$, we have
$$
\Pr\left[ B(x,\Delta/\alpha) \subseteq P(x)\right] \geq \tfrac12,
$$
where $B(x,r) = \{ y : d(x,y) \leq r \}$ denotes
the closed ball of radius $r$ about $x$.

We recall that the {\em modulus of padded decomposability}
is the value
$$\alpha(X,d) = \sup_{\Delta \geq 0} \left\{ \vphantom{\bigoplus}
\alpha : \textrm{$X$ admits a $\Delta$-bounded
$\alpha$-padded random partition} \right\}.$$
Now we can state a consequence of the main theorem
of Klein, Plotkin, and Rao \cite{KPR93}.

\begin{theorem}\label{thm:KPR}
Let $G = (V,E)$ be any graph which excludes $K_r$ as a minor,
let $\len :E \to \mathbb R_+$ be an assignment
of non-negative lengths to the edges of $G$,
and let $d_G$ be the associated shortest-path metric on $G$.
Then $\alpha(V,d_G) = O(r^2)$.
\end{theorem}

\subsection{From metrics to eigenvalues: $\lambda_2$}

Our main theorem relates $\lambda_2(G)$ to the metrics
it supports.

\begin{theorem}\label{thm:eigen}
Let $G = (V,E)$ be a graph, and let $d : V \times V \to \mathbb R_+$
be any metric on $V$, then
$$
\lambda_2(G) = O\left(\left[\alpha(V,d)\right]^2 n\cdot \Lambda_G(d)\right)
$$
where $\lambda_2(G)$ is the second eigenvalue of the Laplacian on $G$.
\end{theorem}

\begin{proof}
First, we note that
$$
\lambda_2(G) = \min_{\sum_{u\in V} f(u) = 0} \frac{\sum_{uv \in E} (f(u)-f(v))^2}{\sum_{u \in V} f(u)^2}
\leq
\min_{f : V \to \mathbb R} n \cdot \frac{\sum_{uv \in E} (f(u)-f(v))^2}{\sum_{u,v \in V} (f(u)-f(v))^2},
$$
where the latter minimum places no constraint on $f : V \to \mathbb R$.

Now, let $\Delta_2 = \sqrt{\frac{1}{n^2} \sum_{u,v \in V} d(u,v)^2}$.
First, we handle the case when many points are clustered about
a single node $x_0 \in V$.

\bigskip
\noindent
{\bf Case I: There exists $x_0 \in V$ for which $|B(x_0, \frac14 \Delta_2)| \geq \frac{n}{10}$.}

\medskip
\noindent
In this case, let $S = B(x_0, \frac14 \Delta_2)$, and define $f(u) = d(u,S)$.
First, we have
\begin{eqnarray*}
n^2 \Delta_2^2 = \sum_{u,v \in V} d(u,v)^2 &\leq& 2 \sum_{u,v} \left[d(u,x_0)^2 + d(v,x_0)^2\right] \\
&=& 4n \sum_{u \in V} d(u,x_0)^2 \\
&\leq& 4n \sum_{u \in V} \left[ d(u,S) + \frac{\Delta_2}{4}\right]^2 \\
&\leq& \frac{n^2 \Delta_2^2}{2} + 8n \sum_{u \in V} d(u,S)^2.
\end{eqnarray*}
Therefore, $\sum_{u \in V} d(u,S)^2 \geq \frac{n \cdot \Delta_2^2}{16}.$
We conclude that
\begin{eqnarray*}
\sum_{u,v \in V} (f(u)-f(v))^2 &=& \sum_{u,v \in V} \left[ d(u,S) - d(v,S) \right]^2 \\
&\geq&
\sum_{u \notin S, v \in S} \left[d(u,S) - d(v,S)\right]^2 \\
&=&
\sum_{u \notin S, v \in S} d(u,S)^2 \\
&=&
|S| \sum_{u \in V} d(u,S)^2 \\
&\geq &
\frac{n}{10} \frac{n \cdot \Delta_2^2}{16} = \frac{n^2 \Delta_2^2}{160} =
\frac1{160} \sum_{u,v\in V} d(u,v)^2
\end{eqnarray*}
Now we use $|d(u,S)-d(v,S)| \leq d(u,v)$ and the preceding calculation to bound
\begin{eqnarray*}
\lambda_2(G) \leq
n \cdot \frac{\sum_{uv \in E} (f(u)-f(v))^2}{\sum_{u,v \in V} (f(u)-f(v))^2} &\leq &
160 n \cdot \frac{\sum_{uv \in E} d(u,v)^2}{\sum_{u,v \in V} d(u,v)^2}
= O(n \cdot \Lambda_G(d)).
\end{eqnarray*}
So we have finished the clustered case.

\bigskip
\noindent
{\bf Case II: For every $u \in V$, $|B(u, \frac14 \Delta_2)| < \frac{n}{10}$.}

\medskip
\noindent
In particular, we know that for any subset $T \subseteq V$
with $\diam(T) \leq \frac14 \Delta_2$, we have $|T| < n/10$.

Now, let $P$ be a random partition of $V$ which is
$\frac14 \Delta_2$ bounded and $\alpha$-padded, where
$\alpha = \alpha(V,d)$.
We know that for every $x \in V$, we have
$$
\Pr\left[B(x, \Delta_2/(4\alpha)) \subseteq P(x)\right] \geq \frac12.
$$
So by Markov's inequality, it must be that there exists a partition
$P_0$ such that the set $$H_0 = \left\{ x \in V : B(x,\Delta_2/(4\alpha)) \subseteq P(x)\right\}$$
has $|H_0| \geq n/2$.  Fix this choice of $P_0$ and $H_0$.

Let $\{\sigma_C\}_{C \in P_0}$
be a collection of i.i.d. uniform 0/1 random variables, one for each
cluster $C \in P_0$ and define $S = \bigcup_{C \in P_0 : \sigma_C = 0} C$.
Finally, define $f : V \to \mathbb R$
by $f(u) = d(u,S)$.

Note that $f$ is a random function.
We will now argue that
\begin{equation}\label{eq:central}
\mathbb E \left[ \sum_{u,v \in V} (f(u)-f(v))^2 \right] \geq \Omega\left(\frac{n\Delta_2}{\alpha}\right)^2
= \Omega\left(\alpha^{-2}\right) \sum_{u,v\in V} d(u,v)^2,
\end{equation}
which will imply (Markov's inequality)
that there exists a choice of $f : V \to \mathbb R$
for which the sum is at least $\Omega(\alpha^{-2}) \sum_{u,v\in V} d(u,v)^2.$

Suppose we find such an $f : V \to \mathbb R$, then we have again
\begin{eqnarray*}
\lambda_2(G) \leq
n \cdot \frac{\sum_{uv \in E} (f(u)-f(v))^2}{\sum_{u,v \in V} (f(u)-f(v))^2} &\leq &
O(\alpha^2 n) \cdot \frac{\sum_{uv \in E} d(u,v)^2}{\sum_{u,v \in V} d(u,v)^2}
= O(\alpha^2 n \cdot \Lambda_G(d)),
\end{eqnarray*}
completing the proof.

So it remains to prove \eqref{eq:central}.  Note that for every $C \in P_0$,
we have $\diam(C) \leq \Delta_2/4$, so since we are in case (II), we
have $|C| \leq n/10$.  Write
\begin{eqnarray}
\sum_{u,v \in V} (f(u)-f(v))^2 &=&
\sum_{u,v \in V} (d(u,S)-d(v,S))^2 \nonumber \\
&\geq & \label{eq:almost}
\sum_{C \in P_0} \sum_{u \in C \cap H_0} \sum_{v \notin C} (d(u,S)-d(v,S))^2.
\end{eqnarray}
So let's estimate $\mathbb E\left[ (d(u,S)-d(v,S))^2 \right]$ for $u \in C \cap H_0$
and $v \notin C$.  Since $u,v$ lie in different clusters, conditioned on
what happens for $v$, we have $d(u,S)$ oscillating randomly between $0$
when $\sigma_C = 0$ and some value greater than $\Delta_2/(4\alpha)$
when $\sigma_C = 1$ (since $u \in H_0$).  It follows that
$\mathbb E\left[ (d(u,S)-d(v,S))^2 \right] \geq \frac{\Delta_2^2}{64 \alpha^2}$.

Plugging this into \eqref{eq:almost} and using $|C| < n/10$ for every $C \in P_0$ yields
\begin{eqnarray*}
\mathbb E \left[\sum_{u,v \in V} (f(u)-f(v))^2\right]
&\geq& \sum_{C \in P_0} \sum_{u \in C \cap H_0} |V \setminus C| \cdot \frac{\Delta_2^2}{64\alpha^2} \\
&\geq & \sum_{C \in P_0} \sum_{u \in C \cap H_0} \frac{9n}{10} \cdot \frac{\Delta_2^2}{64\alpha^2} \\
&=& |H_0| \frac{9n}{10} \frac{\Delta_2^2}{64\alpha^2} = \Omega\left(\frac{n \Delta_2}{\alpha} \right)^2,
\end{eqnarray*}
finishing our proof of \eqref{eq:central}.
\end{proof}

\remove{
\subsection{$L_2$-average distortion and eigenvalues}

We can generalize the approach of preceding section as follows.
Let $(X,d)$ be an arbitrary metric space.  For a non-expansive map $f : X \to \mathbb R$,
we define the {\em $L_2$-average distortion of $f$} by
$$\overline{\dist}_2(f) = \left(\frac{\sum_{u,v \in V} d(u,v)^2}{\sum_{u,v \in V} |f(u)-f(v)|^2}\right)^{1/2}.$$
We then define the {\em $L_2$-average distortion of $(X,d)$} by
$\overline{c}_2(X) = \inf_{f :X \to \mathbb R} \overline{\dist}_2(f)$,
where the infimum is over all 1-Lipschitz maps $f : X \to \mathbb R$.  It is clear
that we gain no advantage by considering maps $f : X \to L_2$.

The following theorem is straightforward.
\begin{theorem}
Let $G = (V,E)$ be a graph, and let $d : V \times V \to \mathbb R_+$ be
any metric on $V$.  Then $\lambda_2(G) = O(n \cdot \mathcal R_G(d)) \cdot [\overline{c}_2(V,d)]^2$.
\end{theorem}

What we showed in the previous section is actually that whenever $G=(V,E)$
is a graph which excludes $K_r$ as a minor
and $d_G$ is a shortest-path metric on $G$, then
$\overline{c}_2(V,d_G) = O(r^2)$.

Extending this approach to higher eigenvalues is also straightforward.
\begin{theorem}
Let $G = (V,E)$ be a graph with metric $d : V \times V \to \mathbb R_+$.
Suppose there exist $k$ non-expansive functions $f_1, f_2, \ldots, f_k : V \to \mathbb R$
with $\supp(f_i) \cap \supp(f_j) = \emptyset$ for all $i \neq j \in \{1,2,\ldots,k\}$.
Then,
$$
\lambda_k = O(n \cdot \mathcal R_G(d)) \cdot \left(\max_{i=1}^k \overline{\dist}_2(f_i)\right)^2,
$$
where $0 = \lambda_1 \leq \lambda_2 \leq \cdots \leq \lambda_n$ are the
eigenvalues of the Laplacian on $G$.
\end{theorem}
}

\subsection{Bounding the eigenvalues of bounded genus graphs}

Combining the preceding sections, we arrive at:

\begin{theorem}
If $G=(V,E)$ is any genus $g$ graph with maximum degree $D$, then
$$
\lambda_2(G) = O\left(\frac{g^6 D^2}{n}\right).
$$
\end{theorem}

\begin{proof}
By Theorem \ref{thm:deform}, there is a shortest path metric $d$ on $G$
with $\Lambda_G(d) = O(gD/n)^2.$  By Theorem \ref{thm:KPR}, we have
$\alpha(V,d) = O(g^2)$.  Thus applying Theorem \ref{thm:eigen},
we arrive at the desired result.
\end{proof}
}

\subsection{Geometric graphs}
\label{sec:geographs}

In practice, spectral methods are applied to graphs arising from a
variety of geometric settings, not limited to surfaces of fixed genus.
Miller \emph{et al.}\ considered $k$-ply neighborhood systems,
$k$-nearest neighbor graphs, and well-shaped finite element meshes in
any fixed dimension \cite{mt-sphere-packings,mt-mesh}.
They used geometric techniques to efficiently find small ratio cuts
for these classes of graphs.
Spielman and Teng give bounds for the
second eigenvalue of these graphs,
thus showing that these cuts
can be recovered by spectral partitioning \cite{spielman-teng}.

While none of these graph families exclude a fixed set of minors, some
of them have been shown to lack small minors at small depth. We can
adapt the proofs from the preceding subsections to this setting using
the following lemma.

%\newpage
\begin{lemma}
  \label{lem:eigenshallow} Let $G = (V,E)$ have constant maximum
  degree and exclude a $K_h$-minor at depth $L$, where $h = L^p$ for
  some constant $p$ and $|V| = n$. We have,
  \begin{itemize}
  \item $\lambda_2(G) = \tilde O(n^{-1/(1+p)})$.
  \item $G$ has a vertex separator $(A,B,S)$ satisfying $\alpha(A,B,S)
    = \tilde O(n^{-1-\frac{1}{2+2p}})$. It also has
    $(1/3,2/3)$-balanced separators of size $\tilde O(n^{1-\frac{1}{2+2p}})$.
  \end{itemize}
\end{lemma}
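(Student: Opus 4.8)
The plan is to push the congestion lower bound for shallow--excluded--minor graphs through the same metric--to--eigenvalue and metric--to--separator pipeline used for Theorems~\ref{thm:eigenminor} and \ref{thm:separators}, with one change: since a graph that excludes minors only at bounded depth need not exclude any fixed minor, the KPR padding bound (Theorem~\ref{thm:KPR}) is unavailable, so I would instead use Bartal's generic estimate $\alpha(V,d_s) = O(\log n)$ (Theorem~\ref{thm:bartal}). First, because $G$ excludes $K_{L^p}$ at depth $L$ for every $L$ and $p$ is a fixed constant, the corollary to Theorem~\ref{thm:shallowcon} gives, for every unit $K_n$-flow $F$ and all sufficiently large $n$,
\[
\vcon_2(F) \;\geq\; c\,n^{3/2}\bigl(\tfrac{n}{p}\bigr)^{1/(2p+2)} \;=\; \Omega\bigl(n^{\,3/2 + 1/(2p+2)}\bigr),
\]
the exponent arising from balancing the two terms $n^2/(h\sqrt{\log h})$ and $n^{3/2}L$ of Theorem~\ref{thm:shallowcon} at $L \approx n^{1/(2p+2)}$ (up to polylogarithmic factors). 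By Theorem~\ref{thm:duality}, the same quantity lower bounds $\max_{s : V \to \R_+}\Lambda_s(G)$.

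For the eigenvalue claim, fix $s$ attaining $\Lambda_s(G) = \Omega(n^{3/2+1/(2p+2)})$ and apply Theorem~\ref{thm:eigenbound} with $d_{\max} = O(1)$ and $\alpha(V,d_s) = O(\log n)$:
\[
\lambda_2(G) \;\lesssim\; \frac{d_{\max}\,n^3(\log n)^2}{\Lambda_s(G)^2} \;\lesssim\; \frac{n^3(\log n)^2}{n^{\,3 + 1/(p+1)}} \;=\; \tilde{O}\bigl(n^{-1/(1+p)}\bigr),
\]
using $2\cdot\tfrac{1}{2p+2} = \tfrac{1}{p+1}$. For the separator claims I would follow the proof of Theorem~\ref{thm:separators}: by Cauchy--Schwarz and Theorem~\ref{thm:duality},
\[
\min_{s : V\to\R_+}\frac{\sum_{v} s(v)}{\sum_{u,v} d_s(u,v)} \;\leq\; \frac{\sqrt n}{\max_s \Lambda_s(G)} \;=\; \tilde{O}\bigl(n^{-1-1/(2+2p)}\bigr);
\]
take $s$ achieving this minimum, apply Theorem~\ref{thm:embedding} with exponent $1$ (losing a further $O(\log n) = \tilde{O}(1)$ from $\alpha(V,d_s) = O(\log n)$) to obtain a non-expansive $f : (V,d_s) \to \R$ with $\sum_v s(v) \leq \tilde{O}(n^{-1-1/(2+2p)})\sum_{u,v}|f(u)-f(v)|$, and then invoke Lemma~\ref{lem:FHL} to get a separator $(A,B,S)$ with $\alpha(A,B,S) = \tilde{O}(n^{-1-1/(2+2p)})$. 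The passage to $(1/3,2/3)$-balanced separators of size $\tilde{O}(n^{\,1-1/(2+2p)})$ is the standard recursive quotient-cut argument (cf.\ \cite[\S 6]{FHL05}); it stays inside the hypothesis class because shallow-minor exclusion is inherited by subgraphs, as deleting vertices or edges only increases distances and hence does not increase branch-set diameters.

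No step here is genuinely hard once Theorem~\ref{thm:shallowcon} is in hand; the points to watch are (i) that one must accept the $O(\log n)$ padding bound rather than a minor-based one, which is exactly why the conclusions are stated with $\tilde{O}(\cdot)$ rather than clean constants, and (ii) keeping the exponent arithmetic straight --- in particular that the optimal choice of $L$ converts the ``$n^{3/2}L$'' term into $n^{3/2 + 1/(2p+2)}$, and that $\tfrac{1}{2p+2} = \tfrac{1}{2+2p}$.
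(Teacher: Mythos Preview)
Your proposal is correct and follows essentially the same route as the paper: use Bartal's $O(\log n)$ padding bound in place of KPR, combine Theorem~\ref{thm:shallowcon} (via its corollary, which the paper re-derives in-line by choosing $L = n^{1/(2+2p)}$) with duality to lower bound $\Lambda_s(G)$, and then feed this through Theorem~\ref{thm:eigenbound} for $\lambda_2$ and through the Cauchy--Schwarz/Theorem~\ref{thm:embedding}/Lemma~\ref{lem:FHL} pipeline for separators. Your explicit remark that shallow-minor exclusion is inherited by subgraphs (needed for the recursive step) is a point the paper leaves implicit.
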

\begin{proof}
  For any weight function $s\colon V \to \R_+$, we have $\alpha(V,
  d_s) = \tilde O(1)$ by Theorem~\ref{thm:bartal}.  By
  Theorems~\ref{thm:duality} and \ref{thm:shallowcon}, we have
  \[
  \max_{s\colon V \to \R_+} \Lambda_s(G) =
  \min_F \vcon_2(F)
  \gtrsim \min\left( \frac{n^2}{h \sqrt{\log h}},
    n^{3/2} L\right)
  = \tilde\Omega\left(n^{\frac{3}{2} + \frac{1}{2+2p}}\right)
  \]
  where the last equality follows by setting $L=n^{1/(2+2p)}$, and the
  $\tilde O(\cdot)$ and $\tilde\Omega(\cdot)$ notations hide $\mathrm{poly}(\log n)$ factors. The
  eigenvalue bound follows immediately from an application of
  Theorem~\ref{thm:eigenbound}.

  As in the proof of Theorem~\ref{thm:separators}, we have
  \[
  \min_{s\colon V\to \R_+} \frac{\sum_{v\in V} s(v)}{\sum_{u,v\in V}d_s(u,v)}
  \leq \frac{\sqrt n}{\max_{s\colon V\to \R_+} \Lambda_s(G)}
  = \tilde O\left(n^{-1-\frac{1}{2+2p}}\right).
  \]
  This gives us a non-expansive map $f \colon (V,d_s) \to \R$ with
  \[
  \frac{\sum_{v\in V} s(v)}{\sum_{u,v\in V}|f(u)-f(v)|}
  \leq \tilde O(1) \frac{\sum_{v\in V} s(v)}{\sum_{u,v\in V}d_s(u,v)}
  = \tilde O\left(n^{-1-\frac{1}{2+2p}}\right).
  \]
  An application of Lemma~\ref{lem:FHL} establishes the bound on
  $\alpha(A,B,S)$, and recursive quotient cuts can be used to find the
  corresponding balanced separators.
\end{proof}

\begin{theorem}
  \label{thm:eigengeom} Let $G = (V,E)$ be a graph with $n$
  vertices and constant maximum degree.
  \begin{itemize}
  \item If $G$ is a simplicial graph in $d$ dimensions with constant
    aspect ratio (see \cite{miller-thurston} for a detailed
    definition), then $\lambda_2(G) = \tilde O(n^{-1/d})$ and $G$ has
    balanced separators of size $\tilde O(n^{1-\frac1{2d}})$.
  \item If $G$ is an \emph{arbitrary} $k$-nearest neighbor graph in
    $d$ dimensions, then $\lambda_2(G) = \tilde O(n^{-1/(1+d)})$ and
    $G$ has balanced separators of size $\tilde
    O(n^{1-\frac{1}{2+2d}})$.
  \item If $G$ is a $d$-dimensional grid, then $\lambda_2(G) = \tilde
    O(n^{-2/(2+d)})$ and $G$ has balanced separators of size $\tilde
    O(n^{1-\frac{1}{2+d}})$. This is also true with high probability
    when $G$ is the relative neighborhood graph, the Delaunay diagram,
    or the $k$-nearest neighbor graph of a \emph{random} point set in
    $d$ dimensions.
  \end{itemize}
\end{theorem}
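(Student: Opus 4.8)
The plan is to deduce all three bullets directly from Lemma~\ref{lem:eigenshallow}: for each family I would pin down the constant $p$ such that the graph excludes a $K_{L^p}$-minor at depth $L$, and then simply read off the eigenvalue and separator bounds from the lemma. No part of the flow or embedding machinery needs to be revisited; the content is entirely in identifying $p$, for which I would combine the ``shallow excluded minor'' framework of Plotkin, Rao, and Smith~\cite{PRS94} with the geometric packing facts of Miller \emph{et al.}~\cite{mt-sphere-packings, mt-mesh, miller-thurston}. These bounds are the output of a generic reduction and are not claimed to be tight, so a conservative choice of $p$ is acceptable.

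I would start with the grid and the random point-set cases, where a shallow minor is automatically localized. If $G$ has constant maximum degree and every radius-$R$ ball of $G$ spans $O(R^d)$ vertices, then any $K_h$-minor at depth $L$ is supported inside one ball of radius $O(L)$: the representatives of the branch sets are pairwise within distance $2L+1$, and each branch set lies within distance $L$ of its representative. Such a ball induces a subgraph with $O(L^d)$ vertices and hence $O(L^d)$ edges, while a $K_h$-minor uses at least $\binom{h}{2}$ edges (one between each pair of branch sets, all distinct since the branch sets are disjoint); this forces $h = O(L^{d/2})$, i.e.\ $G$ excludes $K_{L^{d/2}}$ at depth $L$. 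The $d$-dimensional grid has exactly this ball-growth, so Lemma~\ref{lem:eigenshallow} with $p = d/2$ gives $\lambda_2(G) = \tilde O(n^{-2/(2+d)})$ and balanced separators of size $\tilde O(n^{1-1/(2+d)})$. For the relative neighborhood graph, the Delaunay diagram, and the $k$-nearest-neighbor graph of a \emph{random} $d$-dimensional point set, I would invoke the standard fact that with high probability such a graph on $n$ random points has bounded maximum degree and $O(R^d)$ ball-growth --- a concentration statement about the local density of the underlying point process --- after which the same argument yields the stated bounds with high probability.

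For the simplicial meshes I would use the sharper structural input available for bounded-aspect-ratio complexes: such a mesh is realized in $\R^d$ with graph distance comparable to Euclidean distance, and the machinery behind the $O(n^{1-1/d})$ mesh separator theorem of \cite{mt-mesh, miller-thurston} (equivalently the shallow-minor bound of \cite{PRS94}) shows it excludes $K_h$ at depth $L$ already for $h = \Omega(L^{d-1})$. Feeding $p = d-1$ into Lemma~\ref{lem:eigenshallow} gives $\lambda_2(G) = \tilde O(n^{-1/d})$ and separators of size $\tilde O(n^{1-1/(2d)})$, which is the first bullet. For an \emph{arbitrary} $k$-nearest-neighbor graph in $d$ dimensions the balls need not be small, so the local edge-counting argument breaks down; here I would instead use the geometric shallow-minor bound of Plotkin, Rao, and Smith~\cite{PRS94}, which exploits the $d$-dimensional embedding together with the defining property of $k$-NN graphs (only $O(k)$ points have a given point among their $k$ nearest neighbors, and a packing argument bounds how many bounded-diameter branch sets can be mutually adjacent) to show that $K_h$-minors at depth $L$ require $h = \Omega(L^d)$. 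Lemma~\ref{lem:eigenshallow} with $p = d$ then produces the middle bullet.

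The main obstacle is not the spectral step --- that is packaged entirely in Lemma~\ref{lem:eigenshallow} --- but cleanly supplying its hypothesis, and the delicate case is the arbitrary $k$-NN graph, where a depth-$L$ minor cannot be confined to a small subgraph and one must run a genuinely geometric argument in $\R^d$ rather than a one-line edge count. A secondary but still nontrivial point is the ``with high probability'' claim for random point sets, where one must show that the ball-growth bound and the maximum-degree bound hold simultaneously except with probability $o(1)$. Once those inputs are established, the remainder is bookkeeping: matching the exponent $p$ to the target exponents and quoting the two conclusions of Lemma~\ref{lem:eigenshallow}.
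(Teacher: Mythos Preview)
Your proposal is correct and follows the same approach as the paper: identify, for each geometric family, the exponent $p$ such that the graph excludes $K_{L^p}$ at depth $L$, and then invoke Lemma~\ref{lem:eigenshallow}. The only differences are in sourcing rather than substance: the paper simply cites the shallow-minor bounds from the literature (Plotkin \emph{et al.} for simplicial graphs and grids; Teng for arbitrary $k$-NN graphs and for the random point-set cases) rather than sketching the ball-growth argument you give, and it attributes the $h=\Omega(L^d)$ bound for arbitrary $k$-NN graphs to Teng rather than to \cite{PRS94}.
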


\begin{proof}
  The results follow from the corresponding bounds on excluded shallow
  minors. From Plotkin \emph{et al.} \cite{shallow-minors}, we have $h
  = \Omega_d(L^{d-1})$ for simplicial graphs and $h =
  \Omega_d(L^{d/2})$ for grids. From Teng \cite{teng}, we have $h =
  \Omega(L^d)$ for arbitrary $k$-nearest neighbor graphs and, with
  high probability, $h = \Omega(L^{d/2})$ for the relative
  neighborhood graph, the Delaunay diagram, and the $k$-nearest
  neighbor graph of a random point set.
\end{proof}

\begin{remark}
  Spielman and Teng \cite{spielman-teng} prove that
  $k$-nearest-neighbor graphs and well-shaped meshes have $\lambda_2$ of
  value $O(n^{-2/d})$ and balanced separators of ratio
  $O(n^{-1/d})$. Their results are better than ours by a square;
  we suspect this is due to the non-tightness of the
  bounds on shallow excluded minors for these graph families.
\end{remark}

%\newpage
\remove{

\section{Discussion and open problems}

\begin{enumerate}
\item Clearly it would be nice if our approach could give $\lambda_2(G) \leq O(\frac{g d_{\max}}{n})$,
which would be optimal.  This bound matches 
\item Higher eigenvalues
\item Geometric graphs
\item Improve KPR yields almost-optimal separator theorem
\end{enumerate}
}

\subsection*{Acknowledgements}
We thank Oded Schramm for helpful pointers
to the literature on discrete conformal mappings.

\bibliographystyle{abbrv}
\bibliography{spectral}

\def\cprime{$'$} \def\cprime{$'$}
\begin{thebibliography}{10}

\bibitem{acns}
M.~Ajtai, V.~Chv\'atal, M.~Newborn, and E.~Szemer\'edi.
\newblock Crossing-free subgraphs.
\newblock In A.~Kotzig, A.~Rosa, G.~Sabidussi, and J.~Turgeon, editors, {\em
  Theory and Practice of Combinatorics: A Collection of Articles Honoring Anton
  Kotzig on the Occasion of His Sixtieth Birthday.}, volume~12 of {\em Annals
  of discrete mathematics}. North-Holland, Amsterdam, 1982.

\bibitem{AM85}
N.~Alon and V.~D. Milman.
\newblock {$\lambda\sb 1,$} isoperimetric inequalities for graphs, and
  superconcentrators.
\newblock {\em J. Combin. Theory Ser. B}, 38(1):73--88, 1985.

\bibitem{AST90}
N.~Alon, P.~Seymour, and R.~Thomas.
\newblock A separator theorem for nonplanar graphs.
\newblock {\em J. Amer. Math. Soc.}, 3(4):801--808, 1990.

\bibitem{AK95}
C.~J. Alpert and A.~B. Kahng.
\newblock Recent directions in netlist partitioning: A survey.
\newblock {\em Integration: The VLSI J.}, 19:1--81, 1995.

\bibitem{Bartal96}
Y.~Bartal.
\newblock Probabilistic approximations of metric space and its algorithmic
  application.
\newblock In {\em 37th Annual Symposium on Foundations of Computer Science},
  pages 183--193, Oct. 1996.

\bibitem{BCE80}
B.~Bollob{\'a}s, P.~A. Catlin, and P.~Erd{\H{o}}s.
\newblock Hadwiger's conjecture is true for almost every graph.
\newblock {\em European J. Combin.}, 1(3):195--199, 1980.

\bibitem{Bourgain85}
J.~Bourgain.
\newblock On {L}ipschitz embedding of finite metric spaces in {H}ilbert space.
\newblock {\em Israel J. Math.}, 52(1-2):46--52, 1985.

\bibitem{boyd}
S.~Boyd and L.~Vandenberghe.
\newblock {\em Convex optimization}.
\newblock Cambridge University Press, Cambridge, 2004.

\bibitem{CKR01}
G.~Calinescu, H.~Karloff, and Y.~Rabani.
\newblock Approximation algorithms for the 0-extension problem.
\newblock In {\em Proceedings of the 12th Annual ACM-SIAM Symposium on Discrete
  Algorithms}, pages 8--16, Philadelphia, PA, 2001. SIAM.

\bibitem{CSZ93}
P.~K. Chan, M.~Schlag, and J.~Zien.
\newblock Speactral $k$-way ratio cut partitioning and clustering.
\newblock In {\em Proceedings of the Symposium on Integrated Systems}, 1993.

\bibitem{CR87}
T.~F. Chan and D.~C. Resasco.
\newblock A framework for the analysis and construction of domain decomposition
  preconditioners.
\newblock Technical Report CAM-87-09, UCLA, 1987.

\bibitem{CS93}
T.~F. Chan and B.~Smith.
\newblock Domain decomposition and multigrid algorithms for elliptic problems
  on unstructured meshes.
\newblock {\em Contemp. Math.}, pages 1--14, 1993.

\bibitem{Cheeger70}
J.~Cheeger.
\newblock A lower bound for the smallest eigenvalue of the {L}aplacian.
\newblock In {\em Problems in analysis (Papers dedicated to Salomon Bochner,
  1969)}, pages 195--199. Princeton Univ. Press, Princeton, N. J., 1970.

\bibitem{Chung97}
F.~R.~K. Chung.
\newblock {\em Spectral graph theory}, volume~92 of {\em CBMS Regional
  Conference Series in Mathematics}.
\newblock Published for the Conference Board of the Mathematical Sciences,
  Washington, DC, 1997.

\bibitem{DiestelBook}
R.~Diestel.
\newblock {\em Graph theory}, volume 173 of {\em Graduate Texts in
  Mathematics}.
\newblock Springer-Verlag, Berlin, third edition, 2005.

\bibitem{Duffin62}
R.~J. Duffin.
\newblock The extremal length of a network.
\newblock {\em J. Math. Anal. Appl.}, 5:200--215, 1962.

\bibitem{FHL05}
U.~Feige, M.~T. Hajiaghayi, and J.~R. Lee.
\newblock Improved approximation algorithms for minimum-weight vertex
  separators.
\newblock In {\em 37th Annual ACM Symposium on Theory of Computing}. ACM, 2005.
\newblock To appear, {\em SIAM J. Comput.}

\bibitem{Vega83}
W.~Fernandez de~la Vega.
\newblock On the maximum density of graphs which have no subcontraction to
  {$K\sp{s}$}.
\newblock {\em Discrete Math.}, 46(1):109--110, 1983.

\bibitem{GHT84}
J.~R. Gilbert, J.~P. Hutchinson, and R.~E. Tarjan.
\newblock A separator theorem for graphs of bounded genus.
\newblock {\em J. Algorithms}, 5(3):391--407, 1984.

\bibitem{GY99}
A.~Grigor{\cprime}yan and S.-T. Yau.
\newblock Decomposition of a metric space by capacitors.
\newblock In {\em Differential equations: La Pietra 1996 (Florence)}, volume~65
  of {\em Proc. Sympos. Pure Math.}, pages 39--75. Amer. Math. Soc.,
  Providence, RI, 1999.

\bibitem{GM98}
S.~Guattery and G.~L. Miller.
\newblock On the quality of spectral separators.
\newblock {\em SIAM J. Matrix Anal. Appl.}, 19(3):701--719 (electronic), 1998.

\bibitem{HK92}
L.~Hagen and A.~B. Kahng.
\newblock New spectral methods for ratio cut partitioning and clustering.
\newblock {\em IEEE Trans. Computer-Aided Des.}, 11(9):1074--1085, 1992.

\bibitem{HS95}
Z.-X. He and O.~Schramm.
\newblock Hyperbolic and parabolic packings.
\newblock {\em Discrete Comput. Geom.}, 14(2):123--149, 1995.

\bibitem{Hersch70}
J.~Hersch.
\newblock Quatre propri\'et\'es isop\'erim\'etriques de membranes sph\'eriques
  homog\`enes.
\newblock {\em C. R. Acad. Sci. Paris S\'er. A-B}, 270:A1645--A1648, 1970.

\bibitem{Kelner06}
J.~A. Kelner.
\newblock Spectral partitioning, eigenvalue bounds, and circle packings for
  graphs of bounded genus.
\newblock {\em SIAM J. Comput.}, 35(4):882--902 (electronic), 2006.

\bibitem{KPR93}
P.~N. Klein, S.~A. Plotkin, and S.~Rao.
\newblock Excluded minors, network decomposition, and multicommodity flow.
\newblock In {\em Proceedings of the 25th Annual ACM Symposium on Theory of
  Computing}, pages 682--690, 1993.

\bibitem{K93}
N.~Korevaar.
\newblock Upper bounds for eigenvalues of conformal metrics.
\newblock {\em J. Differential Geom.}, 37(1):73--93, 1993.

\bibitem{Kost82}
A.~V. Kostochka.
\newblock The minimum {H}adwiger number for graphs with a given mean degree of
  vertices.
\newblock {\em Metody Diskret. Analiz.}, (38):37--58, 1982.

\bibitem{KLMN05}
R.~Krauthgamer, J.~R. Lee, M.~Mendel, and A.~Naor.
\newblock Measured descent: a new embedding method for finite metrics.
\newblock {\em Geom. Funct. Anal.}, 15(4):839--858, 2005.

\bibitem{Leighton}
F.~T. Leighton.
\newblock {\em Complexity Issues in VLSI}.
\newblock MIT Press, 1983.

\bibitem{LT79}
R.~J. Lipton and R.~E. Tarjan.
\newblock A separator theorem for planar graphs.
\newblock {\em SIAM J. Appl. Math.}, 36(2):177--189, 1979.

\bibitem{Lo06}
L.~Lov{\'a}sz.
\newblock Graph minor theory.
\newblock {\em Bull. Amer. Math. Soc. (N.S.)}, 43(1):75--86 (electronic), 2006.

\bibitem{LovGRep07}
L.~Lov\'asz.
\newblock Geometric representations of graphs.
\newblock {\small \verb|http://www.cs.elte.hu/~lovasz/geomrep.pdf|}, 2007.

\bibitem{Mih89}
G.~Mihail.
\newblock Conductance and convergence of {M}arkov chains---a combinatorial
  treatment of expanders.
\newblock In {\em {IEEE} Symposium on Foundations of Computer Science}, pages
  14--23, 1989.

\bibitem{mt-sphere-packings}
G.~L. Miller, S.-H. Teng, W.~Thurston, and S.~A. Vavasis.
\newblock Separators for sphere-packings and nearest neighbor graphs.
\newblock {\em J. ACM}, 44(1):1--29, 1997.

\bibitem{mt-mesh}
G.~L. Miller, S.-H. Teng, W.~Thurston, and S.~A. Vavasis.
\newblock Geometric separators for finite-element meshes.
\newblock {\em SIAM Journal on Scientific Computing}, 19(2):364--386, 1998.

\bibitem{miller-thurston}
G.~L. Miller and W.~Thurston.
\newblock Separators in two and three dimensions.
\newblock In {\em STOC '90: Proceedings of the twenty-second annual ACM
  symposium on Theory of computing}, pages 300--309, New York, NY, USA, 1990.
  ACM.

\bibitem{PRS94}
S.~Plotkin, S.~Rao, and W.~D. Smith.
\newblock Shallow excluded minors and improved graph decompositions.
\newblock In {\em Proceedings of the Fifth Annual ACM-SIAM Symposium on
  Discrete Algorithms (Arlington, VA, 1994)}, pages 462--470, New York, 1994.
  ACM.

\bibitem{shallow-minors}
S.~Plotkin, S.~Rao, and W.~D. Smith.
\newblock Shallow excluded minors and improved graph decompositions.
\newblock In {\em SODA '94: Proceedings of the fifth annual ACM-SIAM symposium
  on Discrete algorithms}, pages 462--470, Philadelphia, PA, USA, 1994. Society
  for Industrial and Applied Mathematics.

\bibitem{PSW92}
A.~Pothen, H.~D. Simon, and L.~Wang.
\newblock Spectral nested disection.
\newblock Technical report CS-92-01, Pennsylvania State University, Department
  of Computer Science, 1992.

\bibitem{Rab03}
Y.~Rabinovich.
\newblock On average distortion of embedding metrics into the line and into
  {$L_1$}.
\newblock In {\em 35th Annual ACM Symposium on Theory of Computing}. ACM, 2003.

\bibitem{Rao99}
S.~Rao.
\newblock Small distortion and volume preserving embeddings for planar and
  {E}uclidean metrics.
\newblock In {\em Proceedings of the 15th Annual Symposium on Computational
  Geometry}, pages 300--306, New York, 1999. ACM.

\bibitem{SY94}
R.~Schoen and S.-T. Yau.
\newblock {\em Lectures on differential geometry}.
\newblock Conference Proceedings and Lecture Notes in Geometry and Topology, I.
  International Press, Cambridge, MA, 1994.

\bibitem{Schramm93}
O.~Schramm.
\newblock Square tilings with prescribed combinatorics.
\newblock {\em Israel J. Math.}, 84(1-2):97--118, 1993.

\bibitem{SM00}
J.~Shi and J.~Malik.
\newblock Normalized cuts and image segmentation.
\newblock {\em IEEE Transactions on Pattern Analysis and Machine Intelligence},
  22(8):888--905, 2000.

\bibitem{Simon91}
H.~D. Simoh.
\newblock Partitioning of unstructured problems for parallel processing.
\newblock {\em Comput. Syst. Eng.}, 2(2):135--148, 1991.

\bibitem{spielman-teng}
D.~A. Spielman and S.-H. Teng.
\newblock Spectral partitioning works: Planar graphs and finite element meshes.
\newblock {\em Linear Algebra and its Applications: Special Issue in honor of
  Miroslav Fiedler}, 421(2--3):284--305, March 2007.

\bibitem{teng}
S.-H. Teng.
\newblock Combinatorial aspects of geometric graphs.
\newblock {\em Computational Geometry}, pages 277--287, 1998.

\bibitem{Thom84}
A.~Thomason.
\newblock An extremal function for contractions of graphs.
\newblock {\em Math. Proc. Cambridge Philos. Soc.}, 95(2):261--265, 1984.

\bibitem{Williams90}
R.~D. Williams.
\newblock Performance of dynamic load balancing algorithms for unstructured
  mesh calculations.
\newblock Technical Report, California Institute of Technology, 1990.

\bibitem{YY80}
P.~C. Yang and S.~T. Yau.
\newblock Eigenvalues of the {L}aplacian of compact {R}iemann surfaces and
  minimal submanifolds.
\newblock {\em Ann. Scuola Norm. Sup. Pisa Cl. Sci. (4)}, 7(1):55--63, 1980.

\end{thebibliography}

\end{document}